\newcommand{\etal}{\textit{et al.}\ }
\newcommand\set[1]{\left\{ #1 \right\}}
\newcommand\tuple[1]{\left< #1 \right>}
\newcommand\size[1]{\left| #1 \right|}
\newcommand\supportOf[1]{\Lbag #1 \Rbag}
\newcommand{\allfrom}[1]{\mathit{Allfrom}\_{#1}}
\newcommand{\Qequals}[1]{\mathit{Q}_{= #1}}
\newcommand{\Requals}[1]{\mathit{R}_{= #1}}
\newcommand{\shift}{\mathit{shift}}
\newcommand{\unshift}{\mathit{rev}}
\newcommand{\Uv}{{\cal U}}
\newcommand{\Sv}{{\cal S}}
\newcommand{\Xv}{{\cal X}}
\newcommand{\Yv}{{\cal Y}}
\newcommand{\Zv}{{\cal Z}}
\newcommand{\Mv}{{\cal M}}
\newcommand{\Lv}{{\cal L}}
\newcommand{\PN}{{\cal N}}
\newcommand{\places}{{\cal P}}
\newcommand{\relations}{{\cal R}}
\newcommand{\transitions}{\mathit{Tr}}
\newcommand{\markings}{\mathit{Initial}}
\newcommand{\postset}[1]{{#1}^{\bullet}}
\newcommand{\preset}[1]{{}^{\bullet}{#1}}
\newcommand{\trans}[1]{\xrightarrow{#1}}       
\DeclareMathOperator{\logicnext}{succ}
\newcommand{\var}{\mathit{var}}
\newcommand{\val}{\mathit{val}}
\newcommand{\thinking}{\text{think}}
\newcommand{\waiting}{\text{wait}}
\newcommand{\eating}{\text{eat}}
\newcommand{\free}{\text{free}}
\newcommand{\taken}{\text{taken}}
\newcommand{\grabFirst}{\text{GrabFirst}}
\newcommand{\grabSecond}{\text{GrabSecond}}
\newcommand{\release}{\text{Release}}
\newcommand{\idle}{\text{idle}}
\newcommand{\reading}{\text{rd}}
\newcommand{\writing}{\text{wr}}
\newcommand{\notwriting}{\text{not\_wr}}
\newcommand{\startreading}{\text{StartR}}
\newcommand{\startwriting}{\text{StartW}}
\newcommand{\stopreading}{\text{StopR}}
\newcommand{\stopwriting}{\text{StopW}}
\newcommand{\Trap}{\mathit{Trap}}
\newcommand{\TrapTest}{\mathit{TrapTest}}
\newcommand{\Siphon}{\mathit{Siphon}}
\newcommand{\SiphonTest}{\mathit{SiphTest}}
\newcommand{\Mrk}[1]{\mathit{Marked}}
\newcommand{\Safe}{\mathit{Safe}}
\newcommand{\PotReach}{\mathit{PReach}}
\newcommand{\MarkedAt}{\mathit{Marked}}
\newcommand{\EmptyAt}{\mathit{Empty}}
\newcommand{\Safety}{\mathit{SafetyCheck}}
\newcommand{\interp}[1]{\mathbf{#1}}
\newcommand{\Param}{\mathit{ParTrap}}
\newcommand{\oplusn}{\oplus_{\bm{n}}}
\newcommand{\oplusm}{\oplus_{\bm{m}}}
\newcommand{\Balanced}{\mathit{1Bal}}
\newcommand{\BalanceTest}{\mathit{1BBTest}}
\newcommand{\Onebounded}{\textit{1Bnd}}
\newcommand{\WitTrap}{\textit{WTrap}}
\newcommand{\Covered}{\textit{Cover}}
\newcommand{\trapset}{{\cal T}}
\DeclareMathOperator{\drop}{drop}
\newcommand{\origin}{\mathit{origin}}
\newcommand{\target}{\mathit{target}}
\newcommand{\success}{\mathit{succ}}
\newcommand{\fail}{\mathit{fail}}
\newcommand{\stateinitial}{\mathit{initial}}
\newcommand{\stateloop}{\mathit{loop}}
\newcommand{\statelooping}{\mathit{looping}}
\newcommand{\statebreak}{\mathit{break}}
\newcommand{\statecrit}{\mathit{crit}}
\newcommand{\statedone}{\mathit{done}}
\newcommand{\flag}{\mathit{flag}}
\newcommand{\trueval}{\mathit{true}}
\newcommand{\falseval}{\mathit{false}}
\newcommand{\idled}{\mathit{idle}}
\newcommand{\trying}{\mathit{trying}}
\newcommand{\ptr}{\mathit{ptr}}
\newcommand{\states}{Q}
\newcommand{\loopingState}{q_{\circ}}
\newcommand{\loopingRelation}{r_{\circ}}
\newcommand{\ostrich}{\texttt{ostrich}}
\newcommand{\heron}{\texttt{heron}}
\DeclareAcronym{WS1S}{
  short=WS1S,
  long=Weak Second-order Logic With One Successor
}
\DeclareAcronym{FO}{
  short=FO,
  long=First-order Logic
}
\begin{document}

\title{Computing Parameterized Invariants of Parameterized Petri Nets}

\address{welzel@in.tum.de}

\author{Javier Esparza\\
esparza{@}in.tum.de\\
\and Mikhail Raskin\\
raskin{@}in.tum.de\\
\and Christoph Welzel\\
welzel{@}in.tum.de\\
Department of Informatics\\
Technical University of Munich \\
Munich, Germany}

\runninghead{J. Esparza, M. Raskin, C. Welzel}{Parameterized Invariants of
Parameterized Nets}

\maketitle

\begin{abstract}
  A fundamental advantage of Petri net models is the possibility to
  automatically compute useful system invariants from the syntax of the net.
  Classical techniques used for this are place invariants, P-components,
  siphons or traps. Recently, Bozga et al. have presented a novel technique for
  the \emph{parameterized} verification of safety properties of systems with a
  ring or array architecture. They show that the statement \enquote{for every
  instance of the parameterized Petri net, all markings satisfying the linear
  invariants associated to all the P-components, siphons and traps of the
  instance are safe} can be encoded in \acs{WS1S} and checked using tools like
  MONA. However, while the technique certifies that this infinite set of linear
  invariants extracted from P-components, siphons or traps are strong enough to
  prove safety, it does not return an explanation of this fact understandable
  by humans. We present a CEGAR loop that constructs a \emph{finite} set of
  \emph{parameterized} P-components, siphons or traps, whose infinitely many
  instances are strong enough to prove safety. For this we design
  parameterization procedures for different architectures.
\end{abstract}

\begin{keywords}
  parameterized systems, logic, theorem proving, first-order, WS1S
\end{keywords}

\section{Introduction}
A fundamental advantage of Petri net system models is the possibility to
automatically extract useful system invariants from the syntax of the net at
low computational cost. Classical techniques used for this purpose are place
invariants, P-components, siphons or traps
\cite{murata1989petri,Reisig13,DE05}. All of them are syntactic objects that
can be computed using linear algebra or boolean logic, and from which
semantic linear invariants can be extracted. For example, from the fact that a
set of places $Q$ is an initially marked trap of the net one extracts the
linear invariant $\sum_{p \in Q} M(Q) \geq 1$, which  is satisfied for every
reachable marking $M$. This information can be used to prove safety properties:
Given a set ${\cal S}$ of safe markings, if every marking satisfying the
invariants extracted from a set of objects is safe, then all reachable markings
are safe.

Classical net invariants have been very successfully used in the verification
of single systems \cite{BensalemBNS09,EsparzaLMMN14,BFHH16}, or as complement
to state-space exploration \cite{WimmelWolf12}. Recently, an extension of this
idea to the \emph{parameterized} verification of safety properties of systems
with a ring or array architecture has been presented in
\cite{BozgaIosifSifakis19a,BozgaEISW20}. The parameterized verification problem
asks whether a system composed of $n$ processes is safe for every $n \geq 2$
\cite{2015Bloem,Esparza16,AbdullaST18}. Bozga \etal show in
\cite{BozgaIosifSifakis19a,BozgaEISW20} that the statement
\begin{quote}
  \enquote{For every instance of the parameterized system, all markings
  satisfying the linear invariants associated to all the P-components, siphons
  and traps of the corresponding Petri net are safe}
\end{quote}
\noindent can be encoded in \acf{WS1S}, or its analogous WS2S for two
successors. This means that the statement holds if{}f its formula encoding is
valid. This problem is decidable, and highly optimized tools exist for it, like
MONA \cite{monaproject,HenriksenJJKPRS95}. The method of \cite{BozgaEISW20} is
not complete (i.e., there are safe systems for which the invariants derived
from P-components, siphons and traps are not strong enough to prove safety),
but it succeeds for a remarkable set of examples. Further, incompleteness is
inherent to every algorithmic method, since safety of parameterized nets is
undecidable {even if processes only manipulate data from a bounded  domain
\cite{AptKozen86,2015Bloem}.

While the technique of \cite{BozgaIosifSifakis19a,BozgaEISW20} is able to prove
interesting properties of numerous systems, it does not yet provide an
explanation of why the property holds. Indeed, when the technique succeeds for
a given parameterized Petri net,  the user only knows that the set of all
invariants deduced from siphons, traps, and P-components together are strong
enough to  prove safety. However, the technique does not return a minimal set
of these invariants. Moreover, since the parameterized Petri net has infinitely
many instances, such a set contains infinitely many invariants. In this paper
we show how to overcome this obstacle. We present a technique that
automatically computes  a \emph{finite} set of \emph{parameterized invariants},
readable by humans.
This is achieved by lifting a CEGAR (counterexample-guided abstraction
refinement) loop, introduced in \cite{EsparzaM00} and further developed in
\cite{EsparzaLMMN14,EsparzaM15,BlondinEH0M20}, to the parameterized case. Each
iteration of the loop of \cite{EsparzaM00,EsparzaLMMN14} first computes a
counterexample, i.e., a marking that violates the desired safety property but
satisfies all invariants computed so far, and then computes a P-component,
siphon, or trap showing that the marking is not reachable. If no counterexample
exists the property is established, and if no P-component, siphon or trap can
be found the method fails. The technique is implemented on top of an
SMT-solver, which receives as input a linear constraint describing the set of
safe markings, and iteratively computes the set of linear invariants derived
from P-components, siphons, and traps.

If we naively lift the CEGAR loop to the parameterized case, the loop never
terminates. Indeed, since the loop computes one new invariant per iteration,
and  infinitely many invariants are needed to prove correctness of all
instances, termination is not possible. So we need a procedure to extract from
one single invariant for one instance a \emph{parameterized} invariant, i.e.,
an infinite set of invariants for all instances, finitely represented as a
WS1S-formula. We present a \emph{semi-automatic} and an \emph{automatic
approach}. In the semi-automatic approach the user guesses the parameterized
invariant, and automatically checks it, using the WS1S-checker. The automatic
approach does not need user interaction, but only works for systems with
symmetric structure. We provide automatic procedures for systems with a ring
topology, and for barrier crowds, a class of systems closely related to
broadcast protocols. We also show how to extend our results to inspection
programs, a class of distributed programs in which an agent can loop through
all other agents, inspecting their local states. In this extension infinite
sets of invariants can no longer be represented by a WS1S-formula and we must
move to a more general logical framework. While the satisfiability problem is
undecidable for this extended framework, we can still prove correctness of some
systems with the help of an automatic theorem prover for first-order logic. Finally, we present
experimental results on a number of systems.

\paragraph{Related work.}
The parameterized verification problem has been extensively studied for systems
whose associated transition systems are well-structured
\cite{GermanSistla92,AbdullaCJT96,FinkelS01} (see e.g. \cite{AbdullaST18} for a
survey). In this case the verification problem reduces to a coverability
problem, for which different algorithms exist
\cite{BlondinFHH16,GeffroyLS18,ReynierS19,FinkelHK20}; the marking equation
(which is roughly equivalent to place invariants) have also been applied
\cite{AthanasiouLW16}. However, the transition systems of parametric rings and
arrays are typically not well-structured.

Parameterized verification of ring and array systems has also been studied in a
number of papers. Three popular techniques are regular model checking (see e.g.
\cite{KestenMalerMarcusPnueliShahar01,AbdullaJNS04,AbdullaHendaDelzannoRezine07}),
abstraction \cite{BaukusBLS00,BaukusLS02}, and automata learning
\cite{ChenHongLinRummer17}. All of them apply symbolic state-space exploration
to try to compute a finite automaton recognizing the set of reachable markings
of all instances, or an abstraction thereof. Our technique avoids any
state-space exploration. Also, symbolic state-space exploration techniques are
not geared towards providing explanations. Indeed, while the set of reachable
markings of all instances is the strongest invariant of the system, it is also
one single monolithic invariant, typically difficult to interpret by human
users. Our CEGAR loop aims at finding a \emph{collection} of invariants, each
of them simple and interpretable.

Many works in the parameterized setting follow the cut-off approach, where one manually
proves a \emph{cut-off} bound $c\geq2$ such that correctness for at most $c$
processes implies correctness for any number of processes (see e.g.
\cite{ClarkeGrumbergBrowne86,EmersonNamjoshi95,EmersonK00,AusserlechnerJK16,JacobsS18},
and \cite{2015Bloem} for a survey).  It then suffices to prove the property for
systems of up to $c$ processes, which can be done using finite-state model
checking techniques. Compared to this technique, ours is fully automatic.

\section{Preliminaries}
\paragraph{\acs{WS1S}.} Formulas of \ac{WS1S} over first-order variables
$\bm{x}, \bm{y}, \ldots$ and second-order variables $\bm{X}, \bm{Y}, \ldots$
have the following syntax:
\begin{equation*}
  \begin{array}{rcllr}
    t & := & \bm{x} \mid 0 \mid \logicnext(t) & \quad & \text{ (terms)} \\[0.1cm]
    \phi & := & t_1 \leq t_2 \mid \bm{x} \in \bm{X} \mid \phi_1 \wedge \phi_2
    \mid \neg\phi_1 \mid \exists \bm{x} \colon \phi \mid \exists \bm{X}
    \colon \phi &  & \text{(formulas)}
  \end{array}
\end{equation*}
An interpretation assigns elements of $\mathbb{N}_{0} = \set{0, 1, 2, 3,
\ldots}$ to first order variables and \emph{finite} subsets of $\mathbb{N}_{0}$
to second-order variables. Given an interpretation, the semantics that assigns
numbers to terms and truth values to formulas is defined in the usual way.

We extend the syntax with constants $0, 1, 2, 3, \ldots$, and terms of the form
$\bm{x} + c$ with $c \in \mathbb{N}_{0}$. Further, a term $\bm{x}\oplusn 1$ in
a formula $\varphi$ stands for
\begin{equation*}
  (\bm{x} + 1 < \bm{n} \land \varphi[\bm{x} \oplusn 1 \leftarrow \bm{x} + 1])
  \lor (\bm{n} = \bm{x} + 1 \land \varphi[\bm{x} \oplusn 1 \leftarrow 0])
\end{equation*}
where $\varphi[t \leftarrow t']$ denotes the result of substituting $t'$ for
$t$ in $\varphi$. The terms $\bm{x}\oplusn c$ for every $1 \leq c$ are defined
similarly. We let $\varphi(\bm{x}_{1}, \ldots, \bm{x}_{\ell}, \bm{X}_{1},
\ldots, \bm{X}_{k})$ denote that $\varphi$ uses at most $\bm{x}_{1}, \ldots,
\bm{x}_{\ell}$ and $\bm{X}_{1}, \ldots, \bm{X}_{k}$ as free first-order resp.
second-order variables. Finally, we also make liberal use of the following
macros:
\begin{center}
  \begin{tabular}{lcl}
    $\bm{X} = \emptyset$ & \multirow{8}{*}{ \ \ stands for \ \ }
      & $\forall \bm{x} \colon \neg (\bm{x}\in\bm{X})$ \\
    $\bm{X} = \set{\bm{x}}$ &
      &  $\bm{x} \in \bm{X} \land \forall \bm{y} \colon \bm{y} \in \bm{X} \rightarrow \bm{y} = \bm{x}$ \\
    $\bm{X} = [\bm{n}]$ &
      & $\forall \bm{x} \colon \bm{x} \in \bm{X} \leftrightarrow \bm{x} < \bm{n}$ \\
    $\bm{X} \cap \bm{Y} = \emptyset$ &
      & $\forall \bm{x} \colon \neg (\bm{x} \in \bm{X} \wedge \bm{x} \in \bm{Y})$ \\
    $\size{\bm{X}} = 1$ &
      & $\exists \bm{x} \colon \bm{X} = \set{\bm{x}}$ \\
    $\size{\bm{X}} \leq 1$ &
      & $\bm{X} = \emptyset \vee \size{\bm{X}} = 1$ \\
    $\bm{X} = \overline{\bm{Y}}$ &
      & $\forall \bm{x} \colon \bm{x} \in \bm{X} \leftrightarrow (\bm{x}<\bm{n}\land\neg (\bm{x} \in \bm{Y}))$ \\
    $\bm{Y} = \bm{X} \oplusn 1$ &
      & $\forall \bm{x} \colon \bm{x} \oplusn 1 \in \bm{Y} \leftrightarrow \bm{x} \in \bm{X}$
  \end{tabular}
\end{center}

\paragraph{Petri Nets.}
We use a presentation of Petri nets equivalent to but slightly different from
the standard one. A \emph{net} is a pair $\tuple{P, T}$ where $P$ is a
nonempty, finite set of \emph{places} and $T \subseteq 2^P \times 2^P$ is a set
of \emph{transitions}. Given a transition $t = \tuple{P_{1}, P_{2}}$, we call $P_{1}$
the \emph{preset} and \emph{postset} of $t$, respectively. We also denote
$P_{1}$ by $\preset{t}$ and $P_{2}$ by $\postset{t}$. Given a place $p$, we
denote by $\preset{p}$ and $\postset{p}$ the sets of transitions $\tuple{P_1,
P_2}$ such that $p \in P_2$ and $p \in P_1$, respectively. Given a set $X$ of
places or transitions, we let $\preset{X} \coloneqq \bigcup_{x \in X}
\preset{x}$ and $\postset{X} \coloneqq \bigcup_{x \in X} \postset{x}$.

A \emph{marking} of $N=\tuple{P, T}$ is a function $M \colon P \rightarrow
\mathbb{N}$. A Petri net is a pair $\tuple{N, M}$, where $N$ is a net and $M$
is the \emph{initial marking} of $N$. A transition $t=\tuple{P_{1}, P_{2}}$ is
enabled at a marking $M$ if $M(p) \geq 1$ for every $p \in P_{1}$. If $t$ is
enabled at $M$ then it can \emph{fire}, leading to the marking $M'$ given by
$M'(p) = M(p) + 1$ for every $p \in P_{2} \setminus P_{1}$, $M'(p) = M(p) - 1$
for every $p \in P_{1} \setminus P_{2}$, and $M'(p) = M(p)$ otherwise. We write
$M \trans{t} M'$, and $M \trans{\sigma} M'$ for a finite sequence $\sigma =
t_{1} t_{2} \ldots t_n$ if there are markings $M_{1}, \ldots, M_{n}$ such that
$M \trans{t_{1}} M_{1} \trans{t_{2}} \cdots M_{n-1} \trans{t_n} M'$. $M'$ is
reachable from $M$ if $M \trans{\sigma} M'$ for some sequence $\sigma$.

A marking $M$ is \emph{1-bounded} if $M(p) \leq 1$ for every place $p$. A Petri
net is \emph{1-bounded} if every marking reachable from the initial marking is
1-bounded. A 1-bounded marking $M$ of a Petri net is also defined by the set of
marked places; i.e., $\supportOf{M} = \set{p\in P\colon M(p) = 1}$.

\section{Parameterized Petri Nets}
\label{sec:ppnets}

Intuitively, a parameterized net is a collection $\set{ N_n}_{n \geq 1}$ of
nets. The places of $N_n$ are the result of replicating a set $\places$ of
place names $n$ times. For example, if $\places = \{p, q\}$, then the set of
places of $N_n$ is $\{p(0), \ldots, p(n-1), q(0), \ldots, q(n-1)\}$. Crucially,
the transitions of all the nets in the collection are described by a single
logical formula of \ac{WS1S}. Intuitively, the models of the formula are
triples $\tuple{n, P_1, P_2}$, where $P_1$ and $P_2$ are sets of places of
$N_n$, indicating that $N_n$ has a transition with $P_1$ and $P_2$ as input and
output places, respectively.

\begin{definition}[Parameterized nets]
  A parameterized net is a pair $\PN = \tuple{\places, \transitions}$, where
  $\places$ is a finite set of place names and $\transitions(\bm{n}, \Xv, \Yv)$
  is a \ac{WS1S}-formula over one first-order variable $\bm{n}$ which
  represents the size of the instance, and two tuples $\Xv$ and $\Yv$
  of second-order variables containing one variable for each place name of $\places$; i.e., for a fixed
  enumeration $p_{1}, \ldots, p_{k}$ of the elements of $\places$ we have $\Xv =
  \tuple{\Xv_{p_{i}}}_{i = 1}^{k}$ and $\Yv = \tuple{\Yv_{p_{i}}}_{i = 1}^{k}$.
  We call such tuples of variables \emph{placeset variables}.
\end{definition}
Let $[n] = \{0, \ldots, n-1\}$. A parameterized net $\PN$ induces a net $\PN(n)
= \tuple{P_n, T_n}$ for every $n \geq 1$, where $P_n = \places \times [n]$
(i.e., $P_n$ consists of $n$ copies of $\places$), and $T_n$ contains a
transition $\tuple{P_{1}, P_{2}}$ for every pair $P_{1}, P_{2} \subseteq P_{n}$
of sets of places such that \enquote{$\transitions(n, P_{1}, P_{2})$ holds}.
More formally, this means that $\mu\models \transitions$ for the interpretation
$\mu$ given by $\mu(\bm{n}) = n$, $\mu(\Xv_{p}) = \set{i\in[n]: \tuple{p, i}\in
P_{1}}$, and $\mu(\Yv_{p}) = \set{i\in[n]: \tuple{p, i}\in P_{2}}$ for all $p
\in \places$. Therefore, the intended meaning of $\transitions(n, \Xv, \Yv)$ is
\enquote{the pair $\tuple{\Xv,\Yv}$ of placesets is (the preset and postset of)
a transition of the net $\PN(n)$}. We say that $\PN(n)$ is an \emph{instance} of
$\PN$.

In the following we use $\tuple{p, i}$ and $p(i)$ as equivalent notations
for the elements of $P_n = \places \times [n]$.

\begin{example}
\label{ex:diningphil}
  We consider a version of the dining philosophers. Philosophers and forks are
  numbered $0$, $1$, \ldots, $n-1$. For every $i > 0$ the $i$-th philosopher
  first grabs the $i$-th fork, and then the $(i \oplus_{n} 1)$-th fork, where
  $\oplus_{n}$ denotes addition modulo $n$. Philosopher $0$ proceeds the other
  way round: she first grabs fork $1$, and then fork $0$. After eating, a
  philosopher returns both forks in one single atomic step. We formalize this in the
  following parameterized net $\PN=\tuple{\places, \transitions}$:
  \begin{itemize}
    \item $\places = \set{\thinking, \waiting, \eating, \free, \taken}$.
      Intuitively, $\set{\thinking(i), \waiting(i), \eating(i)}$ are the states
      of the $i$-th philosopher, and $\set{\free(i), \taken(i)}$ the states of
      the $i$-th fork.
    \item $\transitions(\bm{n}, \Xv, \Yv) = \grabFirst \vee \grabSecond \vee
      \release$. The formulas for $\grabFirst$, $\grabSecond$, and $\release$
      are shown in Equation~\ref{table:phil}.
  \end{itemize}

\begin{table}[t]
\caption{Transitions of the dining philosophers.}
\label{table:phil}
      $$\begin{array}{lcl}
        \grabFirst & \coloneqq &
        \begin{aligned}
          &
         \left(
         \begin{aligned}
           \exists \bm {x} ~.~ 1 \leq \bm{x} < \bm{n}
              &\land (\Xv_\thinking = \Xv_\free = \Yv_{\waiting} = \Yv_{\taken}
                      = \set{\bm{x}} )\\
              &\land (\Xv_\waiting = \Xv_\eating = \Xv_\taken = \emptyset)\\
              &\land (\Yv_\thinking = \Yv_\eating = \Yv_\free = \emptyset )
            \end{aligned}
            \right)
            \\
          & \qquad \lor & \\
          & \left(
            \begin{aligned}
              & (\Xv_{\thinking} = \Yv_{\waiting} = \set{0})
              \land (\Xv_{\free} = \Yv_{\taken} = \set{1})\\
              \land \; & (\Xv_\waiting = \Xv_\eating = \Xv_\taken = \emptyset) \\
              \land \; & (\Yv_\thinking = \Yv_\eating = \Yv_\free = \emptyset)
            \end{aligned}
         \right) 
        \end{aligned} \\[2.2cm]
        \grabSecond & \coloneqq &
        \begin{aligned}
          &  \left(
            \begin{aligned}
              \exists \bm{x} ~.~ 1 \leq \bm{x} < \bm{n} \;\;
              \land \;\; &(\Xv_\waiting = \Yv_\eating = \set{\bm{x}})\\
              \land \;\; &(\Xv_\free = \Yv_\taken = \set{\bm{x} \oplusn 1})\\
              \land \;\; &(\Xv_\thinking = \Xv_\eating = \Xv_\taken = \emptyset) \\
              \land \;\; &(\Yv_\thinking =  \Yv_\waiting = \Yv_\free = \emptyset)\\
            \end{aligned}
            \right) \\
          & \qquad \lor & \\
          & \left(
            \begin{aligned}
              & (\Xv_{\thinking} = \Xv_{\free} = \Yv_{\taken} = \Yv_{\waiting}
                 = \set{0})\\
              \land \;\; &(\Xv_\waiting = \Xv_\eating = \Xv_\taken = \emptyset) \\
              \land \;\; &(\Yv_\thinking = \Yv_\eating = \Yv_\free = \emptyset)
            \end{aligned}
         \right) 
        \end{aligned} \\[2.3cm]
        \release & \coloneqq  &
        \exists \bm{x} ~.~ \bm{x} < \bm{n} \;\; 
        \begin{aligned}[t]
          \land \;\; &(\Xv_\eating = \Yv_\thinking = \set{x} \wedge \Xv_\taken = \Yv_\free
            = \set{x, x\oplus 1})\\
          \land \;\; &(\Xv_\thinking = \Xv_\waiting = \Xv_\free = \emptyset) \\
          \land \;\; &(\Yv_\waiting =  \Yv_\eating =  \Yv_\taken = \emptyset)\\
        \end{aligned}
      \end{array}$$
\end{table}     

Intuitively, the preset of $\grabFirst$ is a philosopher in state $\thinking$
  and her left (resp. right fork for philosopher $0$) in state free; the postset
  puts the philosopher in state $\waiting$ and the fork in state $\taken$.
  The instance $\PN(3)$ is  shown in Figure~\ref{fig:dining-philosophers}.
  \label{ex:dining-philosophers}
\end{example}

\begin{figure}[ht]
  \caption{$\PN(3)$ for Example \ref{ex:dining-philosophers}. Places which are
  colored green are initially marked w.r.t. $\markings(\Xv)$ from Example
  \ref{ex:dining-philosophers-initial-marking}. Note the repeating structure
  for philosophers $1$ and $2$ while philosopher $0$ grabs her forks in the
  opposite order. We abbreviate $\thinking(i)$ to $\text{th}(i)$, and similarly
  with the other states.}
  \label{fig:dining-philosophers}
\renewcommand{\thinking}{\text{th}}
\renewcommand{\waiting}{\text{wa}}
\renewcommand{\eating}{\text{ea}}
\renewcommand{\free}{\text{fr}}
\renewcommand{\taken}{\text{ta}}
  \begin{center}
    \resizebox{0.65\textwidth}{!}{%
      \begin{tikzpicture}[every node/.style={scale=0.7}]
        \node [place, minimum width=0.8cm, fill=green] (free1) at (000:2.2cm)
        {$\free(0)$};
        \node [place, minimum width=0.8cm] (taken1) at (340:2.2cm)
        {$\taken(0)$};

        \node [place, minimum width=0.8cm, fill=green] (thinking1) at (310:2.2cm)
        {$\thinking(0)$};
        \node [place, minimum width=0.8cm] (waiting1) at (290:2.2cm)
        {$\waiting(0)$};
        \node [place, minimum width=0.8cm] (eating1) at (270:2.2cm)
        {$\eating(0)$};

        \node [place, minimum width=0.8cm, fill=green] (free2) at (240:2.2cm)
        {$\free(1)$};
        \node [place, minimum width=0.8cm] (taken2) at (220:2.2cm)
        {$\taken(1)$};

        \node [place, minimum width=0.8cm, fill=green] (thinking2) at (190:2.2cm)
        {$\thinking(1)$};
        \node [place, minimum width=0.8cm] (waiting2) at (170:2.2cm)
        {$\waiting(1)$};
        \node [place, minimum width=0.8cm] (eating2) at (150:2.2cm)
        {$\eating(1)$};

        \node [place, minimum width=0.8cm, fill=green] (free3) at (120:2.2cm)
        {$\free(2)$};
        \node [place, minimum width=0.8cm] (taken3) at (100:2.2cm)
        {$\taken(2)$};

        \node [place, minimum width=0.8cm, fill=green] (thinking3) at (070:2.2cm)
        {$\thinking(2)$};
        \node [place, minimum width=0.8cm] (waiting3) at (050:2.2cm)
        {$\waiting(2)$};
        \node [place, minimum width=0.8cm] (eating3) at (030:2.2cm)
        {$\eating(2)$};

        \node [transition] (grab1l) at (255:1cm) {$g^{1}_{0}$}
          edge [pre] (thinking1)
          edge [pre] (free2)
          edge [post] (waiting1)
          edge [post] (taken2);

        \node [transition] (grab1r) at (325:1cm) {$g^{2}_{0}$}
          edge [pre] (waiting1)
          edge [pre] (free1)
          edge [post] (eating1)
          edge [post] (taken1);

        \node [transition] (grab2r) at (205:1cm) {$g^{1}_{1}$}
          edge [pre] (thinking2)
          edge [pre] (free2)
          edge [post] (waiting2)
          edge [post] (taken2);

        \node [transition] (grab2l) at (135:1cm) {$g^{2}_{1}$}
          edge [pre] (waiting2)
          edge [pre] (free3)
          edge [post] (eating2)
          edge [post] (taken3);

        \node [transition] (grab3l) at (085:1cm) {$g^{1}_{2}$}
          edge [pre] (thinking3)
          edge [pre] (free3)
          edge [post] (waiting3)
          edge [post] (taken3);

        \node [transition] (grab3r) at (015:1cm) {$g^{1}_{2}$}
          edge [pre] (waiting3)
          edge [pre] (free1)
          edge [post] (eating3)
          edge [post] (taken1);

        \node [transition] (release1) at (290:3.2cm) {$r_{0}$}
          edge [pre] (eating1)
          edge [pre, bend right] (taken1)
          edge [pre, bend left=80] (taken2)
          edge [post] (thinking1)
          edge [post, bend right=80] (free1)
          edge [post, bend left] (free2);

        \node [transition] (release2) at (170:3.2cm) {$r_{1}$}
          edge [pre] (eating2)
          edge [pre, bend right] (taken2)
          edge [pre, bend left=80] (taken3)
          edge [post] (thinking2)
          edge [post, bend right=80] (free2)
          edge [post, bend left] (free3);

        \node [transition] (release3) at (050:3.2cm) {$r_{2}$}
          edge [pre] (eating3)
          edge [pre, bend right] (taken3)
          edge [pre, bend left=80] (taken1)
          edge [post] (thinking3)
          edge [post, bend right=80] (free3)
          edge [post, bend left] (free1);
      \end{tikzpicture}
   }
  \end{center}
\renewcommand{\thinking}{\text{think}}
\renewcommand{\waiting}{\text{wait}}
\renewcommand{\eating}{\text{eat}}
\renewcommand{\free}{\text{free}}
\renewcommand{\taken}{\text{taken}}
\end{figure}

\noindent Parameterized Petri nets are parameterized nets with a
\ac{WS1S}-formula defining its initial markings:

\begin{definition}[Parameterized Petri nets]
  A parameterized Petri net is a pair $\tuple{\PN, \markings}$, where $\PN$ is
  a parameterized net, and $\markings(\bm{n}, \Mv)$ is a \ac{WS1S}-formula over
  a first-order variable $\bm{n}$ and a placeset variable $\Mv$.
\end{definition}

A parameterized Petri net defines an infinite family of Petri nets. Loosely
speaking, a Petri net $\tuple{N,M}$ belongs to the family if $N$ is an instance
of $\PN$, i.e., $N = \PN(n)$ for some $n \geq 1$, and $M$ is a 1-bounded
marking of $N$ satisfying $\markings(n, \Mv)$. For example, if $\places =
\set{p_{1}, p_{2}}$, $n=3$ and $\markings(\set{0,1}, \set{0,2})$ holds, then
the family contains a Petri net $\tuple{\PN(3), M_{3}}$ such that $M_{3}$ is a
1-bounded marking with $\supportOf{M_{3}} = \set{p_{1}(0), p_{1}(1), p_{2}(0),
p_{2}(2)}$.

\begin{example}
  \label{ex:dining-philosophers-initial-marking}
  The family of initial markings in which all philosophers think and all forks
  are free is modeled by:
  \begin{equation*}
    \markings(\bm{n}, \Mv) \coloneqq (\Mv_\thinking = \Mv_\free = [\bm{n}])
      \land (\Mv_\waiting =  \Mv_\eating =  \Mv_\taken = \emptyset).
  \end{equation*}
\end{example}
\begin{example}
  Let us now model a simple version of the readers/writers system. A process
  can be idle, reading, or writing. An idle process can start to read if no
  other process is writing, and it can start to write if every other process is
  idle. We obtain the parameterized net $\PN=\tuple{\places, \transitions}$,
  where
  \begin{itemize}
    \item $\places = \set{\idle, \reading, \writing, \notwriting}$.
    \item $\transitions(\bm{n}, \Xv, \Yv) = \startreading \lor \stopreading
      \lor \startwriting \lor \stopwriting$. We give the formulas
      $\startreading$ and $\startwriting$, the other two being simpler.
  \end{itemize}
  \begin{equation*}
    \startreading \coloneqq
    \exists \bm{x} ~.~
    \left(
    \begin{aligned}
      & \phantom{\wedge} (\Xv_\idle = \set{\bm{x}} \wedge \Xv_\notwriting =
        \overline{\Xv_\idle} \wedge (\Xv_\reading = \Xv_\writing = \emptyset)\\
      & \wedge \Yv_\reading = \set{\bm{x}} \wedge \Yv_\notwriting =
        \overline{\Xv_\idle} \wedge (\Yv_\idle = \Yv_\writing = \emptyset)
    \end{aligned}
    \right)
  \end{equation*}
  \begin{equation*}
    \startwriting \coloneqq
    \exists \bm{x} ~.~
    \left(
    \begin{aligned}
      &\phantom{\wedge} \Xv_\idle = [\bm{n}] \wedge \Xv_\notwriting =
      \set{\bm{x}} \wedge (\Xv_\reading = \Xv_\writing = \emptyset)
      \\
      & \wedge \Yv_\idle = [\bm{n}] \setminus \set{\bm{x}} \wedge \Yv_\writing
      = \set{\bm{x}} \wedge (\Yv_\reading  = \Yv_\notwriting = \emptyset)
    \end{aligned}
    \right)
  \end{equation*}
  So the preset of a $\startreading$ transition is $\set{\idle(i),
  \notwriting(0), \ldots, \notwriting(n-1)}$ for some $i$, and the postset is
  $\set{\reading(i), \notwriting(0),  \ldots, \notwriting(n-1)}$.
 The initial markings in which every process is initially idle are modeled by:
  \begin{equation*}
    \markings(\bm{n}, \Xv) \coloneqq \Xv_\idle = [\bm{n}] \wedge
    \Xv_\notwriting = [\bm{n}] \wedge (\Xv_\reading = \Xv_\writing = \emptyset)
  \end{equation*}
  Observe that in the dining philosophers transitions have presets and postsets
  of size $3$, independently of the number of philosophers. On the contrary, in
  the readers and writers problems the transitions of $\PN(n)$ have presets and
  postsets of size $n$. Intuitively, our formalism allows to model transitions
  involving all processes or, for example, all even processes. Observe also
  that in both cases the formula $\markings$ has exactly one model for every
  $n \geq 1$, but this is not required.
\end{example}

\paragraph{Proving deadlock-freedom for the dining philosophers.}
Let us now give a taste of what our paper achieves for
Example~\ref{ex:dining-philosophers}. It is well known that this version of the
dining philosophers is deadlock-free. However, finding a proof based on
parameterized invariants of the systems is not so easy. Using the
semi-automatic version of the approach we present, we can find the five
invariants shown below, and automatically prove that they imply deadlock-freedom.
The fully automatic analysis of this example gives ten
properties of the system which collectively induce deadlock-freedom.
 
The first two invariants express that at every reachable marking $M$,
and for every $0 \leq i \leq n-1$ , the $i$-th philosopher is either thinking,
waiting, or eating,  and the $i$-th fork is either free or taken:
\begin{align}
    \label{eq:philosopher-1-bounded}
    M(\thinking(i)) + M(\waiting(i)) + M(\eating(i)) & = 1 \\
    \label{eq:fork-1-bounded}
    M(\free(i)) + M(\taken(i)) & = 1.
 \end{align}
The last three invariants provide the key insights; the last one holds for
every $1 \leq  i \leq n-2$:
\begin{align}
  \label{eq:0-1-fork}
  M(\waiting(0))+ M(\eating(0))+M(\free(1))+M(\waiting(1))+ M(\eating(1)) &= 1 \\
  \label{eq:0-n-1-fork}
  M(\eating(0))+M(\free(0))+M(\eating(n-1))&= 1\\
  \label{eq:p-p+1-fork}
  M(\eating(i))+M(\eating(i+1))+M(\free(i+1))+M(\waiting(i+1)) &= 1
\end{align}
Let us sketch why (1)-(5) imply deadlock freedom. Let $P_i$ denote the $i$-th
philosopher and $F_i$ the $i$-th fork. If $P_0$ is eating, then $F_0$ and $F_1$
are taken by  (\ref{eq:philosopher-1-bounded})-(\ref{eq:0-n-1-fork}), and there
is no deadlock because $P_0$ can return them. The same holds if $P_1$ is eating
by (\ref{eq:philosopher-1-bounded})-(\ref{eq:0-1-fork}) and
(\ref{eq:p-p+1-fork}), or if any of $P_2, \ldots, P_{n-1}$ is eating by
(\ref{eq:philosopher-1-bounded})-(\ref{eq:fork-1-bounded}) and
(\ref{eq:p-p+1-fork}). If no philosopher eats, then by
(\ref{eq:philosopher-1-bounded})-(\ref{eq:0-1-fork}) and (\ref{eq:p-p+1-fork})
either $P_{i+1}$ is thinking and $F_{i+1}$ is free for some $i \in \{1, \ldots,
n-2\}$, or  $P_{i+1}$ is waiting for every $i \in \{1, \ldots, n-2\}$. In the
first case  $P_{i+1}$ can grab $F_{i+1}$. In the second case $P_{n-1}$ is
waiting, and since $F_0$ is free by
(\ref{eq:philosopher-1-bounded})-(\ref{eq:fork-1-bounded}) and
(\ref{eq:0-n-1-fork}), it can grab $F_0$.



\section{Checking 1-boundedness}
Our techniques work for parameterized Petri nets whose instances are 1-bounded.
We present a technique that automatically checks 1-boundedness of all our
examples. We say that a set of places $Q$ of a Petri net $\tuple{N, M}$, where
$N=\tuple{P,T}$, is
\begin{itemize}
\item \emph{1-balanced} if for every transition $\tuple{P_{1}, P_{2}} \in T$
  either $\size{P_{1} \cap Q} = 1 = \size{P_2 \cap Q}$, or $\size{P_{1} \cap Q}
    = 0 = \size{P_2 \cap Q}$, or $\size{P_{1} \cap Q} \geq{} 2$.
  \item \emph{1-bounded} at $M$  if $M(Q) \leq 1$.
\end{itemize}

The following proposition is an immediate consequence of the definition:
\begin{proposition}
  If $Q$ is a 1-balanced and 1-bounded set of places of $\tuple{N, M}$, then
  $M'(Q)=M(Q)$ holds for every reachable marking $M'$.
\end{proposition}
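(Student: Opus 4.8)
The plan is to prove the statement by induction on the length of a firing sequence $M \trans{\sigma} M'$, carrying along the auxiliary invariant $M'(Q) \leq 1$. The base case is the empty sequence, where $M' = M$: the equality $M'(Q) = M(Q)$ holds trivially, and $M(Q) \leq 1$ because $Q$ is 1-bounded at $M$.

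For the inductive step I would consider a sequence $M \trans{\sigma} M'' \trans{t} M'$ with $t = \tuple{P_1, P_2}$, assuming as induction hypothesis that $M''(Q) = M(Q) \leq 1$. The first key step is to compute exactly how firing $t$ changes the token count on $Q$. By the firing rule, the places of $Q$ that gain a token are those in $(P_2 \setminus P_1) \cap Q$ and those that lose one are those in $(P_1 \setminus P_2) \cap Q$, while every other place of $Q$ is unchanged. Writing $a = \size{P_1 \cap Q}$ and $b = \size{P_2 \cap Q}$ and cancelling the common part $P_1 \cap P_2 \cap Q$, this yields
\[
  M'(Q) - M''(Q) = \size{(P_2 \setminus P_1) \cap Q} - \size{(P_1 \setminus P_2) \cap Q} = b - a.
\]

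The second step is to invoke 1-balancedness of $Q$ applied to $t$, which forces one of three cases: (i) $a = b = 1$; (ii) $a = b = 0$; or (iii) $a \geq 2$. In cases (i) and (ii) the increment $b - a$ vanishes, so $M'(Q) = M''(Q) = M(Q)$ and in particular $M'(Q) \leq 1$, re-establishing both claims. The crux is to rule out case (iii). Here I would use enabledness: since $M'' \trans{t} M'$, the transition $t$ is enabled at $M''$, so $M''(p) \geq 1$ for every $p \in P_1$, and in particular for every $p \in P_1 \cap Q$. As $\size{P_1 \cap Q} = a \geq 2$, this gives $M''(Q) \geq \sum_{p \in P_1 \cap Q} M''(p) \geq 2$, contradicting the induction hypothesis $M''(Q) \leq 1$. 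Hence case (iii) never arises for a fireable transition, and the induction closes.

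The only genuinely subtle point — and the one I expect to be the main obstacle to get right — is that the equality $M'(Q) = M(Q)$ cannot be proved by a direct induction on its own: the third disjunct of 1-balancedness is eliminated only by knowing $M''(Q) \leq 1$ at the intermediate marking. One must therefore \emph{strengthen} the induction hypothesis to carry the 1-boundedness invariant $M'(Q) \leq 1$ alongside the equality; the 1-bounded hypothesis at $M$ seeds this invariant, and each surviving case re-establishes it.
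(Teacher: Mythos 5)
Your proof is correct and is exactly the argument the paper leaves implicit: the paper states this proposition without proof, calling it an immediate consequence of the definitions, and your induction over the firing sequence (with the case $\size{P_1 \cap Q} \geq 2$ ruled out by enabledness) is the intended elaboration. One small remark: the ``strengthening'' you emphasize is not genuinely needed, since the plain induction hypothesis $M''(Q) = M(Q)$ together with the standing assumption that $Q$ is 1-bounded at $M$ (i.e.\ $M(Q) \leq 1$) already yields $M''(Q) \leq 1$ at every intermediate marking, so the auxiliary invariant is a consequence of the equality rather than an independent addition.
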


We abbreviate \enquote{1-bounded and 1-balanced set} to 1BB-set, and say that
$N$ is \emph{covered} by 1BB-sets if every place belongs to some 1BB-set at
initial marking $M$. By the proposition above, if $N$ is covered by 1BB-sets at
$M$, then $M'(p)\leq{1}$ holds for every reachable marking $M'$ and every place
$p$, and so $N$ is 1-bounded.

%
%

Given a parameterized Petri net $(\PN, \markings)$, we can check if all
instances are covered by 1BB-sets with the following formula:
\begin{center}
  \begin{tabular}{rcll}
    $\Balanced(\bm{n}, \Xv)$
    & $\coloneqq$
    & $\forall \Yv, \Zv  \colon  \transitions(\bm{n}, \Yv,\Zv) \rightarrow$
    & $(\size{\Xv \cap \Yv}= 0 =\size{\Xv \cap \Zv} ) \vee $ \\
    & & &   $(\size{\Xv \cap \Yv}= 1 =\size{\Xv \cap \Zv}) \vee$ \\
    & & &   $(\size{\Xv \cap \Yv} > 1)$ \\
    $\Onebounded(\bm{n}, \Xv,\Mv)$
    & $\coloneqq$
    & $\size{\Xv \cap \Mv} \leq 1$ \\
    $\Covered$
    & $\coloneqq$
    & $\forall \bm{n},\forall \Mv \colon \markings(\bm{n}, \Mv) \rightarrow$
    & $(\bigwedge_{p \in \places}\forall \bm{x} \colon \exists \Xv \colon
    \bm{x} \in \Xv_{p} \wedge$\\
    & & & $\Balanced(\bm{n}, \Xv) \wedge$\\
    & & & $\Onebounded(\bm{n}, \Xv, \Mv))$
  \end{tabular}
\end{center}

Observe that if $Q$ is a 1BB-set then at every reachable marking exactly one of
the places of $Q$ is marked, with exactly one token. The sets of places
corresponding to a philosopher, a fork, a reader, or a writer are 1BB-sets.
Unsurprisingly, all our parameterized Petri net models are covered by 1BB-sets.
Checking the formula $\Covered$ above gives us an automatic proof that all the
Petri nets we consider are 1-bounded.

\section{Checking safety properties}
\label{sec:safety}
Let $\tuple{\PN, \markings}$ be a parameterized Petri net, and let
$\Safe(\bm{n}, \Mv)$ be a WS1S-formula describing a set of \enquote{safe}
markings of the instances of $\PN$ (for example, \enquote{safe} could mean
deadlock-free). It is easy to prove (using simulations of Turing machines by
Petri nets like those of \cite{Esparza96}) that the existence of some unsafe
reachable marking in some instance of a given parameterized Petri net
$\tuple{\PN, \markings}$ is undecidable. In
\cite{BozgaEISW20,BozgaIosifSifakis19a} there is a semi-algorithm for the
problem that derives from $\tuple{\PN, \markings}$ a formula $\PotReach(\bm{n},
\Mv)$ describing a superset of the set of reachable markings of all instances,
and checks that the formula
\begin{equation*}
  \Safety \coloneqq \forall \bm{n} \forall \Mv \colon \PotReach(\bm{n}, \Mv)
  \rightarrow \Safe(\bm{n}, \Mv)
\end{equation*}
holds. We recall the main construction of
\cite{BozgaEISW20,BozgaIosifSifakis19a}, adapted and expanded.

\paragraph{1BB-sets again.}
Recall that if a marking $M'$ of some instance $\tuple{N, M}$ of a net
$\tuple{\PN, \markings}$ is reachable from $M$, then $M'(Q)\leq 1$ holds for
every 1BB-set of places $Q$ of $\tuple{N, M}$. So this latter property can be
interpreted as a test for potential reachability: Only markings that pass the
test can be reachable. We introduce a formula $\BalanceTest(\bm{n}, \Mv', \Mv)$
expressing that $\Mv'$ passes the test with respect to $\Mv$ (i.e., $\Mv'$
might be reachable from $\Mv$).

\begin{equation*}
  \BalanceTest(\bm{n}, \Mv', \Mv) \coloneqq
  \forall \Xv \colon \left(
  \begin{aligned}
    &\Balanced(\bm{n}, \Xv)\\
    \wedge &\Onebounded(\bm{n}, \Xv, \Mv)
  \end{aligned}
  \right) \rightarrow \Onebounded(\bm{n}, \Xv, \Mv')
\end{equation*}

\paragraph{Siphons and traps.}
Let $\tuple{N,M}$ be a Petri net with $N=\tuple{P,T}$ and let $Q \subseteq P$
be a set of places. $Q$ is a \emph{trap} of $N$ if $\preset{Q} \subseteq
\postset{Q}$, and a \emph{siphon} of $N$ if $\postset{Q} \subseteq \preset{Q}$.
\begin{itemize}
  \item If $Q$ is a siphon and $M(Q)=0$, then $M'(Q)=0$ for all
    markings $M'$ reachable from $M$.
  \item If $Q$ is a trap and $M(Q) \geq 1$, then $M'(Q) \geq 1$ for all
    markings $M'$ reachable from $M$.
\end{itemize}

If $M'$ is reachable from $M$ then it satisfies the following property: $M'(Q)
\geq 1$ for every trap $Q$ such that $M(Q) \geq 1$. A marking satisfying this
property \emph{passes the trap test} for $\tuple{N, M}$. We construct a formula
$\TrapTest(\bm{n}, \Mv)$ expressing that $\Mv$ passes the trap test for some
instance of a parameterized Petri net. We first introduce a formula expressing
that a set $\Xv$ of places is a trap.
\begin{equation*}
  \Trap(\bm{n}, \Xv)
  \coloneqq
  \forall \Yv, \Zv \colon
  (\transitions(\bm{n}, \Yv, \Zv ) \wedge \Xv \cap \Yv \neq \emptyset)
  \rightarrow \Xv \cap \Zv \neq \emptyset
\end{equation*}
Now we have:
\begin{center}
  \begin{tabular}{rcl}
    $\MarkedAt(\bm{n}, \Xv, \Mv)$
    & $\coloneqq$
    & $\Xv \cap \Mv \neq \emptyset$ \\
    $\TrapTest(\bm{n}, \Mv', \Mv)$
    & $\coloneqq$
    & $\forall \Xv \colon \left(
    \begin{aligned}
      &\Trap(\bm{n}, \Xv)\\
      \wedge &\MarkedAt(\bm{n}, \Xv,\Mv)
    \end{aligned}\right)
     \rightarrow \MarkedAt(\bm{n}, \Xv,\Mv')$
  \end{tabular}
\end{center}
Similarly we obtain a formula for a siphon test:
\begin{center}
  \begin{tabular}{rcl}
    $\EmptyAt(\bm{n}, \Xv, \Mv)$
    & $\coloneqq$
    & $\Xv \cap \Mv  = \emptyset$ \\
    $\SiphonTest(\bm{n}, \Mv', \Mv)$
    & $\coloneqq$
    & $\forall \Xv \colon \left(
    \begin{aligned}
      &\Siphon(\bm{n}, \Xv)\\
      \wedge &\EmptyAt(\bm{n}, \Xv,\Mv)
    \end{aligned}\right)
       \rightarrow \EmptyAt(\bm{n}, \Xv,\Mv')$
  \end{tabular}
\end{center}
We can now give the formula $\PotReach$:
\begin{center}
  \begin{tabular}{rcl}
    $\PotReach(\bm{n}, \Mv', \Mv)$
    & $\coloneqq$
      & $\left(\begin{aligned}
        &\BalanceTest(\bm{n}, \Mv', \Mv)\\
        \wedge &\TrapTest(\bm{n}, \Mv', \Mv)\\
        \wedge &\SiphonTest(\bm{n}, \Mv', \Mv)\end{aligned}\right)$
    \\
    $\PotReach(\bm{n}, \Mv')$
    & $\coloneqq$
    & $\exists \Mv \colon \markings(\bm{n}, \Mv) \wedge \PotReach(\bm{n}, \Mv',
    \Mv)$
  \end{tabular}
\end{center}

%
%
%

\subsection{Automatic computation of parameterized invariants}
In \cite{BozgaEISW20} it was shown that many safety properties of parameterized
Petri nets can be proved to hold \emph{for all instances}  by checking validity
of the corresponding $\PotReach$ formula. However, the technique does not
return a set of invariants strong enough to prove the property. In this section
we show how to overcome this problem. We design a CEGAR loop which, when
successful, yields a finite set of \emph{parameterized} invariants that imply
the safety property being considered.

We proceed as follows. In the first part of the section, we describe a CEGAR
loop for the non-parameterized case. The input to the procedure is a
parameterized Petri net $\tuple{\PN, \markings}$ and a number $n$ such that all
reachable markings of all instances $\PN(1), \ldots, \PN(n)$ are safe. The
output is a set of invariants of $\PN(1), \ldots, \PN(n)$, derived from
balanced sets, siphons, and traps, which are strong enough to prove safety.
Since the set of all 1BB-sets, siphons, and traps of these instances
is finite, the procedure is guaranteed to terminate even if it computes one
invariant at a time. Then we modify the loop by inserting an additional
\emph{parameterization procedure} that exploits the regularity of $\tuple{\PN,
\markings}$. The procedure transforms a 1BB-set (siphon, trap) of a
particular instance, say $\PN(4)$, into a possibly infinite set of 1BB-sets
(siphons, traps) of all instances, encoded as the set of models of a
WS1S-formula. This formula is a finite representation of the infinite set.

For the sake of brevity, in the rest of the section we describe a CEGAR loop
that only constructs traps. This allows us to avoid numerous repetitions of the
phrase ``1BB-sets, siphons, and traps''.  Since the structure of the
loop is completely generic, this is purely a presentation issue without  loss
of generality\footnote{The CEGAR loop for the non-parametric case could be
formulated in SAT and solved using a SAT-solver. However, we formulate it in
WS1S, since this allows us to give a uniform description of the non-parametric
and the parametric cases.}.

\subsubsection{A CEGAR loop for the non-parameterized case.}
\label{sec:non-param-cegar}
We need some preliminaries. Let $\PN=\tuple{\places, \transitions}$ be a
parameterized net, and let $\Xv$ be a placeset variable. An
\emph{interpretation} of $\Xv$ is a pair $\interp{X} =\tuple{n, Q}$, where
$n \geq 1$ and $Q$ is a set of places of $\PN(n)$. We identify
$\interp{X}$ and the tuple $\tuple{\interp{X}_{p}}_{p \in \places}$,  where
$\interp{X}_{p} \subseteq [n]$, defined by $j \in \interp{X}_{p}$ if{}f
$p(j) \in Q$. For example, if $\places = \set{p, q, r}$, $n = 2$, and
$Q=\set{p(0), p(1), q(1)}$, then $\tuple{\interp{X}_{p}, \interp{X}_{q},
\interp{X}_{r}} = \tuple{ \{0,1\}, \{1\}, \emptyset  }$. Given a formula
$\phi(\ldots, \Xv, \ldots)$ and an interpretation $\interp{X} = \tuple{n, Q}$
of $\Xv$, we define the formula $\phi( \ldots, \interp{X}, \ldots)$ as follows:

\begin{center}
  \begin{tabular}{rcl}
    $x \in \interp{X}_p$
      & $\coloneqq$
      & $\displaystyle \bigvee_{j \in \interp{X}_{p}} x = j$
    \\[0.2cm]
    $\Xv = \interp{X}$
      & $\coloneqq$
      & $ \bm{n} = n \wedge \displaystyle \bigwedge_{p \in \places}
        \forall \bm{x} \colon \bm{x} < \bm{n} \rightarrow \left( \bm{x} \in
        \Xv_{p} \leftrightarrow \bm{x} \in \interp{X}_p \right)$
    \\[0.2cm]
    $\phi(\ldots, \interp{X}, \ldots)$
      & $\coloneqq$
      & $\forall \Xv \colon \Xv = \interp{X} \rightarrow
        \phi(\ldots, \Xv, \ldots)$
  \end{tabular}
\end{center}

The CEGAR procedure maintains an (initially empty) set $\trapset$ of
\emph{indexed traps} of $\PN(1)$, $\PN(2)$, \ldots, $\PN(n)$, where an indexed
trap is a pair $\mathbf{T} = \tuple{i, Q}$ such that $1 \leq i \leq n$ and $Q$
is a trap of $\PN(i)$. After every update of $\trapset$ the procedure
constructs the formula $\Safety_{\trapset}$, defined as follows:

\newcommand{\trapsetf}{\mathit{TrapSet}}
\begin{equation}
\label{formula:safetychecknonpar}
\begin{aligned}
\trapsetf_\trapset(\bm{n}, \Xv) & \coloneqq && \displaystyle \bigvee_{\mathbf{X} \in \trapset} \Xv = \mathbf{X} \\[0.2cm]
\PotReach_\trapset (\bm{n}, \Mv', \Mv) & \coloneqq && \forall \Xv \colon \left(
    \begin{aligned}
      &\trapsetf_\trapset(\bm{n}, \Xv)\\
      \land {\ } &\MarkedAt(\bm{n}, \Xv ,\Mv)
    \end{aligned}\right) \rightarrow \MarkedAt(\bm{n}, \Xv,\Mv') \\[0.2cm]
\PotReach_\trapset (\bm{n}, \Mv') & \coloneqq && \exists \Mv \colon \markings(\bm{n}, \Mv) \wedge \PotReach_\trapset(\bm{n}, \Mv', \Mv) \\[0.2cm]
 \Safety_{\trapset} & \coloneqq  && \forall \bm{n} \forall \Mv \colon \left( \bm{n} < n \land \PotReach_\trapset(\bm{n}, \Mv) \right) \rightarrow \Safe(\bm{n}, \Mv)
 \end{aligned}
\end{equation}
\noindent Intuitively, $\PotReach_\trapset(\bm{n}, \Mv', \Mv)$ states that
according to the set $\trapset$ of (indexed) traps computed so far, $\Mv'$
could still be reachable from $\Mv$, because every trap of $\trapset$ marked at
$\Mv$ is also marked at $\Mv'$. Therefore,  if $\Safety_{\trapset}$ holds then
$\trapset$ is already strong enough to show that every reachable marking is
safe.

If $\trapset$ is not strong enough, then the negation of $\Safety_{\trapset}$
is satisfiable. The \acs{WS1S}-checker returns a counter-example, i.e., a model
$\interp{M} = \tuple{n, M}$ of the formula $\PotReach_\trapset(\bm{n}, \Mv)
\wedge \neg \Safe(\bm{n}, \Mv)$. Observe that $n$ is a number, and $M$ is a
marking of the instance $\PN(n)$, which is potentially reachable from an
initial marking but not safe. In this case we search for a trap of $\PN(n)$
that is marked at every initial marking of $\PN(n)$, but empty at $\interp{M}$.
Such traps are the models of the formula
\begin{equation}
\label{formula:wtrap}
  \WitTrap_\interp{M}(n, \Xv) \coloneqq \left(
  \begin{aligned}
    &\Trap(n, \Xv)\\
    \wedge &(\forall \Mv \colon \markings(n, \Mv) \rightarrow
    \MarkedAt(n, \Xv,\Mv))\\
    \wedge &\EmptyAt(n, \Xv,M)
  \end{aligned}\right)
\end{equation}
\noindent and so they can also be found with the help of the
\acs{WS1S}-checker; notice, however, that after fixing $\bm{n} \mapsto n$ the
universal quantifier of $\WitTrap_\interp{M}(n, \Xv)$ can be replaced by a
conjunction, and so $\WitTrap_\interp{M}(n, \Xv)$ is equivalent to a Boolean
formula.

If the formula has a model
$\interp{T} = \tuple{n, Q}$, then $Q$ is a trap of $\PN(n)$. We can now
take $\trapset \coloneqq \trapset \cup \{ \interp{T} \}$, and iterate. Observe
that after updating  $\trapset$ the interpretation $\interp{M}=\tuple{n, M}$ is
no longer a model of $\PotReach_\trapset(\bm{n}, \Mv) \wedge \neg \Safe(\bm{n},
\Mv)$. Since $\PN(1), \ldots, \PN(n)$ only have finitely many traps, the
procedure eventually terminates.

\subsubsection{A CEGAR loop for the parameterized case.}
\label{sec:param-cegar}
In all nontrivial examples, proving safety of the infinitely many instances
requires to compute infinitely many traps. Since the previous procedure only
computes one trap per iteration, it does not terminate. The way to solve this
problem is to insert a \emph{parametrization step} that transforms the witness
trap $\interp{T} = \tuple{n, Q}$ into a formula
$\Param_\interp{T}(\bm{n}, \Xv)$ satisfying two properties: (1) all models of
the formula are traps, and (2) $\interp{T}$ is a model. Since
$\Param_\interp{T}(\bm{n}, \Xv)$ can have infinitely many models, it
constitutes a finite representation of an infinite set of traps. These models
are also similar to each other and can be understood as capturing a single
property of the system.

\begin{example}
  Consider a parameterized net $\PN=\tuple{\places, \transitions}$ exhibiting
  rotational symmetry: For every instance $\PN(n)$, a pair $(P_1, P_2)$ of sets
  is a transition of $\PN(n)$ if{}f the pair $(P_1 \oplus_n 1, P_2 \oplus_n 1)$
  is also a transition, where $P \oplus_n 1$ denotes the result of increasing
  all indices by 1 modulo $n$. Assume that $\places = \set{p,q,r}$ and
  $\interp{T}=\tuple{3, \set{p(1), q(2)}}$, i.e., $\set{p(1), q(2)}$ is a trap
  of $\PN(3)$. It is intuitively plausible (and we will later prove) that, due
  to the rotational symmetry, $\set{p(i), q(i \oplus_m 1)}$ is a trap of
  $\PN(j)$ for every $m \geq 3$ and every $0 \leq i \leq m-1$. We can then
  define the formula $\Param_\interp{X}(\bm{n}, \Xv)$ as:
  \begin{equation*}
    \begin{aligned}
      \Param_\interp{T}(\bm{n}, \Xv) \coloneqq \bm{n} \geq 3 \; \wedge \;
        &\exists \bm{i} \colon \bm{i} < \bm{n}\\
        \land \; &\forall \bm{x} \colon \bm{x} < \bm{n} \rightarrow
      \left(\begin{aligned}
        &(\bm{x} \in \Xv_p \leftrightarrow \bm{x}=\bm{i})\\
        \wedge \; &(\bm{x} \in \Xv_q \leftrightarrow \bm{x}=\bm{i} \oplusn 1)\\
      \wedge \; &\bm{x} \notin \Xv_r\end{aligned}\right).
    \end{aligned}
  \end{equation*}
\end{example}

Now, in order to describe the CEGAR procedure for the parameterized case we
only need to \emph{redefine} the formula $\trapsetf_\trapset(\bm{n}, \Xv)$.
Instead of the formula $\trapsetf_\trapset(\bm{n}, \Xv) \coloneqq
\bigvee_{\interp{T} \in \trapset} \Xv=\interp{T}$, which holds only
when $\Xv$ is one of the finitely many traps in $\trapset$, we insert the
parametrization procedure and define
\begin{equation}
\label{formula:safetycheck}
\begin{aligned}
\trapsetf_\trapset(\bm{n}, \Xv) & \coloneqq && \bigvee_{\interp{T} \in \trapset}
  \Param_\interp{T}(\bm{n}, \Xv) \\[0.2cm]
\PotReach_\trapset (\bm{n}, \Mv', \Mv) & \coloneqq && \forall \Xv \colon \left(
    \begin{aligned}
      &\trapsetf_\trapset(\bm{n}, \Xv)\\
      \land {\ } &\MarkedAt(\bm{n}, \Xv ,\Mv)
    \end{aligned}\right) \rightarrow \MarkedAt(\bm{n}, \Xv,\Mv') \\[0.2cm]
\PotReach_\trapset (\bm{n}, \Mv') & \coloneqq && \exists \Mv \colon \markings(\bm{n}, \Mv) \wedge \PotReach_\trapset(\bm{n}, \Mv', \Mv) \\[0.2cm]
 \Safety_{\trapset} & \coloneqq  && \forall \bm{n} \forall \Mv \colon \PotReach_\trapset(\bm{n}, \Mv) \rightarrow \Safe(\bm{n}, \Mv)
 \end{aligned}
\end{equation}
Notice the two differences with (\ref{formula:safetychecknonpar}): the definition of $\trapsetf_\trapset(\bm{n}, \Xv)$, and the absence
of the condition $\bm{n} < n$ in the definition of  $\Safety_{\trapset}$.
The question is how to obtain the
formula $\Param_\interp{T}(\bm{n}, \Xv)$ from $\interp{T}$. We discuss this
point in the rest of the section.

\paragraph{A semi-automatic approach}
If we \emph{guess} the formula $\Param_\interp{T}(\bm{n}, \Xv)$ we can use the
WS1S-checker to automatically prove that the guess is correct. Indeed, it
suffices to check that all models of $\Param_\interp{T}(\bm{n}, \Xv)$ are
traps, which reduces to proving that the formula
\begin{equation*}
  \forall \bm{n}\forall \Xv \colon \Param_\interp{T}(\bm{n}, \Xv) \rightarrow
  \Trap(\bm{n}, \Xv)
\end{equation*}
holds. Let us see how this works in Example \ref{ex:diningphil}. Assume that
the CEGAR procedure produces a trap $\interp{T}=\tuple{3, \set{p(1), q(2)}}$.
The user finds it plausible that, due to the identical behavior of philosophers
$1, 2, \ldots, n-1$, the set $\set{p(i), q(i \oplus 1)}$ will be a trap of
$\PN(n)$ for every $n \geq 3$ and for every $1 \leq i \leq n-2$ (i.e., the user
excludes the case in which $i$ or $i \oplus_n 1$ are equal to $0$). So the user
guesses a new formula

\begin{align*}
  \Param_\interp{T}(\bm{n}, \Xv) \coloneqq
  & \bm{n} \geq 3 \wedge \exists \bm{i} \colon (1 \leq \bm{i} \leq \bm{n}-2)
  \wedge \forall \bm{x} \colon \\
  & (\bm{x} \in \Xv_{p} \leftrightarrow \bm{x}=\bm{i})
  \wedge (\bm{x} \in \Xv_{q} \leftrightarrow \bm{x}=\bm{i} \oplusn 1) \wedge
  \bm{x} \notin \Xv_{r}.
\end{align*}

\noindent The user now automatically checks that all models of
$\Param_\interp{T}(\bm{n}, \Xv)$ are traps. The formula can then be safely
added to $\trapsetf_\trapset(\bm{n}, \Xv)$ as a new disjunct.

\paragraph{An automatic approach for specific architectures.}
Parameterized Petri nets usually have a regular structure. For example, in the
readers-writers problem all processes are indistinguishable, and in the
philosophers problem, all right-handed processes behave in the same way. In the
next sections we show how the structural properties of ring topologies and
crowds (two common structures for parameterized systems) can be exploited to
automatically compute the formula $\Param_\interp{T}(\bm{n}, \Xv)$ for each
witness trap $\interp{T}$.

\section{Trap parametrization in rings}
\label{sec:rings}
Intuitively, a parameterized net $\PN$ is a ring if for every transition of
every instance $\PN(n)$ there is an index $i \in [n]$ and sets $\mathcal{P}_L,
\mathcal{P}_R, \mathcal{Q}_L, \mathcal{Q}_R \subseteq \places$ such that the
preset of the transition is $(\mathcal{P}_L \times \set{i}) \cup (\mathcal{P}_R
\times \set{i \oplus_n 1})$ and the postset is $(\mathcal{Q}_L \times {i}) \cup
(\mathcal{Q}_R \times {i \oplus_n 1})$. In other words, every transition
involves only two neighbor-processes of the ring. In a fully symmetric ring all
processes behave identically, while in a headed ring there is one
distinguished process, as in Example~\ref{ex:dining-philosophers}. To ease
presentation in this section we only consider fully symmetric rings. The
extension to headed rings can be found in \cite{FullPaperVersion}.

The informal statement \enquote{all processes behave identically} is captured
by requiring the existence of a finite set of \emph{transition patterns}
$\tuple{\mathcal{P}_L, \mathcal{P}_R, \mathcal{Q}_L, \mathcal{Q}_R}$ such that
the transitions of $\PN(n)$  are the result of \enquote{instantiating} each
pattern with all pairs $i$ and $i \oplus_n 1$ of consecutive indices.

\begin{definition}
  A parameterized net $\PN=\tuple{\places, \transitions}$ is a \emph{fully
  symmetric ring} if there is a finite set of transition patterns of the form
  $\tuple{\mathcal{P}_L, \mathcal{P}_R, \mathcal{Q}_L, \mathcal{Q}_R}$, where
  $\mathcal{P}_L, \mathcal{P}_R, \mathcal{Q}_L, \mathcal{Q}_r \subseteq
  \places$, such that for every instance $\PN(n)$ the following condition
  holds: $\tuple{P, Q}$ is a transition of $\PN(n)$ if{}f there is $i \in [n]$
  and a pattern  such that $P = \mathcal{P}_{L}\times\set{i}\cup
  \mathcal{P}_{R}\times\set{i\oplus_n 1}$ and $Q =
  \mathcal{Q}_{L}\times\set{i}\cup \mathcal{Q}_{R}\times\set{i\oplus_n 1}$.
\end{definition}

It is possible to decide if a given parameterized Petri net is a fully
symmetric ring:
\begin{proposition}
  There is a formula of WS1S such that a parameterized net is a fully symmetric
  ring if{}f  the formula holds.
\end{proposition}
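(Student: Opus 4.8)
The plan is to turn the apparently second-order phrase ``there \emph{exists} a finite set of transition patterns'' into a finite disjunction. The key observation is that, once $\places$ is fixed, a transition pattern $\tuple{\mathcal{P}_L, \mathcal{P}_R, \mathcal{Q}_L, \mathcal{Q}_R}$ is just a quadruple of subsets of the finite set $\places$, so there are exactly $2^{4\size{\places}}$ patterns in total, independently of $\bm{n}$. Hence the pattern universe is fixed and finite, and quantifying existentially over \emph{sets} of patterns is a quantification over a finite domain: it can be written as a disjunction over the $2^{2^{4\size{\places}}}$ subsets of this universe, which keeps the whole construction inside WS1S.

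First I would, for each pattern $\pi = \tuple{\mathcal{P}_L, \mathcal{P}_R, \mathcal{Q}_L, \mathcal{Q}_R}$, write a formula $\mathit{Inst}_\pi(\bm{n}, \bm{i}, \Xv, \Yv)$ stating that $\tuple{\Xv, \Yv}$ is the instantiation of $\pi$ at position $\bm{i}$ in $\PN(\bm{n})$. Writing $b^L_p, b^R_p \in \set{\top, \bot}$ for the Boolean constants recording membership of the place name $p$ in $\mathcal{P}_L, \mathcal{P}_R$, the formula is the conjunction over all $p \in \places$ of
\begin{equation*}
  \forall \bm{x} \colon \bm{x} \in \Xv_p \leftrightarrow \bigl( (\bm{x} = \bm{i} \land b^L_p) \lor (\bm{x} = \bm{i}\oplusn 1 \land b^R_p) \bigr),
\end{equation*}
together with the analogous clause for $\Yv_p$ built from the constants for $\mathcal{Q}_L, \mathcal{Q}_R$. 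This is a finite conjunction (one clause per place name) and uses only the already available modular-successor macro $\oplusn$. It stays faithful even in the degenerate case $\bm{n} = 1$, where $\bm{i}$ and $\bm{i}\oplusn 1$ coincide, the two disjuncts merge, and the clause correctly reproduces $\mathcal{P}_L \times \set{0} \cup \mathcal{P}_R \times \set{0}$.

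Next, for every subset $\mathcal{S}$ of the pattern universe I set $\mathit{Inst}_{\mathcal{S}}(\bm{n}, \Xv, \Yv) \coloneqq \exists \bm{i} \colon \bm{i} < \bm{n} \land \bigvee_{\pi \in \mathcal{S}} \mathit{Inst}_\pi(\bm{n}, \bm{i}, \Xv, \Yv)$, and take as witnessing formula
\begin{equation*}
  \Phi \coloneqq \bigvee_{\mathcal{S}} \forall \bm{n} \forall \Xv \forall \Yv \colon 1 \leq \bm{n} \rightarrow \bigl(\transitions(\bm{n}, \Xv, \Yv) \leftrightarrow \mathit{Inst}_{\mathcal{S}}(\bm{n}, \Xv, \Yv)\bigr),
\end{equation*}
the outer disjunction ranging over the finitely many subsets $\mathcal{S}$. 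Correctness is then immediate in both directions: the disjunct indexed by $\mathcal{S}$ states exactly that, in every instance, the transitions coincide with the instantiations of the patterns of $\mathcal{S}$ at all positions, which is verbatim the defining condition of a fully symmetric ring with pattern set $\mathcal{S}$. Thus $\Phi$ holds if{}f some pattern set works, i.e.\ if{}f $\PN$ is a fully symmetric ring.

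The only genuine obstacle is keeping $\mathit{Inst}_\pi$ faithful to the definition, in particular the wrap-around index $\bm{i}\oplusn 1$ and the small instances $\bm{n}\in\set{1,2}$; everything else is routine. I would deliberately avoid the tempting ``per-pattern'' route — declaring $\mathcal{S}$ to be the set of patterns \emph{realised} by some transition and then checking uniformity — because distinct patterns can yield the same transition (for instance a left-only pattern instantiated at position $\bm{i}$ coincides with its mirror right-only pattern instantiated one position earlier, and further coincidences appear at $\bm{n}\in\set{1,2}$), which muddies any canonical choice of $\mathcal{S}$. The existential disjunction over all subsets sidesteps these collisions entirely; the resulting doubly-exponential size of $\Phi$ is irrelevant for the mere existence of a formula, although for an implementation one would replace $\Phi$ by a more economical locality-plus-rotation-closure characterization.
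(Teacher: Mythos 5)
Your proposal is correct, and it takes a genuinely different route from the paper. The paper never enumerates pattern sets: it characterizes fully symmetric rings as the conjunction of two directly expressible properties, a \emph{locality} formula $\varphi$ (every transition of every instance is supported on two cyclically adjacent indices $\bm{i}, \bm{i}\oplusn 1$) and a \emph{transfer} formula $\psi$ (whatever a transition does at positions $\bm{i}, \bm{i}\oplusn 1$, some transition of every instance $\PN(\bm{m})$ replicates at every position $\bm{j}, \bm{j}\oplusm 1$), and then proves by a case analysis that validity of $\varphi \land \psi$ is equivalent to the defining condition. This is precisely the ``per-pattern'' route you deliberately avoided. The collisions you raise (one transition reading as different patterns at different offsets, mirror patterns coinciding) are real, and they are exactly why the paper's proof needs its case distinction on whether the quantified offset equals the transition's index, its cyclic neighbor, or neither; but they do not sink that route, because the paper never commits to a canonical pattern set inside the logic --- $\psi$ simply demands transferability of every (transition, offset) pair, and the pattern set is only extracted in the metalevel correctness argument. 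What each approach buys: your disjunction over all $2^{2^{4\size{\places}}}$ pattern sets makes correctness a verbatim transcription of the definition, with no lemma and no case analysis, which is ideal for a pure existence statement; the paper's construction costs a nontrivial equivalence proof but yields a formula of size polynomial in $\size{\places}$, which matters because this topology check is actually executed by \ostrich{} during initialization, where a doubly exponential input to MONA would be hopeless. Both constructions lean on the same implicit conventions about degenerate cases (the paper even assumes $\transitions \models \bm{n} > 3$, and both tacitly assume models of $\transitions$ only populate indices below $\bm{n}$), so your treatment of wrap-around and small $\bm{n}$ is, if anything, more careful than the paper's.
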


\begin{proof}
  We introduce a WS1S formula describing symmetric rings in several steps. To
  avoid dealing with edge cases we assume that any transition formula
  $\transitions(\bm{n}, \Xv, \Yv)$ enforces a minimal size of its models; i.e.,
  $\transitions(\bm{n}, \Xv, \Yv) \models \bm{n} > 3$. This streamlines the
  argument and formulas. However, it is straightforward to adapt the formulas
  to the full generality.

  The following formula expresses that for every transition of every instance
  there is an index $i$ such that all places in the preset and postset of the
  transition have index $i$ or $i \oplusn 1$. We call $i$ the index of the
  transition.
  \begin{equation}
    \label{eq:ring-phi}
    \begin{aligned}
    \varphi \coloneqq \forall \bm{n}, \Xv, \Yv \colon
      & \transitions(\bm{n}, \Xv, \Yv)
      \longrightarrow \\
      &\exists \bm{i}: \bm{i} < \bm{n} \land \forall \bm{x}: \bm{x} < \bm{n}
      \rightarrow \left(
      \begin{aligned}
        \bigvee\limits_{p\in \places}&\bm{x} \in \Xv_{p} \lor \bm{x} \in \Yv_{p}
        \\
        &\leftrightarrow [\bm{x} = \bm{i} \lor \bm{x} = \bm{i}\oplusn 1]
      \end{aligned}
      \right)
    \end{aligned}
  \end{equation}
  Now we express that if some instance, say $\PN(n)$, contains a transition
  with index $i$, then for every other instance, say $\PN(m)$, and for every
  index $0 \leq j \leq m$, substituting $j$ for $i$ yields a transition of
  $\PN(m)$:
  \begin{equation}
    \label{eq:ring-psi}
    \begin{aligned}
      \psi \coloneqq \forall \bm{n}, \bm{i}, \Xv, \Yv, \bm{m}, \bm{j} \colon
      &\left(\bm{i} < \bm{n} \land \transitions(\bm{n}, \Xv, \Yv) \land \bm{j} < \bm{m} \right) \longrightarrow &\\
      & \exists \Xv', \Yv' \colon  \transitions(\bm{m}, \Xv', \Yv') \land {\ } \\
      &\begin{aligned}
        \bigwedge\limits_{p\in\places}\left(
        \begin{aligned}
          &\bm{i} \in \Xv_{p} \leftrightarrow \bm{j} \in \Xv_{p}'\\
          \land \;&\bm{i} \in \Yv_{p} \leftrightarrow \bm{j} \in \Yv_{p}'\\
          \land \;&\bm{i} \oplusn 1 \in \Xv_{p} \leftrightarrow \bm{j} \oplusm 1
            \in \Xv_{p}'\\
          \land \; &\bm{i} \oplusn 1 \in \Yv_{p} \leftrightarrow \bm{j} \oplusm 1
            \in \Yv_{p}'\\
        \end{aligned}
        \right).\\
      \end{aligned}
    \end{aligned}
  \end{equation}

  We prove that $\PN = \tuple{\places, \transitions}$ is a fully symmetric ring
  if{}f its associated formula $\varphi \wedge \psi$ is valid.

  First, we show that $\varphi$ is valid if and only if for all $n$ and every
  transition $\tuple{P_1, P_2}$ of $T_{n}$ there is an index $0 \leq i \leq
  n-1$ such that $P_1 \cup P_2 \subseteq \places \times \set{i, i \oplus_n 1}$.

  Assume for all $n$ and every transition $\tuple{P_1, P_2}$ of $T_{n}$ there
  is an index $0 \leq i \leq n-1$ such that $P_1 \cup P_2 \subseteq \places
  \times \set{i, i \oplus_n 1}$.
  Then for any interpretation $\mu$ of $\bm{n}$, $\Xv$, $\Yv$ with
  $\mu\models\transitions(\bm{n}, \Xv, \Yv)$ we have $\tuple{P, Q}$ and an
  index $j < \mu(\bm{n})$ such that
  $\mu(\Xv_{p}) = P\cap\set{p}\times\set{j, j\oplus_{\mu(\bm{n})} 1}$ and
  $\mu(\Yv_{p}) = Q\cap\set{p}\times\set{j, j\oplus_{\mu(\bm{n})} 1}$ for all
  $p\in\places$.

  Consequently,
  \begin{equation*}
    \mu[\bm{i} \mapsto i] \models \forall \bm{x}: \bm{x} < \bm{n} \rightarrow
    \left(
    \begin{aligned}
      \bigvee\limits_{p\in \places}
      &
      x \in \Xv_{p} \lor x \in \Yv_{p}\\
     &
      \leftrightarrow [x = i \lor x = i\oplusn 1]\\
   \end{aligned}
    \right).
  \end{equation*}
  Which renders $\varphi$ valid in general.

  On the other hand, if $\varphi$ is valid careful examining $\varphi$ gives
  the desired result: let $\mu \models \transitions(\bm{n}, \Xv, \Yv)$. Then
  fix any $i \in [\mu(\bm{n})]$ such that
  \begin{equation*}
    \mu[\bm{i} \mapsto i] \models \forall \bm{x}: \bm{x} < \bm{n} \rightarrow
    \left(
    \begin{aligned}
      \bigvee\limits_{p\in \places}&x \in \Xv_{p} \lor x \in \Yv_{p}\\
      &\leftrightarrow [x = i \lor x = i\oplusn 1]\\
    \end{aligned}
    \right).
  \end{equation*}
  For the transition$\tuple{P, Q}$ of $\mu$; i.e.,
  \begin{equation*}
    \begin{aligned}
      &P = \set{\tuple{p, i} \in \places \times [\mu(\bm{n})]\mid i \in
      \mu(\Xv_{p})},\\
      &Q = \set{\tuple{p, i} \in \places \times [\mu(\bm{n})]
      \mid i \in \mu(\Yv_{p})}
    \end{aligned}
  \end{equation*}
  we see that $P \subseteq \places\times\set{i, i\oplus_{\mu(\bm{n})} 1}$ and
  $Q \subseteq \places\times\set{i, i\oplus_{\mu(\bm{n})} 1}$.

  Using this observation we restrict the remaining argument to the case that
  every transition of $\PN(n)$ has an index $i \in [n]$.
  It remains to show that -- under this condition -- $\psi$ is valid if and
  only if $\PN$ is a fully symmetric ring: assume $\PN$ to be a fully symmetric
  ring. Let $\mu$ be an arbitrary interpretation of $\bm{n}$, $\bm{m}$, $\Xv$,
  $\Yv$, $\bm{i}$, $\bm{j}$. If $\mu \not \models
  \bm{i} < \bm{n} \land \transitions(\bm{n}, \Xv, \Yv) \land
  \bm{j} < \bm{m}$ then there is nothing to show. Let now $\mu \models
  \bm{i} < \bm{n} \land \transitions(\bm{n}, \Xv, \Yv) \land \bm{j} < \bm{m}$.
  Let $n = \mu(\bm{n})$, $m = \mu(\bm{m})$, $i = \mu(\bm{i})$,
  $j = \mu(\bm{j})$ and $\tuple{P, Q}$ such that $P = \set{\tuple{p, i}\in
  \places\times [\mu(\bm{n})]\mid i \in \mu(\Xv_{p})}$ and $Q = \set{
    \tuple{p, i}\in \places\times [\mu(\bm{n})]\mid i \in \mu(\Yv_{p})}$. Since
  $\PN$ is assumed to be a fully symmetric ring we know that $\tuple{P, Q}$ is
  an instance of the pattern $\tuple{\mathcal{P}_L, \mathcal{P}_R,
  \mathcal{Q}_L, \mathcal{Q}_R}$ at an index $i$. More formally,
  $P = P_{L} \times \set{i} \cup P_{R} \times \set{i \oplus_{n} 1}$ and
  $Q = Q_{L} \times \set{i} \cup Q_{R} \times \set{i \oplus_{n} 1}$. If
  $\mu(\bm{i}) \notin \set{i, i \oplus_{n} 1}$, then expanding the
  interpretation $\mu$ to an interpretation $\mu'$ which chooses values
  $\mu'(\Xv')$ and $\mu'(\Yv')$ which yield a transition $\tuple{P', Q'}$
  as an instance of $\tuple{\mathcal{P}_L, \mathcal{P}_R, \mathcal{Q}_L,
  \mathcal{Q}_R}$ for an index $j'$ such that $\set{j', j' \oplus_{m} 1} \cap
  \set{j, j \oplus_m 1} = \emptyset$. (Note that we use implicitly here that
  the formula $\transitions$ enforces models of sufficient size. Adapting
  $\psi$ such that $\bm{i}$ has to be the index of $\tuple{P, Q}$ is
  straightforward.)

  On the other hand, if $\mu(\bm{i}) = i$ then expanding $\mu$ to $\mu'$ with
  values for $\mu'(\Xv')$ and $\mu'(\Yv')$ such that the associated
  $\tuple{P', Q'}$ is an instance of $\tuple{\mathcal{P}_L, \mathcal{P}_R,
  \mathcal{Q}_L, \mathcal{Q}_R}$ at index $\mu(\bm{j})$ yields the desired
  result. Analogously, for $\mu(\bm{i}) = i \oplus_n 1$. It follows that
  $\psi$ is valid.

  Now, assume $\psi$ to be valid. The result follows from carefully examining
  $\psi$. For any transition $\tuple{P, Q}$ in an instance $\PN(n)$ we can
  extract its structure; i.e., a pattern $\tuple{\mathcal{P}_L, \mathcal{P}_R,
  \mathcal{Q}_L, \mathcal{Q}_R}$ such that $P = P_{L} \times \set{i} \cup P_{R}
  \times \set{i \oplus_{n} 1}$ and $Q = Q_{L} \times \set{i} \cup Q_{R} \times
  \set{i \oplus_{n} 1}$ for an appropriate $i$ (remember that we assume
  $\varphi$ to be valid). By the validity of $\psi$ we see that the same
  pattern can be instantiated (represented by the choice of $\Xv'$ and $\Yv'$)
  at all other indices (corresponding to the choice for $\mu(\bm{j})$) for all
  other instances (corresponding to the choice for $\mu(\bm{m})$).
\end{proof}

We need to distinguish between \emph{global} and \emph{local} traps of an
instance. Loosely speaking, a global trap contains places of all processes,
while a local trap does not. To understand why this is relevant, consider a
fully symmetric ring $\PN = \tuple{\places, \transitions}$ where $\places=
\set{p, q}$ and the transitions of each instance $\PN(n)$ are the pairs
$\tuple{\set{p(i), q(i \oplus_n 1)}, \set{p(i \oplus_n 1), q(i)}}$ for every
$i\in [n]$. The sets $\set{p(0), q(0)}$ and $\set{p(0), p(1), p(2), p(3)}$ are
both traps of $\PN(4)$ (they are even 1-balanced sets). However, they are of
different nature. Intuitively, in order to decide that $\set{p(0), q(0)}$ is a
trap it is not necessary to inspect all of $\PN(4)$, but only process 0 and its
neighborhood. On the contrary, $\set{p(0), p(1), p(2), p(3)}$ involves all
processes. This has consequences when parametrizing. Due to the symmetry of
the ring, $\set{p(i),q(i)}$ is a trap of every instance $\PN(n)$ for every $i
\in [n]$. However,  $\set{p(i), p(i \oplus_n 1), \ldots, p(i \oplus_n 3)}$ is
\emph{not} a trap of every instance for every $i \in [n]$, for example
$\set{p(0), p(1), p(2), p(3)}$ is not a trap of $\PN(5)$. The correct
parametrization is a different one, namely $\set{p(0), p(1), \ldots,
p(n-1)}$. The difference between the two traps is captured by the following
definition.

\begin{definition}
  Let $\PN=\tuple{\places, \transitions}$ be a parameterized net. An indexed
  trap $\interp{T} = \tuple{n, Q}$ of $\PN$ is \emph{global} if $Q \cap
  (\places \times \{i\}) \neq \emptyset$ for every $i \in [n]$, otherwise
  $\interp{T}$ is \emph{local}.
\end{definition}

\subsection{Parametrizing local traps}

We first observe that local indexed traps can be \enquote{shifted} locally
while maintaining their trap property.
\begin{lemma}
\label{lem:trap-by-one}
  Let $\PN = \tuple{\places, \transitions}$ be a fully symmetric ring and let
  $\tuple{n, Q}$ be a local indexed trap of $\PN$. Then $\tuple{n,Q'}$ with $Q'
  = \set{ \tuple{p, i \oplus_n 1}: \tuple{p, i} \in Q}$ is a local indexed trap
  of $\PN$.
\end{lemma}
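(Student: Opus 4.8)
The plan is to realize the map $Q \mapsto Q'$ as the action of a \emph{rotation automorphism} of the instance $\PN(n)$, and then transport the trap property along it. Concretely, I would introduce the bijection $\shift_n \colon \places \times [n] \to \places \times [n]$ defined by $\shift_n(\tuple{p, i}) = \tuple{p, i \oplus_n 1}$ and extended elementwise to sets of places, so that the set $Q'$ in the statement is exactly $\shift_n(Q)$. Its inverse is the rotation $\shift_n^{-1}(\tuple{p, i}) = \tuple{p, i \ominus_n 1}$. Since $\shift_n$ is a bijection it commutes with $\cap$ and preserves (non)emptiness, which is all the bookkeeping the rest of the argument needs.

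The key step is to show that $\shift_n$ is a \emph{symmetry of the net}: a pair $\tuple{P_1, P_2}$ is a transition of $\PN(n)$ if and only if $\tuple{\shift_n(P_1), \shift_n(P_2)}$ is. This is exactly where the fully-symmetric-ring hypothesis enters. By definition, $\tuple{P_1, P_2}$ is a transition iff it is the instantiation of some pattern $\tuple{\mathcal{P}_L, \mathcal{P}_R, \mathcal{Q}_L, \mathcal{Q}_R}$ at some index $i$, i.e. $P_1 = \mathcal{P}_L \times \set{i} \cup \mathcal{P}_R \times \set{i \oplus_n 1}$ and $P_2 = \mathcal{Q}_L \times \set{i} \cup \mathcal{Q}_R \times \set{i \oplus_n 1}$. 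Applying $\shift_n$ shifts both index-fibers up by one and produces the instantiation of the \emph{same} pattern at index $i \oplus_n 1$, which is again a transition. Since $i \mapsto i \oplus_n 1$ is a bijection of $[n]$ and the pattern is unchanged, this correspondence is reversible (equivalently, $\shift_n^{-1} = \shift_n^{n-1}$ preserves transitions by iterating the forward direction), so $\shift_n$ maps the transition set bijectively onto itself.

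Granting the symmetry, trap preservation is a short transport argument. Let $\tuple{P_1, P_2}$ be any transition with $Q' \cap P_1 \neq \emptyset$; I want $Q' \cap P_2 \neq \emptyset$. By the symmetry, $\tuple{\shift_n^{-1}(P_1), \shift_n^{-1}(P_2)}$ is also a transition, and since $\shift_n^{-1}(Q') = Q$ and $\shift_n^{-1}$ is a bijection, $Q \cap \shift_n^{-1}(P_1) = \shift_n^{-1}(Q' \cap P_1) \neq \emptyset$. As $Q$ is a trap (i.e. satisfies $\Trap(n, \cdot)$), this forces $Q \cap \shift_n^{-1}(P_2) \neq \emptyset$, and pushing forward through $\shift_n$ gives $Q' \cap P_2 = \shift_n(Q \cap \shift_n^{-1}(P_2)) \neq \emptyset$. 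Hence $Q'$ is a trap. Locality is immediate from the same viewpoint: $\shift_n$ permutes the fibers $\places \times \set{i}$, so if $Q$ avoids the fiber over some index $i$ then $Q'$ avoids the fiber over $i \oplus_n 1$, and $\tuple{n, Q'}$ is again local.

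The only delicate point — and hence the step I would write out most carefully — is the symmetry claim, specifically getting the \emph{direction} of the rotation right and justifying that the inverse rotation is also a symmetry. Everything else is bijective bookkeeping: once $\shift_n$ is known to be a net automorphism, trap preservation and locality follow formally and use no further structure of the ring. I would therefore isolate the symmetry as an auxiliary observation — it will also be reused when parameterizing, since the full parametrization applies all powers $\shift_n^k$ — and keep the present proof to the three transport lines above.
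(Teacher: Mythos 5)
Your proof is correct and takes essentially the same route as the paper: both hinge on the observation that, by the pattern-based definition of fully symmetric rings, shifting a transition by one index yields another transition, and then transport the trap property of $Q$ to $Q'$ along this shift. The only difference is presentational — the paper argues by contradiction (a transition witnessing that $Q'$ is not a trap is rotated one index to the left to contradict $Q$ being a trap), whereas you package the same step as a direct transport through the rotation automorphism and also make the preservation of locality explicit.
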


\begin{proof}
  Assume $Q'$ is not an indexed local trap. Then there is
  $t \in T_{n}$ such that $\preset{t} \cap Q' \neq \emptyset = \postset{t}
  \cap Q'$. Since $\PN$ is a fully symmetric ring, there is a pattern
  $\tuple{\tuple{P_{L}, P_{R}}, \tuple{Q_{L}, Q_{R}}}$ and an index $i \in [n]$
  such that $t$ is the instance of $\tuple{\tuple{P_{L}, P_{R}}, \tuple{Q_{L},
  Q_{R}}}$ with index $i$. Let $t'$ be the transition obtained from the same
  pattern with index $i \oplus_{n} (n-1)$; i.e., moved one index to the left.
  It follows $\preset{t'} = \set{\tuple{p, j \oplus_{n} (n-1)} \colon
  \tuple{p, j} \in \preset t}$ and $\postset{t'} = \set{\tuple{p, j \oplus_{n}
  (n-1)} \colon \tuple{p, j} \in \postset t}$. By definition of $Q'$ we have
  $Q = \set{\tuple{p, j \oplus_{n} (n-1)}\colon \tuple{p, j}\in Q'}$. That,
  however, gives $\preset{t'} \cap Q \neq \emptyset = \postset{t'} \cap Q$ in
  contradiction to $Q$ being a local indexed trap.
\end{proof}


Our second lemma states that for any indexed local traps $\tuple{n, Q}$ with
$Q \cap (\places \times \set{n-1})$, the set $Q$ remains a trap in any instance
$\PN(n')$ with $n \leq n'$.
\begin{lemma}
\label{lem:trap-in-larger-instances}
  Let $\PN$ be a fully symmetric ring and $\tuple{n, Q}$ a local indexed trap
  such that $Q \cap (\places \times \set{n-1}) = \emptyset$. Then
  $\tuple{n', Q}$ is a local indexed trap for all $n' \geq n$.
\end{lemma}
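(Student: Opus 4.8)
The plan is to fix $n' \geq n$ and show \emph{directly} that $Q$ satisfies the trap condition in $\PN(n')$, reusing the fact that it does so in $\PN(n)$. First I would record two cheap facts. Since $\tuple{n,Q}$ is a local trap with $Q \cap (\places \times \set{n-1}) = \emptyset$, in fact $Q \subseteq \places \times \set{0, \ldots, n-2}$; as $n' \geq n$, every index occurring in $Q$ is strictly below $n'$, so $Q$ is a genuine placeset of $\PN(n')$, and since it misses index $n-1 < n'$ the pair $\tuple{n',Q}$ is again local. It remains to check the trap property in the form used in the proof of Lemma~\ref{lem:trap-by-one}: every transition of $\PN(n')$ whose preset meets $Q$ must have a postset that meets $Q$ as well.

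So I would take an arbitrary transition $t'$ of $\PN(n')$ with $\preset{t'} \cap Q \neq \emptyset$ and, using that $\PN$ is a fully symmetric ring, write $t'$ as the instance of some pattern $\tuple{\mathcal{P}_L, \mathcal{P}_R, \mathcal{Q}_L, \mathcal{Q}_R}$ at an index $j \in [n']$, so that the two indices touched by $t'$ are $j$ and $j \oplus_{n'} 1$. The heart of the argument is to produce a transition $t$ of $\PN(n)$ — the instance of the \emph{same} pattern at a suitably chosen index $k \in [n]$ — with $\preset{t}\cap Q = \preset{t'}\cap Q$ and $\postset{t}\cap Q = \postset{t'}\cap Q$. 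Once this matching transition is in hand, the trap property of $Q$ in $\PN(n)$ yields $\postset{t}\cap Q \neq \emptyset$, hence $\postset{t'}\cap Q \neq \emptyset$, which is exactly what is needed.

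To choose $k$ I would split on the position of $j$ and $j \oplus_{n'} 1$ relative to $n-1$. Because $Q \subseteq \places \times \set{0,\ldots,n-2}$, the assumption $\preset{t'}\cap Q \neq \emptyset$ forces at least one of these two indices into $\set{0,\ldots,n-2}$; in particular the degenerate configuration in which both indices are $\geq n-1$ never arises. If both indices lie in $\set{0,\ldots,n-2}$ then $j \leq n-3$ and there is no wraparound in either ring, so taking $k=j$ makes $t$ and $t'$ literally the same placesets. The two boundary configurations are $j = n-2$, where $j\oplus_{n'}1 = n-1$, and $j = n'-1$, where $j\oplus_{n'}1 = 0$; I take $k = n-2$ and $k = n-1$ respectively. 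In each of these the only places where $t$ and $t'$ can differ sit at an index outside $\set{0,\ldots,n-2}$ (namely $n-1$ in $\PN(n)$, and $n-1$ resp.\ $n'-1$ in $\PN(n')$), and such places never meet $Q$; on the indices inside $\set{0,\ldots,n-2}$ the two transitions instantiate the same pattern identically. Hence $\preset{t}\cap Q = \preset{t'}\cap Q$ and $\postset{t}\cap Q = \postset{t'}\cap Q$ in every case.

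The only real difficulty I anticipate is the ring bookkeeping: the successors $j\oplus_{n'}1$ and $k\oplus_n 1$ wrap at different points, so I must make sure the chosen $k$ reproduces \emph{precisely} the $Q$-meeting places of $t'$ and introduces no spurious ones. The guiding principle that settles every case is that index $n-1$ — together with every larger index of the big ring — is invisible to $Q$, so it can absorb the entire discrepancy between the two ring sizes; this is exactly where the hypothesis $Q \cap (\places \times \set{n-1}) = \emptyset$ is used.
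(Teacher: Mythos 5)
Your proof is correct, and it takes a genuinely different (and in fact more careful) route than the paper's. The paper argues by contradiction with a \emph{minimal} $n' > n$ for which $\tuple{n', Q}$ fails to be a trap: it claims that a violating transition $t$ of $\PN(n')$ cannot touch index $n'-1$ (because $Q$ misses indices $n'-2$ and $n'-1$), hence is literally also a transition of $\PN(n'-1)$, contradicting minimality. You instead transfer an arbitrary transition $t'$ of $\PN(n')$ whose preset meets $Q$ directly to a same-pattern transition $t$ of $\PN(n)$ with identical $Q$-intersections, choosing the instantiation index $k \in \set{j,\, n-2,\, n-1}$ by a case split on where the two touched indices lie, and then invoke the trap property of $\tuple{n,Q}$ once — no induction or minimal counterexample is needed. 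Notably, your version also closes a gap in the paper's argument: the paper's inference that $\preset{t} \cap Q \neq \emptyset$ forces $t$ to avoid index $n'-1$ is not justified for the wraparound transition instantiated at index $j = n'-1$, which touches indices $n'-1$ and $0$ and can meet $Q$ at index $0$ while still containing places at index $n'-1$ (from a nonempty left part of the pattern); your third case ($j = n'-1$ matched with $k = n-1$, so that both wraparound transitions agree on index $0$ and differ only at indices invisible to $Q$) is precisely the treatment of this configuration that the paper's proof skips. What the paper's descent buys is that each step compares only adjacent ring sizes $n'$ and $n'-1$; what your direct mapping buys is an explicit, exhaustive case analysis that makes the lemma's proof complete as stated.
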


\begin{proof}
  Assume the statement is false. That is, $\tuple{n', Q}$ is not a local
  indexed trap. If $n' = n$ there is an immediate contradiction with the
  assumption that $\tuple{n, Q}$ is a local indexed trap.
  Hence, let $n' > n$ \emph{minimal} such that $\tuple{n', Q}$ is \emph{not} a
  local indexed trap. So there is a transition $t$ in $\PN(n')$ such that
  $\preset{t} \cap Q \neq \emptyset = \postset{t} \cap Q$.
  Since $Q \cap (\places \times \set{n-1}) = \emptyset$ by assumption of the
  lemma and $n' > n$ by case distinction we have
  $Q \cap (\places \times \set{n' - 2, n' - 1}) = \emptyset$. With this and the
  facts that fully symmetric rings only allow for transitions using places of
  two adjacent indices and $\preset{t} \cap Q \neq \emptyset$ we get
  $\preset{t} \cap \places \times \set{n' - 1} = \emptyset$ and
  $\postset{t} \cap \places \times \set{n' - 1} = \emptyset$. That means,
  however, that $t$ is also a transition in $\PN(n' - 1)$ because $\PN$ is a
  fully symmetric ring and, consequently, $\tuple{n' - 1, Q}$ already is not a
  local indexed trap. This contradicts that $n'$ was chosen minimal and
  concludes the proof.
\end{proof}

We can now show how to obtain a sound parameterization of a given indexed trap.
The formula $\Param_\interp{T}(\Xv)$ states that $\Xv$ is the result of
\enquote{shifting} $\interp{T}=\tuple{n, Q}$ in $\PN(n')$ for some $n' \geq n$.

\begin{theorem}
  \label{thm:generalize-in-rings}
  Let $\PN=\tuple{\places, \transitions}$ be a fully symmetric ring and let
  $\tuple{n, Q}$ be a local indexed trap of $\PN(n)$ such that $Q \subseteq
  (\places \times I)$ for a minimal set $I \subset [n]$.  Assume $I =
  \set{i_{0}, \ldots, i_{k-1}}$ with $0 \leq i_0 < i_1 < \ldots <  i_{k-1} <
  n-1$. Then every model of the formula
  \begin{equation*}
    \begin{aligned}
      \Param_\interp{T}& (\bm{n}, \Xv)\coloneqq
       n \leq \bm{n} \; \land \; \exists \bm{y}\colon \bm{y} < \bm{n} \; \land \; \bigwedge_{p \in \places}
      \forall \bm{x} \colon \bm{x} < \bm{n} \rightarrow \\
      &\left(
      \bm{x} \in \Xv_{p} \leftrightarrow
      \left(
      \begin{aligned}
        &\bigvee_{\tuple{i_{0}, p} \in Q} \bm{x} = \bm{y}\\
        &\lor \bigvee_{j > 0, \tuple{i_{j}, p} \in Q}
          \bm{x} = \bm{y} \oplusn (i_{j} - i_{j-1})
      \end{aligned}
      \right)\right)
    \end{aligned}
  \end{equation*}
  is an indexed trap of $\PN$.
\end{theorem}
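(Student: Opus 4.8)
The plan is to reduce the theorem to the two preceding lemmas. A model of $\Param_\interp{T}(\bm{n}, \Xv)$ amounts to two independent choices: an instance size $\bm{n} = n' \geq n$ and an anchor $\bm{y} = y$ with $y < n'$. The first step I would carry out is to show that, for any such model, the interpreted placeset $\interp{X}$ is exactly the cyclic shift of $Q$ inside the instance $\PN(n')$ that sends index $i_0$ to $y$; that is, $\interp{X} = \set{p(i \oplus_{n'} s) : p(i) \in Q}$ with $s = y \ominusn i_0$. Concretely, this means verifying that the contribution of each $i_j$ is placed at $\bm{y} \oplusn (i_j - i_0)$: the anchor $i_0$ lands at $y$, and successively adding the gaps $i_j - i_{j-1}$ accumulates to the offset $i_j - i_0$, so that $i_j$ sits at $y \oplus_{n'} (i_j - i_0) = i_j \oplus_{n'} s$, as desired.

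Once $\interp{X}$ is identified as a cyclic shift of $Q$ in the larger instance $\PN(n')$, the conclusion follows by combining Lemmas~\ref{lem:trap-in-larger-instances} and~\ref{lem:trap-by-one}. The hypothesis $i_{k-1} < n-1$ ensures $Q \cap (\places \times \set{n-1}) = \emptyset$, so Lemma~\ref{lem:trap-in-larger-instances} applies and gives that $\tuple{n', Q}$ is a local indexed trap for every $n' \geq n$. I would then invoke Lemma~\ref{lem:trap-by-one} exactly $s$ times inside the fixed instance $\PN(n')$: each application shifts the current trap one index to the right while preserving both the trap property and locality, so the $s$-fold shift, namely $\interp{X}$, is again a local indexed trap of $\PN(n')$. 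This is precisely the assertion of the theorem, and note that only soundness (every model yields a trap) is required, not a converse.

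The step I expect to be the main obstacle is the first one, matching the gap-based encoding of $\Param_\interp{T}$ to a bona fide cyclic shift of $Q$; the remaining modular bookkeeping is routine but must be done carefully so that the offset of $i_j$ from the anchor comes out as $i_j - i_0$ rather than merely $i_j - i_{j-1}$. Two side conditions must also be tracked throughout: locality has to be maintained under shifting (which is exactly what Lemma~\ref{lem:trap-by-one} guarantees), and the boundary condition $i_{k-1} < n-1$ is what permits transporting $Q$ to all larger instances before shifting. Observe also that the modulus in the formula is $\bm{n} = n'$ rather than the original $n$, so the shifts are cyclic in $\PN(n')$; this causes no difficulty, since Lemma~\ref{lem:trap-by-one} is a statement about the symmetry of each fixed instance and is insensitive to how $Q$ was originally positioned.
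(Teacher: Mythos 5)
Your proposal is correct and follows essentially the same route as the paper's own proof: transport $Q$ to the larger instance via Lemma~\ref{lem:trap-in-larger-instances} (enabled by $i_{k-1} < n-1$), then rotate it into place by repeated applications of Lemma~\ref{lem:trap-by-one}, after identifying each model of $\Param_\interp{T}(\bm{n},\Xv)$ with a cyclic shift of $Q$ inside $\PN(n')$. The paper performs the identical identification (its indices $i_{\ell}'$, defined by $i_{0}'=y$ and $i_{\ell+1}' = i_{\ell}' \oplus_{n'} (i_{\ell+1}-i_{\ell})$, are exactly your accumulated offsets $y \oplus_{n'} (i_{\ell}-i_{0})$) and then applies Lemma~\ref{lem:trap-by-one} a total of $(n'-i_{0})+y$ times, which agrees modulo $n'$ with your shift amount $s$.
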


\begin{proof}
  Assume $\mu \models \Param_\interp{T}(\bm{n}, \Xv)$. Then there exists a
  tuple $\tuple{k, P}$ such that $\mu(\bm{n}) = k$ and $\mu(\Xv_{p}) =
  \set{i\in [k]\mid \tuple{p, i}\in P}$ for every $p \in \places$. We have  $n
  \leq k$ by the first conjunct of $\Param_\interp{T}$. Let $j \in [k]$ be any
  value such that assigning $j$ to  $\bm{y}$ satisfies the existentially
  quantified subformula of $\Param_\interp{T}$.

  Since $i_{k-1} < n-1$ we have $Q\cap (\places\times\set{n-1}) = \emptyset$.
  So we can apply Lemma~\ref{lem:trap-in-larger-instances} to $\tuple{n, Q}$,
  and in fact we can apply it $(n-k)$ times, yielding a local trap $\tuple{k,
  Q}$. Now, fix indices $i_{0}', \ldots, i_{k-1}'$ such that $i_{0}' = j$ and
  $i_{\ell+1}' = i_{\ell}'\oplus_{k}(i_{\ell+1}-i_{\ell})$ for $0 \leq \ell <
  k-1$. Carefully examining $\Param_\interp{T}$ one can now observe that
  \begin{equation*}
    \label{eq:PandQ}
    \set{p\in\places \mid \tuple{p, i_{\ell}'} \in P}
    = \set{p\in\places \mid \tuple{p, i_{\ell}} \in P}\text{ for all }0 \leq
    \ell < k
  \end{equation*}
  and $\emptyset = \places\times([k]\setminus\set{i_{0}', \ldots, i_{k-1}'})$.
  We can now apply $(k - i_{0}) + i_{0}'$ times Lemma~\ref{lem:trap-by-one}  to
  the local trap $\tuple{k, Q}$, which shows that $\tuple{k, P}$ is a local
  trap.
\end{proof}

\begin{remark}
  Since Theorem  \ref{thm:generalize-in-rings} requires $i_{k-1} < n-1$, it can
  only be applied to local traps $\tuple{n, Q}$ such that $Q \cap
  (\places\times\set{n-1}) = \emptyset$. However, for every local trap
  $\tuple{n, Q}$ Lemma~\ref{lem:trap-by-one} allows us to find a local trap
  $\tuple{n, Q'}$ satisfying $Q' \cap (\places\times\set{n-1}) = \emptyset$,
  which we can then parameterize via Theorem~\ref{thm:generalize-in-rings}.
\end{remark}

\subsection{Parametrizing global traps}
In contrast to local traps, global traps involve all indices $[n]$ of the
instance $\PN(n)$. Let $\tuple{n, Q}$ be an indexed global trap. We denote with
$Q[i]$ the set $P \subseteq \places$ such that $P \times \set{i} = Q \cap
(\places \times \set{i})$; i.e., the set of places in $Q$ at index $i$. Moreover,
we say $Q$ has period $p$ if $p$ is the smallest divisor of $n$ such that
for all $0 \leq j < p$ we have $Q[j] = Q[k \cdot p + j]$ for all $0 \leq k <
\frac{n}{p}$. That is, $Q$ is a repetition of the same $p$ sets in a row. Since
$n$ is a period of $Q$ we know that every $Q$ has a period, which we denote
$p_{Q}$. Recall the global trap $Q = \set{p(0), p(1), p(2), p(3)}$ from before.
Then, $Q[0] = Q[1] = Q[2] = Q[3] = \set{p}$ and, consequently, $p_{Q} = 1$.
Intuitively, we can repeat a period over and over again and still obtain a
trap. So we can parameterize global traps by capturing the repetition of
periodic behavior:
\begin{theorem}
  \label{thm:generalizing-global-traps-rings}
  Let $\tuple{n, Q}$ be an indexed global trap
  with $n\geq{}2$. Then every model of the formula
  \begin{equation*}
    \begin{aligned}
      \Param_\interp{T}&(\bm{n}, \Xv)\coloneqq \exists \bm{P}: 0 \in \bm{P}
      \land \bm{n} \in \bm{P}\\
      \land \;\; &\forall \bm{x}: \bm{x} \leq \bm{n}
        \rightarrow \bm{x} \in \bm{P} \leftrightarrow
        \left(
        \begin{aligned}
          &\bigwedge_{0 \leq k < p_{Q}} \bm{x} + k \notin \bm{P}\\
          \land \;\; &\bm{x} + p_{Q} \in \bm{P}
        \end{aligned}
        \right)\\
      \land \;\; &\forall \bm{x}_{0}, \ldots, \bm{x}_{p_{Q}-1}:
      \left(
      \begin{aligned}
        &\bigwedge_{0 < k \leq p_{Q}-1} \bm{x}_{k-1} + 1 = \bm{x}_{k}\\
        \land \;\; &\bm{x}_{p_{Q}-1} < n \land \bm{x}_{0} \in \bm{P}
      \end{aligned}
      \right)\\
      &\rightarrow \bigwedge_{0 \leq k < p_{Q}}\bigwedge_{p \in Q[k]}\bm{x}_{k}
      \in \Xv_{p} \land \bigwedge_{p \in \places\setminus Q[k]} \bm{x}_{k}
      \notin \Xv_{p}
    \end{aligned}
  \end{equation*}
  is an indexed global trap.
\end{theorem}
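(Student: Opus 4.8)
The plan is to take an arbitrary model of $\Param_\interp{T}(\bm{n}, \Xv)$, read off the instance size $m$ and the placeset $P$ it encodes, and show directly that $\tuple{m, P}$ is a global trap of $\PN$. First I would unfold the formula. The auxiliary set $\bm{P}$ is pinned down by the recurrence together with $0 \in \bm{P} \wedge \bm{n} \in \bm{P}$ to be exactly the multiples of $p_{Q}$ between $0$ and $\bm{n}$; in particular a model exists only when $m$ is a positive multiple of $p_{Q}$. The last conjunct then forces $P$ to be the periodic extension of the period of $Q$, i.e. $P = Q_{m} \coloneqq \set{p(i) : i \in [m],\ p \in Q[i \bmod p_{Q}]}$, so that $Q_{m}[i] = Q[i \bmod p_{Q}]$ for every $i \in [m]$. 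Thus it suffices to prove: for every positive multiple $m$ of $p_{Q}$, the periodic extension $Q_{m}$ is a global trap of $\PN(m)$.

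The heart of the argument is a local matching between $\PN(m)$ and the original instance $\PN(n)$, exploiting that \emph{both} $m$ and $n$ are multiples of $p_{Q}$. For an index $i \in [m]$ set $j \coloneqq i \bmod p_{Q} \in [n]$. I claim $Q_{m}[i] = Q[j]$ and $Q_{m}[i \oplus_m 1] = Q[j \oplus_n 1]$. The first equality is immediate. For the second, the key point is that reducing modulo $p_{Q}$ commutes with the ring successor: since $m \equiv 0 \equiv n \pmod{p_{Q}}$, the wrap-around step at the top index still raises the residue by one modulo $p_{Q}$, so $i \oplus_m 1 \equiv i + 1 \equiv j \oplus_n 1 \pmod{p_{Q}}$; periodicity of $Q$ then gives $Q_{m}[i \oplus_m 1] = Q[(i+1) \bmod p_{Q}] = Q[j \oplus_n 1]$. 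Hence the restriction of $Q_{m}$ to the two consecutive indices $\set{i, i \oplus_m 1}$ is a faithful copy of the restriction of $Q$ to $\set{j, j \oplus_n 1}$.

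I would then transfer the trap property through this matching using the ring structure. Every transition $t$ of $\PN(m)$ is, by the fully-symmetric-ring definition, the instance at some index $i$ of a pattern $\tuple{\mathcal{P}_{L}, \mathcal{P}_{R}, \mathcal{Q}_{L}, \mathcal{Q}_{R}}$; instantiating the \emph{same} pattern at index $j = i \bmod p_{Q}$ yields a transition $t'$ of $\PN(n)$ (every pattern occurs at every index). Because $\preset{t}, \postset{t}$ (resp.\ $\preset{t'}, \postset{t'}$) are determined by the pattern together with the local contents $Q_{m}[i], Q_{m}[i \oplus_m 1]$ (resp.\ $Q[j], Q[j \oplus_n 1]$), the local matching gives $\preset{t} \cap Q_{m} \neq \emptyset \iff \preset{t'} \cap Q \neq \emptyset$, and likewise for the postsets. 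Since $\tuple{n, Q}$ is a trap, $\preset{t'} \cap Q \neq \emptyset$ implies $\postset{t'} \cap Q \neq \emptyset$; transporting both sides back yields $\preset{t} \cap Q_{m} \neq \emptyset \Rightarrow \postset{t} \cap Q_{m} \neq \emptyset$. As $t$ was arbitrary, $Q_{m}$ is a trap of $\PN(m)$, and it is global because $Q$ being global forces every $Q[k] \neq \emptyset$, whence $Q_{m}[i] = Q[i \bmod p_{Q}] \neq \emptyset$ for all $i$.

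The routine but delicate part is the bookkeeping of the first paragraph: confirming that the $\bm{P}$-recurrence and the final block-assignment really do cut $[m]$ into $m/p_{Q}$ copies of a single period, and reconciling the fixed constant $n$ with the variable $\bm{n}$ appearing in the quantifier guards. The genuine mathematical obstacle, however, is making the residue/wrap-around matching of the second paragraph airtight across the ring boundary; once it is established, the transfer of the trap property is mechanical, and small-instance degeneracies are absorbed by the standing assumption that $\transitions$ has no models of size $\leq 3$.
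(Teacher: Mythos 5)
Your proposal is correct and takes essentially the same route as the paper's proof: it, too, reads off from the model that the encoded set is the periodic extension of $Q$ (the paper writes $P[j\cdot p_{Q}+o]=Q[o]$), then maps any transition of the larger instance at index $i$ to the transition obtained by instantiating the \emph{same} pattern at the residue index $o=i \bmod p_{Q}$ in $\PN(n)$, and concludes via the column matching $P[i]=Q[o]$, $P[i\oplus_{k}1]=Q[o\oplus_{n}1]$. Your explicit verification of the wrap-around congruence and of globality merely spells out steps the paper asserts without proof; it is not a different argument.
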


\begin{proof}
  Let $\mu$ be a model of $\Param_\interp{T}(\bm{n}, \Xv)$. Observe that we
  have $\mu(\bm{P}) = \set{0, p_{Q}, 2\cdot p_{Q},  \ldots, \ell \cdot p_{Q}}$
  for some $\ell > 0$. Let $k := \ell\cdot p_{Q} = \mu(\bm{n})$ and let $P$ be
  the set of places of $\PN(k)$ such that $\mu(\Xv_{p}) = \set{i \in [k]:
  \tuple{p, i} \in P}$. Examining $\Param_\interp{T}(\bm{n}, \Xv)$ further we
  observe that $P[j\cdot p_{Q} + o] = Q[o]$ holds for all $0 \leq o < p_{Q}$
  and $0 \leq j < \ell$.

  It remains to show that $P$ is indeed a trap. Assume the contrary. Then there
  is a transition $t$ in $\PN(k)$ such that $P\cap\preset{t} \neq \emptyset =
  P\cap\postset{t}$. Since $\PN$ is a fully symmetric ring there is an index
  $i\in [k]$ such that $\preset{t} \cup \postset{t}  \subseteq
  \places\times\set{i, i\oplus_{k} 1}$. Pick $j$ such that $i = j\cdot p_{Q} +
  o$ for $0 \leq o < p_{Q}$. Observe that $P[i] = Q[o]$ and $P[i \oplus_{k} 1]
  = Q[o \oplus_{n} 1]$. Again, by $\PN$ being a symmetric ring, we can find a
  transition $t'$ such that $\preset{t'} = \set{\tuple{p, o}:\tuple{p, i} \in
  \preset{t}} \cup \set{\tuple{p, o \oplus_{n} 1}:\tuple{p, i \oplus_{k} 1} \in
  \preset{t}}$ and $\postset{t'} = \set{\tuple{p, o}:\tuple{p, i} \in
  \postset{t}} \cup \set{\tuple{p, o \oplus_{n} 1}:\tuple{p, i \oplus_{k} 1}
  \in \postset{t}}$. This, however, yields a contradiction since $t'$ is a
  witness for $Q$ not being a trap in contradiction to the assumptions.
\end{proof}

\section{Trap parametrization in barrier crowds}
\label{sec:crowds}

Barrier crowds are parameterized systems in which communication happens by
means of global steps in which each process makes a move.  An initiator process
decides to start a step, and all the other processes get a chance to veto
it; if the step is not blocked (if all the processes accept it), all the
processes, including the initiator, update their local state. Barrier crowds
are slightly more general than broadcast protocols
\cite{DBLP:conf/lics/EsparzaFM99}, which, loosely speaking, correspond to the
special case in which no process makes use of the veto capability. Like
broadcast protocols, barrier crowds can be used to model cache coherence
protocols \cite{Delzanno00}.

As for fully symmetric rings, transitions of the instances of a barrier crowd
are generated from a finite set of \enquote{transition patterns}. A transition
pattern of a barrier crowd $\PN$ is  a pair $\tuple{\mathcal{I}, \mathbb{A}}$,
where $\mathcal{I} \in 2^\places \times 2^\places$  and $\mathbb{A} \subseteq
2^\places \times 2^\places$.  Assume for example that each process can be in
states $p, q, r$, and maintains a boolean variable with values $\set{\falseval,
\trueval}$. The corresponding parameterized net has $\places = \set{p, q, r,
\falseval, \trueval}$ as set of places. Consider the transition pattern  with
$\mathcal{I} = \tuple{\set{p, \falseval}, \set{q, \trueval} }$, and $\mathbb{A}
= \set{ \tuple{\set{p}, \set{p}}, \tuple{\set{q, \falseval}, \set{r,
\falseval}}, \tuple{\set{q, \trueval}, \set{r, \falseval}} }$. This pattern
models that the initiator process, say process $i$, proposes a step that takes
it from $p$ to $q$, setting its variable to $\trueval$. Each other process
reacts as follows, depending on its current state: if in $p$, it stays in $p$,
leaving the variable unchanged; if in $q$, it moves to $r$, setting the
variable to $\falseval$; if in $r$, it vetoes the step (because $\mathbb{A} $
does not offer a way to accept from state $r$). We depict an instance with
three agents for this example in Figure~\ref{fig:crowd-example}.

\begin{figure}
  \caption{An example $\PN(3)$ of an instance of a crowd $\PN$. The places of
  $\PN$ are $\places = \set{p, q, r, \falseval, \trueval}$ and we consider only
  the transition pattern with $\mathcal{I} = \tuple{\set{p, \falseval},
  \set{q, \trueval}}$ and $\mathbb{A} = \set{\tuple{\set{p}, \set{p}},
  \tuple{\set{q, \falseval}, \set{r, \falseval}}, \tuple{\set{q, \trueval},
  \set{r, \falseval}}}$. We give the transitions only for the case that the
  agent with index $0$ executes the pattern of $\mathcal{I}$. This means we
  give 9 transitions where agents $1$ and $2$ execute one of the patterns
  $\tuple{\set{p}, \set{p}}$, $\tuple{\set{q, \falseval}, \set{r, \falseval}}$,
  $\tuple{\set{q, \trueval}, \set{r, \falseval}}$. Specifically,
  we only drew the sixth transition from the left continuously and red. This
  transition corresponds to agent 1 using the pattern $\tuple{\set{q,
  \falseval}, \set{r, \falseval}}$ and agent 2 using the pattern
  $\tuple{\set{q, \trueval}, \set{r, \falseval}}$.}
  \label{fig:crowd-example}
  \begin{center}
    \resizebox{1.0\textwidth}{!}{%
      \begin{tikzpicture}[every node/.style={scale=0.5}]
        \node [place, minimum width=1.5cm] (p0)     at (-2,  0) {$p(0)$};
        \node [place, minimum width=1.5cm] (q0)     at (-1,  0) {$q(0)$};
        \node [place, minimum width=1.5cm] (r0)     at ( 0,  0) {$r(0)$};
        \node [place, minimum width=1.5cm] (false0) at ( 1,  0) {$\falseval(0)$};
        \node [place, minimum width=1.5cm] (true0)  at ( 2,  0) {$\trueval(0)$};

        \node [place, minimum width=1.5cm] (p1)     at (-6, -8.0) {$p(1)$};
        \node [place, minimum width=1.5cm] (q1)     at (-5, -8.3) {$q(1)$};
        \node [place, minimum width=1.5cm] (r1)     at (-4, -8.6) {$r(1)$};
        \node [place, minimum width=1.5cm] (false1) at (-3, -8.9) {$\falseval(1)$};
        \node [place, minimum width=1.5cm] (true1)  at (-2, -9.2) {$\trueval(1)$};

        \node [place, minimum width=1.5cm] (p2)     at ( 2, -9.2) {$p(2)$};
        \node [place, minimum width=1.5cm] (q2)     at ( 3, -8.9) {$q(2)$};
        \node [place, minimum width=1.5cm] (r2)     at ( 4, -8.6) {$r(2)$};
        \node [place, minimum width=1.5cm] (false2) at ( 5, -8.3) {$\falseval(2)$};
        \node [place, minimum width=1.5cm] (true2)  at ( 6, -8.0) {$\trueval(2)$};

        \node [transition] (pp) at (-6, -4) {}
          edge [dashed, pre]  (p0)
          edge [dashed, pre]  (false0)
          edge [dashed, post] (q0)
          edge [dashed, post] (true0)
          edge [dashed, <->, shorten >=1pt, shorten <=1pt]  (p1)
          edge [dashed, <->, shorten >=1pt, shorten <=1pt]  (p2)
          ;

        \node [transition] (pqfrf) at (-4.5, -4) {}
          edge [dashed, pre]  (p0)
          edge [dashed, pre]  (false0)
          edge [dashed, post] (q0)
          edge [dashed, post] (true0)
          edge [dashed, <->, shorten >=1pt, shorten <=1pt]  (p1)
          edge [dashed, pre]  (q2)
          edge [dashed, <->, shorten >=1pt, shorten <=1pt]  (false2)
          edge [dashed, post] (r2)
          ;

        \node [transition] (pqtrf) at (-3, -4) {}
          edge [dashed, pre]  (p0)
          edge [dashed, pre]  (false0)
          edge [dashed, post] (q0)
          edge [dashed, post] (true0)
          edge [dashed, <->, shorten >=1pt, shorten <=1pt]  (p1)
          edge [dashed, pre]  (q2)
          edge [dashed, pre]  (true2)
          edge [dashed, post] (false2)
          edge [dashed, post] (r2)
          ;

        \node [transition] (qfrfp) at (-1.5, -4) {}
          edge [dashed, pre]  (p0)
          edge [dashed, pre]  (false0)
          edge [dashed, post] (q0)
          edge [dashed, post] (true0)
          edge [dashed, pre]  (q1)
          edge [dashed, <->, shorten >=1pt, shorten <=1pt]  (false1)
          edge [dashed, post] (r1)
          edge [dashed, <->, shorten >=1pt, shorten <=1pt]  (p2)
          ;

        \node [transition] (qfrfqfrf) at (0, -4) {}
          edge [dashed, pre]  (p0)
          edge [dashed, pre]  (false0)
          edge [dashed, post] (q0)
          edge [dashed, post] (true0)
          edge [dashed, pre]  (q1)
          edge [dashed, <->, shorten >=1pt, shorten <=1pt]  (false1)
          edge [dashed, post] (r1)
          edge [dashed, pre]  (q2)
          edge [dashed, <->, shorten >=1pt, shorten <=1pt]  (false2)
          edge [dashed, post] (r2)
          ;

        \node [transition, red, thick] (qfrfqtrf) at (1.5, -4) {}
          edge [pre,  red, thick]  (p0)
          edge [pre,  red, thick]  (false0)
          edge [post, red, thick] (q0)
          edge [post, red, thick] (true0)
          edge [pre, red, thick]  (q1)
          edge [<->, shorten >=1pt, shorten <=1pt, red, thick]  (false1)
          edge [post, red, thick] (r1)
          edge [pre,  red, thick]  (q2)
          edge [pre,  red, thick]  (true2)
          edge [post, red, thick]  (false2)
          edge [post, red, thick] (r2)
          ;

        \node [transition] (qtrfp) at (3, -4) {}
          edge [dashed, pre]  (p0)
          edge [dashed, pre]  (false0)
          edge [dashed, post] (q0)
          edge [dashed, post] (true0)
          edge [dashed, pre]  (q1)
          edge [dashed, pre]  (true1)
          edge [dashed, post] (false1)
          edge [dashed, post] (r1)
          edge [dashed, <->, shorten >=1pt, shorten <=1pt]  (p2)
          ;

        \node [transition] (qtrfqfrf) at (4.5, -4) {}
          edge [dashed, pre]  (p0)
          edge [dashed, pre]  (false0)
          edge [dashed, post] (q0)
          edge [dashed, post] (true0)
          edge [dashed, pre]  (q1)
          edge [dashed, pre]  (true1)
          edge [dashed, post] (false1)
          edge [dashed, post] (r1)
          edge [dashed, pre]  (q2)
          edge [dashed, <->, shorten >=1pt, shorten <=1pt]  (false2)
          edge [dashed, post] (r2)
          ;

        \node [transition] (qtrfqtrf) at (6, -4) {}
          edge [dashed, pre]  (p0)
          edge [dashed, pre]  (false0)
          edge [dashed, post] (q0)
          edge [dashed, post] (true0)
          edge [dashed, pre]  (q1)
          edge [dashed, pre]  (true1)
          edge [dashed, post] (false1)
          edge [dashed, post] (r1)
          edge [dashed, pre]  (q2)
          edge [dashed, pre]  (true2)
          edge [dashed, post] (false2)
          edge [dashed, post] (r2)
          ;

      \end{tikzpicture}
    }
  \end{center}
\end{figure}

\begin{definition}
  A parameterized Petri net $\PN = \tuple{\places, \transitions}$ is a
  \emph{barrier crowd} if there is a finite set of transition patterns of the
  form $\tuple{\mathcal{I}, \mathbb{A}}$ such that for every instance $\PN(n)$
  the following condition holds: a pair $\tuple{P, Q}$ is a transition of
  $\PN(n)$ if{}f there exists a pattern $\tuple{\mathcal{I}, \mathbb{A}}$ and
  $i \in [n]$ such that:
  \begin{itemize}
    \item $P\cap(\places\times\set{i})=P_I\times\set{i}$ and
      $Q\cap(\places\times\set{i})=Q_I\times\set{i}$, where
      $\mathcal{I}=\tuple{P_I,Q_I}$.
    \item for every $j\ne{}i$ there is $\tuple{P_A,Q_A}\in\mathbb{A}$ such that
      $P\cap(\places\times\set{j})=P_A\times\set{j}$ and
      $Q\cap(\places\times\set{j})=Q_A\times\set{j}$.
  \end{itemize}
\end{definition}

Note that the number of transitions of $\PN(n)$ grows quickly in $n$, even
though the structure of the system remains simple, making parameterized
verification particularly attractive.

In the rest of the section we present an automatic parametrization procedure
for traps of barrier crowds. First we show that barrier crowds satisfy two
important structural properties.

Given a set of places $P \subseteq \places \times [n]$ and a permutation $\pi
\colon [n] \to [n]$, let $\pi(P)$ denote the set of places $\set{p(\pi(i)):
p(i) \in P}$. Given an index $0 \leq{} k < n$, let $\drop_{k,n}(P)$ denote the
set of places defined as follows: $p(i) \in \drop_{k,n}(P)$ if{}f either $0
\leq i < k$ and $p(i) \in P$, or $k < i \leq n-1$ and $p(i+1) \in P$.

\begin{definition}
  Let $\PN$ be a parameterized Petri net. A transition $\tuple{P_1, P_2}$ of
  $\PN(n)$ is:
  \begin{itemize}
    \item \emph{order invariant} if $\tuple{\pi(P_{1}),  \pi(P_{2})}$ is also a
      transition of $\PN(n)$ for every permutation $\pi \colon [n] \to [n]$.
    \item \emph{homogeneous} if there is an index $0 \leq i < n$ such that for
      every $k \in [n]\setminus\set{i}$ the pair $\tuple{\drop_{k,n}(P_1),
      \drop_{k,n}(P_2)}$ is a transition of $\PN(n-1)$.
  \end{itemize}
  $\PN$ is homogeneous (order invariant) if all transitions of all instances
  $\PN(n)$ is homogeneous (order invariant).
\end{definition}

Intuitively, order invariance indicates that processes are indistinguishable.
Homogeneity indicates that transitions in the large instances are not
substantially different from the transitions in the smaller ones.

\begin{proposition}
  Barrier crowds are order invariant and homogeneous.
\end{proposition}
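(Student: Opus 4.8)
The plan is to read both properties directly off the characterization of barrier-crowd transitions. The crucial observation is that a transition $\tuple{P_1,P_2}$ of $\PN(n)$ is completely determined by a \emph{witness} consisting of a transition pattern $\tuple{\mathcal I,\mathbb A}$ (with $\mathcal I=\tuple{P_I,Q_I}$), an \emph{initiator} index $i\in[n]$, and, for every non-initiator index $j\ne i$, an accepting reaction $\tuple{P_A^j,Q_A^j}\in\mathbb A$, so that $P_1\cap(\places\times\set{i})=P_I\times\set{i}$, $P_1\cap(\places\times\set{j})=P_A^j\times\set{j}$ for $j\ne i$, and symmetrically for $P_2$ with $Q_I,Q_A^j$. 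Both claims will then follow by fixing such a witness and transporting it: along the permutation $\pi$ for order invariance, and to the smaller instance for homogeneity.

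For order invariance I would fix a transition $\tuple{P_1,P_2}$ of $\PN(n)$ with a witness as above and a permutation $\pi\colon[n]\to[n]$, and claim that the \emph{same} pattern witnesses $\tuple{\pi(P_1),\pi(P_2)}$, now with initiator $\pi(i)$ and with the reaction $\tuple{P_A^j,Q_A^j}$ assigned to index $\pi(j)$. The only thing to verify is that restriction to a single index commutes with $\pi$: since $p(m)\in\pi(P)$ if{}f $p(\pi^{-1}(m))\in P$, we get $\pi(P_1)\cap(\places\times\set{\pi(\ell)})=\set{p(\pi(\ell)):p(\ell)\in P_1}$ for every $\ell$, i.e.\ exactly the index-$\ell$ name-set of $P_1$ relabelled to index $\pi(\ell)$. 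Applying this with $\ell=i$ recovers the initiator clause $P_I\times\set{\pi(i)}$ and with $\ell=j$ recovers each reaction clause $P_A^j\times\set{\pi(j)}$; the postset is handled identically. Hence $\tuple{\pi(P_1),\pi(P_2)}$ is a transition of $\PN(n)$, which is order invariance.

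For homogeneity I would again fix a witness for $\tuple{P_1,P_2}$ and take the distinguished index to be the initiator $i$ itself. Given $k\in[n]\setminus\set{i}$, the operation $\drop_{k,n}$ deletes a \emph{reactor} and renumbers the survivors, sending the initiator to index $i'=i$ if $k>i$ and to $i'=i-1$ if $k<i$. I would then show that $\tuple{\drop_{k,n}(P_1),\drop_{k,n}(P_2)}$ is the transition of $\PN(n-1)$ generated by the same pattern $\tuple{\mathcal I,\mathbb A}$ with initiator $i'$, each surviving reactor keeping its reaction from $\mathbb A$. This reduces to checking that $\drop_{k,n}$ preserves the per-index restrictions under its renumbering ($\ell\mapsto\ell$ for $\ell<k$ and $\ell+1\mapsto\ell$ for $\ell\geq k$), which is immediate from the definition. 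The case $n=1$ is vacuous, since $[n]\setminus\set{i}=\emptyset$ for $i=0$.

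The essential design point — and the only place where a naive choice would break — is that the distinguished index must be the initiator: the surviving processes all carry reactions drawn from $\mathbb A$, which in general need not match the initiator component $\mathcal I$, so dropping the initiator need not produce a transition of $\PN(n-1)$. Choosing the special index to be the initiator sidesteps this entirely. Consequently the main effort is bookkeeping rather than conceptual, namely tracking the index renumbering performed by $\drop_{k,n}$ and by $\pi$ and confirming that the restrictions of $P_1,P_2$ to each index land on the intended pattern component after relabelling; these are routine case distinctions on whether an index lies below, at, or above the deleted (respectively permuted) position, and I expect no difficulty beyond careful indexing.
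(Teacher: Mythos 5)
Your proof is correct and follows essentially the same route as the paper's: order invariance is shown by re-instantiating the witnessing pattern $\tuple{\mathcal{I},\mathbb{A}}$ with initiator $\pi(i)$ and each reaction transported to index $\pi(j)$, and homogeneity by taking the distinguished index to be the initiator, dropping only reactors, and re-instantiating the same pattern in $\PN(n-1)$ with the same case distinction on whether $k$ lies above or below $i$. Your closing observation that the distinguished index must be the initiator (since dropping the initiator need not yield a transition) is precisely the implicit design choice underlying the paper's argument.
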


\begin{proof}
  Let $\PN$ be a barrier crowd. For order invariance, let $\tuple{P, Q}$ be a
  transition of an instance $\PN(n)$, and let $\pi \colon [n] \to [n]$ be a
  permutation. We show that $\tuple{\pi(P),  \pi(Q)}$ is also a transition of
  $\PN(n)$. By the definition of barrier crowds there is a pattern
  $\tuple{\mathcal{I}, \mathbb{A}}$, where $\mathcal{I}=\tuple{P_I,Q_I}$, and
  an index $i$ such that
  \begin{itemize}
    \item $P\cap(\places\times\set{i})=P_I\times\set{i}$ and
      $Q\cap(\places\times\set{i})=Q_I\times\set{i}$; and
    \item for every $j\ne{}i$ there is $\tuple{P_A^j,Q_A^j}\in\mathbb{A}$ such
      that $P\cap(\places\times\set{j})=P_A^j \times\set{j}$ and
      $Q\cap(\places\times\set{j})=Q_A^j \times\set{j}$.
  \end{itemize}
  Intuitively, by the definition of barrier crowds, the result of instantiating
  $\tuple{\mathcal{I}, \mathbb{A}}$ with the index  $\pi(i)$ instead of $i$ is
  also a transition of  $\PN(n)$. Formally, the pair $\tuple{P', Q'}$ given by
  \begin{itemize}
    \item $P'\cap(\places\times\set{\pi(i)})=P_I\times\set{\pi(i)}$ and
      $Q'\cap(\places\times\set{\pi(i)})=Q_I\times\set{\pi(i)}$, and
    \item $P'\cap(\places\times\set{\pi(j)})=P_A^j\times\set{\pi(j)}$ and
      $Q'\cap(\places\times\set{\pi(j)})=Q_A^j\times\set{\pi(j)}$ for every
      $j\ne{}i$
  \end{itemize}
  is a transition of $\PN(n)$. By construction we have $\pi(P) = P'$ and
  $\pi(Q) = Q'$. So $\tuple{\pi(P),  \pi(Q)}$ is a transition of $\PN(n)$, and
  we are done.

  For homogeneity, let $\tuple{P, Q}$ be a transition of $\PN(n)$. Let
  $\tuple{\mathcal{I}, \mathbb{A}}$ with  $\mathcal{I} = \tuple{P_I, Q_I}$ be
  the pattern of which $\tuple{P, Q}$ is an instance, i.e., $P \cap
  (\places\times\set{i}) = P_{I}\times\set{i}$ and $Q \cap
  (\places\times\set{i}) = Q_{I}\times\set{i}$ for every $i \in [n]$. By the
  definition of barrier crowds, for every $j \in [n]\setminus\set{i}$ there is
  $\tuple{P_{I}^{j}, Q^{j}_{I}} \in\mathbb{A}$ such that $P \cap (\places
  \times \set{j}) = P^{j}_{I}$ and $Q \cap (\places \times \set{j}) =
  Q^{j}_{I}$. For every $k \in [n]\setminus\set{i}$, we carefully instantiate
  $\tuple{\mathcal{I}, \mathbb{A}}$ to obtain a transition $\tuple{P', Q'}$
  satisfying $\tuple{P', Q'}= \tuple{\drop_{k,n}(P), \drop_{k,n}(Q)}$, which
  concludes the proof. We need to consider two cases:
  \begin{itemize}
    \item If $i < k$, then:
      \begin{itemize}
        \item $P' \cap (\places\times\set{i}) = P_{I}$ and $Q' \cap
          (\places\times\set{j}) = P^{j}_{I}$ for all $i \neq j < k$;
        \item $P' \cap (\places\times\set{j}) = P^{j}_I$ and $P'\cap
          (\places\times\set{j}) = P^{j-1}_{I}$ for all $k < j$;
        \item $Q' \cap (\places\times\set{i}) = Q_{I}$ and $Q' \cap
          (\places\times\set{j}) = Q^{j}_{I}$ for all $i \neq j < k$; and
        \item  $Q' \cap (\places\times\set{j}) = Q^{j}_{I}$ and $Q'\cap
          (\places\times\set{j}) = Q^{j-1}_{I}$ for all $k <  j$.
      \end{itemize}
    \item If $k < i$, then $\tuple{P', Q'}$ is defined as for the case $i < k$,
      with the exception that now $P' \cap (\places\times\set{i-1}) = P_{I}$
      and $Q' \cap \places\times\set{i-1} = Q_{I}$.
  \end{itemize}
  In both cases $\tuple{P', Q'}$ is a transition in $\PN(n-1)$ by definition of
  barrier crowds. $P' = \drop_{k, n}(P)$ and $Q' = \drop_{k, n}(Q)$ which
  concludes the argument.
\end{proof}

\subsection{Parametrizing traps for barrier crowds}
By order invariance, if $Q$ is a trap of an instance, say $\PN(n)$, then
$\pi(Q)$ is also a trap for every permutation $\pi$. The set of all traps that
can be obtained from $Q$ by permutations can be described as a multiset $\mathcal{Q} \colon
2^\places \rightarrow [n]$. For example, assume $\places=\set{p, q}$, $n=5$,
and $Q= \set{p(0), p(1), q(1), p(2), q(2), q(4)}$. Then $\mathcal{Q}(\set{p,q})
= 2$ (because of indices $1$ and $2$), $\mathcal{Q}(\set{p})
=\mathcal{Q}(\{q\})=1$ (index 0 and 4, respectively), and
$\mathcal{Q}(\emptyset) = 1$ (index 3). Any assignment of indices to the
elements of $\mathcal{Q}$ results in a trap. We call $\mathcal{Q}$ the
\emph{trap family of $Q$}.

\begin{proposition}
\label{prop:crowds}
Let $\PN$ be an order invariant and homogeneous parameterized Petri net, let
  $Q$ be a trap of an instance $\PN(n)$, and let $\mathcal{Q} \colon 2^\places
  \rightarrow [n]$ be the trap family of $Q$. We have:
\begin{itemize}
  \item If $\mathcal{Q}(\emptyset) \geq 1$ and $\mathcal{Q}'$ is obtained from
    $\mathcal{Q}$ by increasing the multiplicity of $\emptyset$, then
    $\mathcal{Q}'$ is also a trap family of another instance of $\PN$.
  \item For every $S \in  2^\places$, if $\mathcal{Q}(S) \geq 2$ and
    $\mathcal{Q}'$ is obtained from $\mathcal{Q}$ by increasing the
    multiplicity of $S$, then  $\mathcal{Q}'$ is also a trap family of another
    instance of $\PN$.
  \end{itemize}
\end{proposition}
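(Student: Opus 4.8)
The plan is to prove both items uniformly by contradiction, using homogeneity to push a hypothetical counterexample down from $\PN(n+1)$ to $\PN(n)$, where $\mathcal{Q}$ is already known to be a trap family. Two reductions simplify the task. First, by order invariance it suffices to exhibit \emph{one} realization of $\mathcal{Q}'$ that is a trap: any other realization is obtained by permuting indices, hence is also a trap, so $\mathcal{Q}'$ is a trap family. Second, I use the standard characterization that a set of places $R$ fails to be a trap exactly when some transition $t$ satisfies $\preset{t}\cap R\neq\emptyset=\postset{t}\cap R$, the same form used in Lemma~\ref{lem:trap-by-one}.

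The only bookkeeping I would set up first concerns the drop operation. Writing $\rho_k\colon[n+1]\setminus\set{k}\to[n]$ for the order-preserving relabeling, one has $\drop_{k,n+1}(R)=\set{p(\rho_k(j)):p(j)\in R,\ j\neq k}$, so $\drop_{k,n+1}$ is a bijection on places away from index $k$. This immediately yields the identity $\drop_{k,n+1}(A)\cap\drop_{k,n+1}(B)=\drop_{k,n+1}(A\cap B)$ for all $A,B$, which is all the computation the argument needs.

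Now let $\mathcal{Q}'$ add one copy of $S$ to $\mathcal{Q}$ (with $S=\emptyset$ in the first item), so $\mathcal{Q}'$ is a family for $\PN(n+1)$. Pick any realization $Q'$ and suppose it is not a trap, witnessed by a transition $t$ of $\PN(n+1)$ with $\preset{t}\cap Q'\neq\emptyset=\postset{t}\cap Q'$; fix $p(j)\in\preset{t}\cap Q'$. Homogeneity gives an index $i$ such that for every $k\neq i$ the pair $t':=\tuple{\drop_{k,n+1}(\preset{t}),\drop_{k,n+1}(\postset{t})}$ is a transition of $\PN(n)$. The crux is to choose $k$ to be an index carrying local state $S$ with $k\neq i$ and $k\neq j$. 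Then: (i) $\drop_{k,n+1}(Q')$ realizes $\mathcal{Q}'-\set{S}=\mathcal{Q}$, hence is a trap; (ii) $t'$ is a transition of $\PN(n)$; and (iii) by the intersection identity, $\postset{t'}\cap\drop_{k,n+1}(Q')=\drop_{k,n+1}(\postset{t}\cap Q')=\emptyset$, while $p(\rho_k(j))\in\preset{t'}\cap\drop_{k,n+1}(Q')$ since $j\neq k$. Thus $t'$ witnesses that the $\mathcal{Q}$-realization $\drop_{k,n+1}(Q')$ is not a trap, a contradiction.

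The main obstacle is purely the counting step that guarantees such a $k$ exists, and this is exactly where the two multiplicity hypotheses enter. For the second item, $S$ occupies at least $\mathcal{Q}(S)+1\geq 3$ indices of $Q'$, and we must avoid only $i$ and $j$, so an $S$-index $k\notin\set{i,j}$ survives. For the first item, $S=\emptyset$, so the witness index $j$ is automatically nonempty (it carries $p$) and hence already differs from every empty index; we need only avoid $i$, and the $\mathcal{Q}(\emptyset)+1\geq 2$ empty indices of $Q'$ suffice. By order invariance this establishes that $\mathcal{Q}'$ is a trap family of $\PN(n+1)$. If \enquote{increasing the multiplicity} means adding several copies, the conclusion follows by iterating, since each step preserves (indeed strengthens) the relevant hypothesis.
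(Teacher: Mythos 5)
Your proof is correct and follows essentially the same route as the paper's: assume some realization of $\mathcal{Q}'$ admits a violating transition in $\PN(n+1)$, use homogeneity to drop an index carrying $S$ (avoiding the distinguished index of homogeneity), and project both the transition and the place set down to $\PN(n)$, contradicting that $\mathcal{Q}$ is a trap family. Your single counting condition $k\notin\set{i,j}$ — available since there are at least $3$ copies of a nonempty $S$, respectively at least $2$ copies of $\emptyset$ (with $j$ automatically distinct from every empty index) — is a mild streamlining that collapses the paper's two-case analysis for the nonempty-$S$ item into one step.
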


\begin{proof}
  First consider increasing the multiplicity of $\emptyset$. It suffices to
  consider the case of the family  $\mathcal{Q}'$ obtained from $\mathcal{Q}$
  by setting $\mathcal{Q}'(\emptyset) = \mathcal{Q}(\emptyset) + 1$, since the
  general statement follows by induction in a straightforward manner. Assume
  that $\tuple{n + 1, Q'}$ is not a trap in $\PN(n+1)$, but has the
  multiplicities of $\mathcal{Q}'$. Let $t'$ be a transition of $\PN(n+1)$ such
  that $Q' \cap \preset{t'} \neq \emptyset = Q' \cap \postset{t'}$. Since
  $\mathcal{Q}'(\emptyset) \geq 2$, there are at least two distinct indices $i,
  k$ such that $Q' \cap (\places\times\set{i, k}) = \emptyset$. By homogeneity
  of $\PN$ we can choose $i, k$ so that  a transition $t$ in $\PN(n)$ satisfies
  $\preset{t} = \drop_{k, n}(\preset{t'})$ and $\postset{t} = \drop_{k,
  n}(\postset{t'})$. Further, let $Q = \drop_{k, n}(Q')$. Note that $Q$ is an
  instance of the trap family $\mathcal{Q}$. However, $\preset{t} \cap Q \neq
  \emptyset = \postset{t} \cap Q$ by the definition of $t$ and $\drop_{k, n}$
  in contradiction to $\mathcal{Q}$ being a trap family.

  In the case of increasing the multiplicity of a non-empty set $S$ we know
  that $\mathcal{Q}(S) \geq 2$ and $\mathcal{Q'}(S) = \mathcal{Q}(S) + 1 \geq
  3$. The argument is analogous to the previous case. First we assume
  $\tuple{n+1, Q'}$ is an instance of $\mathcal{Q'}$ that is not a trap. For
  $Q'$, let $k_{1}, k_{2}, k_{3}$ be three distinct indices in $[n+1]$ such
  that $Q'\cap \places\times\set{k_{1}, k_{2}, k_{3}} = S \times \set{k_{1},
  k_{2}, k_{3}}$. Then, we find a transition $t'$ in $\PN(n+1)$ witnessing that
  $Q'$ is not a trap. We consider two cases:
  \begin{itemize}
    \item $\preset{t'} \cap S \times \set{k_{1}, k_{2}, k_{3}} = \emptyset$. \\
      By homogeneity, there is $k \in \set{k_{1}, k_{2}, k_{3}}$ such that the
      result of applying the $\drop_{k, n}$ operation to $t'$ is a transition
      $t$ of $\PN(n)$. The set $Q = \drop_{k, n}(Q')$ is a set of places of
      $\PN(n)$ with trap family $\mathcal{Q}$. We have $\preset{t} \cap Q \neq
      \emptyset$ since $\preset{t'} \cap Q' \neq \emptyset$ and $Q' \cap
      (\places \times \set{k}) = \emptyset$; further, $\postset{t} \cap Q =
      \emptyset$ since $\postset{t'} \cap Q' = \emptyset$. So $Q$ is not a
      trap, contradicting the assumption.

    \item $\preset{t'} \cap S \times \set{k_{1}, k_{2}, k_{3}} \neq \emptyset$.
      \\
      By homogeneity there are $k_{1}', k_{2}' \in \set{k_{1}, k_{2}, k_{3}}$
      such that the result of applying $\drop_{k_{1}', n}$ and $\drop_{k_{2}',
      n}$ to $t'$ are two transitions $t_{1}$ and $t_{2}$ of $\PN(n)$. Since
      $\postset{t'}\cap Q' = \emptyset$ we also have $\postset{t_{1}}\cap Q =
      \emptyset$ and $\postset{t_{1}} \cap Q = \emptyset$. Let $k_{o}$ denote
      the only element in $\set{k_{1}, k_{2}, k_{3}} \setminus \set{k_{1}',
      k_{2}'}$. If $S \times \set{k_{o}} \cap \preset{t'} \neq \emptyset$ then
      $\preset{t_{1}} \cap Q \neq \emptyset$ and $\preset{t_{1}} \cap Q \neq
      \emptyset$ because $k_{o}$ is not dropped. If, on the other hand, $S
      \times \set{k_{o}} \cap \preset{t'} = \emptyset$, then either $S \times
      \set{k_{1}} \cap \preset{t'} \neq \emptyset$ or $S \times \set{k_{2}}
      \cap \preset{t'} \neq \emptyset$. By symmetry we can assume w.l.o.g. $S
      \times \set{k_{1}} \cap \preset{t'} \neq \emptyset$. Then $\preset{t_{2}}
      \cap Q \neq \emptyset$. So $Q$ is not a trap although its trap family is
      $\mathcal{Q}$, which contradicts the assumption.
  \end{itemize}
\end{proof}

Proposition \ref{prop:crowds} leads to a parameterization procedure for barrier
crowds. Given a trap $Q$ of some instance $\PN(n)$ and its trap family
$\mathcal{Q}$, consider all multisets obtained from $\mathcal{Q}$ by applying
the operations of Proposition \ref{prop:crowds}. We call this set of multisets
the \emph{extended trap family} of $Q$. Observe that $\mathcal{Q}$ represents a
set of traps of $\PN(n)$, while the extended family represents a set of traps
across all instances $\PN(n')$ with $n' \geq n$.

Give an indexed trap $\interp{T}=\tuple{n, Q}$,  we choose the formula
$\Param_\interp{T}(\Xv)$ so that its models correspond to the traps of the
extended family of $Q$. For this, we capture the minimal required
multiplicities of $\tuple{n, Q}$ by quantifying for every $S\subseteq \places$
with $\mathcal{Q}(S) > 0$ indices $\bm{i}_{S, 1}, \ldots,
\bm{i}_{S, \mathcal{Q}(S)}$ for which precisely the places in $S$ are marked.
Making all indices introduced this way pairwise distinct ensures that any model
of the formula at least covers the multiset $\mathcal{Q}$. Additionally, we can
capture that the subset $S$ of $\places$ which are marked in every other index
are chosen such that Proposition~\ref{prop:crowds} ensures that we still obtain
a trap.

\begin{equation*}
  \begin{aligned}
    \Param_\interp{T}(\bm{n}, \Xv) \coloneqq
    \exists_{S\subseteq\places}
      \bm{i}_{S,1}, \ldots, \bm{i}_{S,\mathcal{Q}(S)} \colon \Bigg\{
          \left(
          \bigwedge_{(S,k)\ne(S',k')}
          (
          \bm{i}_{S,k}\ne{}\bm{i}_{S',k'}
          )
          \right)
          \wedge \\
          \forall \bm{j} \colon \bm{j} < \bm{n} \rightarrow
          \Bigg[
            \left(
            \bigvee\limits_{S\subseteq\places, k = 1, \ldots, \mathcal{Q}(S)}
            \left(
            \bm{j}=\bm{i}_{S,k} \wedge
          \left(
            \begin{aligned}
              &\bigwedge_{p \in S} \bm{j} \in \Xv_{p}\\
              \land &\bigwedge_{p \in \places\setminus S} \bm{j} \notin \Xv_{p}
            \end{aligned}
          \right)
            \right)
            \right)
            \vee \\
            \left(
            \left(\bigwedge_{S\subseteq\places,k = 1, \ldots, \mathcal{Q}(S)}
              \bm{j}\ne{}\bm{i}_{S,k}\right)
            \wedge
              \bigvee_{
                \substack{
                  \emptyset \neq S\subseteq\places\colon\mathcal{Q}(S)\ge{}2\\
                  S = \emptyset\colon\mathcal{Q}(S)\ge{}1}
                }
                \left(
                  \begin{aligned}
                    &\bigwedge_{p \in S} \bm{j} \in \Xv_{p}\\
                    \land &\bigwedge_{p \in \places\setminus S} \bm{j} \notin \Xv_{p}
                  \end{aligned}
                \right)
            \right)
            \Bigg]
          \Bigg\}.
  \end{aligned}
\end{equation*}

We immediately get:

\begin{theorem}
  \label{thm:crowds}
  Let $\PN=\tuple{\places, \transitions}$ be a barrier crowd and let
  $\tuple{n, Q}$ be a local indexed trap of $\PN(n)$. Then every model of the
  formula $\Param_\interp{T}(\bm{n}, \Xv)$ defined above is an indexed trap of
  $\PN$.
\end{theorem}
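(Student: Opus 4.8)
The plan is to reduce the statement directly to Proposition~\ref{prop:crowds}, treating the formula $\Param_\interp{T}$ as nothing more than a syntactic description of the extended trap family of $Q$. First I would take an arbitrary model $\mu$ of $\Param_\interp{T}(\bm{n}, \Xv)$, set $k = \mu(\bm{n})$, and let $P$ be the set of places of $\PN(k)$ encoded by $\mu$, i.e.\ $p(j) \in P$ if{}f $j \in \mu(\Xv_p)$. The goal is to show that $P$ is a trap of $\PN(k)$, and I would do this by identifying the trap family $\mathcal{Q}_P$ of $P$ and proving that it arises from $\mathcal{Q}$ by finitely many applications of the two operations of Proposition~\ref{prop:crowds}. (The hypothesis that $\tuple{n,Q}$ is \emph{local} plays no role in the argument; it is inherited from the surrounding presentation.)

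The key step is a careful reading of $\mu$. The formula partitions the indices in $[k]$ into the \emph{reserved} indices $\bm{i}_{S,\ell}$ and the remaining \emph{filler} indices. The first conjunct forces the reserved indices to be pairwise distinct, and each reserved index $\bm{i}_{S,\ell}$ is marked in $P$ exactly on the places of $S$; hence for every $S \subseteq \places$ there are at least $\mathcal{Q}(S)$ indices whose marked set is precisely $S$, so $\mathcal{Q}_P(S) \geq \mathcal{Q}(S)$ and $\mathcal{Q}_P$ dominates the base family $\mathcal{Q}$. Every filler index $j$ falls under the second disjunct, which forces its marked set to be some $S$ with either $S = \emptyset$ and $\mathcal{Q}(\emptyset) \geq 1$, or $S \neq \emptyset$ and $\mathcal{Q}(S) \geq 2$. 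Consequently $\mathcal{Q}_P$ is obtained from $\mathcal{Q}$ by adding, one per filler index, a copy of a set that is admissible for exactly the two operations allowed in Proposition~\ref{prop:crowds}.

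I would then finish by induction on the number of filler indices. Each added copy only increases the multiplicity of its set $S$, so the guard $\mathcal{Q}(\emptyset) \geq 1$ (respectively $\mathcal{Q}(S) \geq 2$) that held initially continues to hold after every addition, and Proposition~\ref{prop:crowds} therefore remains applicable at each step regardless of the order in which the copies are processed. This yields that $\mathcal{Q}_P$ is a trap family of an instance of $\PN$. Since a multiset being a trap family means, by order invariance, that every index-assignment realizing it is a trap, and $P$ is one such realization, $P$ is a trap of $\PN(k)$, as required. The only genuine work is the bookkeeping of the second paragraph: verifying that the reserved indices account for the full base multiplicities $\mathcal{Q}(S)$ and that the filler indices contribute only admissible sets. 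Once this correspondence between models of $\Param_\interp{T}$ and the extended trap family is pinned down, the conclusion is immediate from Proposition~\ref{prop:crowds}, which is exactly why the theorem can be claimed to follow at once.
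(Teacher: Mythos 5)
Your proposal is correct and is essentially the paper's own argument: the paper states Theorem~\ref{thm:crowds} with \enquote{We immediately get}, treating it as a direct consequence of Proposition~\ref{prop:crowds} together with the correspondence (described informally just before the formula) between models of $\Param_\interp{T}(\bm{n},\Xv)$ and the extended trap family of $Q$ --- precisely the reserved-index/filler-index bookkeeping and the induction on added copies that you spell out. Your side remark that locality of $\tuple{n, Q}$ is never used is also consistent with the paper: locality (i.e.\ $\mathcal{Q}(\emptyset) \geq 1$) only guarantees that the formula has models for arbitrarily large instances, not that those models are traps.
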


\begin{remark}
  This theorem applies to all order invariant and homogeneous systems. It is
  easy to see that order invariance and homogeneity of a given parameterized
  net can be expressed in WS1S and verified automatically.
\end{remark}

\section{Generalized Parameterized Petri nets}
\label{sec:gppn}
Recall that the set of places of a parameterized net has the form $\places
\times [n]$, i.e., $n$ copies of the set $\places$ of place names. Intuitively,
the net $N_n$ consists of $n$ communicating processes, each of them with (a copy
of) $\places$ as states.

Unfortunately, this setting is not powerful enough to model many classical
distributed algorithms. In particular, it cannot model any parameterized mutual
exclusion algorithm, like Dekker's, Dijkstra's, Knuth's and others
\cite{CooperatingSequentialProcesses}. The reason is that in all these
algorithms agents need to execute loops in which they inspect the current value
of a flag in all other agents. We explain this point in detail taking
Dijkstra's mutual exclusion algorithm as an example.

\begin{example}
  \label{ex:dijkstra-spec}
  In Dijkstra's algorithm, each agent maintains a boolean variable $\flag$ that
  indicates whether the agent wants to access the critical section or not. The
  flag is initially set to $\falseval$. At any moment the agent $i$ may set it
  to $\trueval$, after which agent $i$ iteratively inspects the $\flag$
  variable of the other agents. Crucially, the inspection takes $n$ atomic
  steps, one for each agent. If some flag has value $\trueval$, then the  agent
  sets its $\flag$ to $\falseval$ and starts over; if all flags have value
  $\falseval$ (at the respective times at which the agent inspects them), then
  the agent moves to the critical section. If we assume that agents have
  identities $0, 1, \ldots, n-1$, then agent $i$ can be modeled by the code
  shown in Figure \ref{fig:dijkstra}.

\begin{figure}
\begin{minipage}{\textwidth}
\begin{lstlisting}[language=C]
init: flag[i] = true;
for(j = 0; j < n; j++) {
  if(i != j and flag[j] == true) {flag[i] = false; goto init;}
}
/* critical section */
restart: flag[i] = false; goto init;
\end{lstlisting}
\end{minipage}
\caption{Pseudocode for Dijkstra's mutual exclusion algorithm for agent $i$.}
\label{fig:dijkstra}
\end{figure}

  When agent $i$ inspects the flag of agent $j$, we say that agent $i$
  \emph{points to} agent $j$. Assume that when an agent is not executing the
  loop the variable $j$ has a special value $\bot$ (points to null). At every
  moment in time the local state of each agent is determined by its current
  position in the code, the value of its flag, and the agent it is pointing to
  (or null). We distinguish six positions: $\stateinitial$ (corresponds to the
  label \lstinline{init} above); $\stateloop$ (before the \lstinline{for}
  loop); $\statelooping$ (before the \lstinline{if} statement);  $\statebreak$
  (before the body of the \lstinline{if} statement); $\statecrit$ (critical
  section); and $\statedone$ (label \lstinline{restart}). The flag has two
  values, and $j$ can have $n+1$ different values.

  In the Petri net model for an instance of Dijkstra's algorithm with $n$
  agents, each agent is assigned six places for the positions, two places for
  the values of $\flag$, and $n+1$ places for the values of $j$. The net has a
  total of $n \cdot (8 + n)$ places. The crucial difference with respect to the
  net for the philosophers is that the number of places \emph{per agent}
  depends on $n$.
\end{example}

Since some places now involve two agents (the places indicating that agent $i$
points to agent $j$), sets of places must be modeled as relations, which leads
to the definition of generalized parameterized nets:

\begin{definition}[Generalized Parameterized nets]
\label{def:gpn}
  A generalized parameterized net is a triple $\PN = \tuple{\places,
  \relations, \transitions}$, where $\places$ is a finite set of place names,
  $\relations$ is a finite set of relation names, and $\transitions(\bm{n},
  \Xv, \Uv, \Yv, \Sv)$ is a second-order formula over one first-order variable
  $\bm{n}$ which represents the considered size of the instance, a tuple $\Xv$
  and $\Yv$ of monadic second-order variables for each place name of $\places$,
  a tuple $\Uv$ and $\Sv$ of dyadic second-order variables for each relation
  name of $\relations$. Moreover, we require $\transitions$ to only quantify
  first-order and \emph{monadic} second-order variables.

  For every $n \geq 1$, the \emph{$n$-instance} of $\PN$ is the net $\PN(n) =
  \tuple{P_{n}, T_{n}}$ given by:
  \begin{itemize}
    \item $P_{n} = (\places \times [n])  \cup \left( \relations \times [n]
      \times ([n] \cup \set{\bot}) \right)$,
  \item $\tuple{P_{1} \cup R_{1}, P_{2} \cup R_{2}} \in T_{n}$ if and only if
    \enquote{$\transitions(n, P_{1}, R_{1}, P_{2}, R_{2})$} holds;
    i.e., if the interpretation $\mu$ given by
    \begin{equation*}
    \begin{aligned}
      &\mu(\Xv_{p}) = \set{i \in [n] \mid \tuple{p, i} \in P_{1}},\\
      &\mu(\Uv_{r}) = \set{\tuple{i, j} \in [n] \times \left( [n] \cup
      \set{\bot} \right) \mid \tuple{r, i, j}
        \in P_{1}},\\
          &\mu(\Yv_{p}) = \set{i \in [n] \mid \tuple{p, i} \in P_{2}},
          \text{ and }\\
      &\mu(\Sv_{r}) = \set{\tuple{i, j} \in [n] \times \left( [n] \cup
      \set{\bot} \right) \mid \tuple{r, i, j}
        \in P_{2}}.
    \end{aligned}
    \end{equation*}
      is a model of $\transitions$.
\end{itemize}
\end{definition}

We now define generalized parameterized Petri nets
\begin{definition}[Generalized parameterized Petri nets]
  A generalized Petri net is a pair $\tuple{\PN, \markings}$ where
  \begin{itemize}
    \item $\PN = \tuple{\places, \relations, \transitions}$ is a generalized
      net, and
    \item $\markings(\bm{n}, \Xv, \Uv)$ is a second-order formula over the
      first-order variable $\bm{n}$, monadic second-order variables $\Xv$,
      and dyadic second-order variables $\Uv$.
  \end{itemize}
  As for $\transitions$ before, we restrict $\markings$ to only quantify
  first-order variables and monadic second-order variables.
\end{definition}
As before, we consider $\tuple{\PN(n), M}$ to be the marked $n$-instance if
\enquote{$\markings(n, M)$ is true}; that is $M$ is a 1-safe marking of
$\PN(n)$ for which $\mu$ with $\mu(\bm{n}) = n$ and $\mu(\Xv_{p}) = \set{i \in
[n] \mid M(p(i)) = 1}$ and $\mu(\Uv_{r}) = \set{\tuple{i, j} \in [n] \cup
\left( [n] \cup \set{\bot} \right) \mid M(r(i, j)) = 1}$ satisfies $\mu \models
\markings$.

\begin{example}
  \label{ex:dijkstra}
  We model Dijkstra's mutual exclusion algorithm as a generalized parameterized
  Petri net $\tuple{\PN, \markings}$ where $\PN = \tuple{\places, \relations,
  \transitions}$. We define

  \begin{equation*}
    \places = \set{\stateinitial, \stateloop, \statelooping, \statebreak,
    \statecrit, \statedone} \cup \set{\idled, \trying}
    \qquad
    \relations = \set{\ptr}
  \end{equation*}

  The set $\places$ contains the positions in the code plus the elements
  $\idled$ and $\trying$, which are abbreviations for \enquote{the value of
  $\flag$ is $\trueval$}, and  \enquote{the value of $\flag$ is $\falseval$},
  respectively. The relation symbol $\ptr$ (for pointing) indicates which agent
  is pointing to which one. The set of places of the instance $\PN_n$ is
  $(\places \times [n]) \cup (\set{\ptr} \times [n] \times ([n] \cup
  \set{\bot})$.

  We only describe one part of the formula $\transitions$ modeling the
  transitions that correspond to some agent \emph{successfully} advancing in
  its loop, either because the agent inspects itself, or because another agent
  that has not set its $\flag$ variable to $\trueval$. For every ${\cal P}'
  \subseteq \places$, we first introduce an auxiliary formula indicating that a
  set of places only contains places of type  $\places' \subseteq \places$.
  \begin{equation*}
    \allfrom{{\cal P}'} = \bigwedge_{p \notin {\cal P}'}
    \Xv_{p} = \Yv_{p} = \emptyset
  \end{equation*}

  \begin{table}[t]
  \caption{Part of the formula $\transitions$ for Dijkstra's mutual exclusion
    algorithm.}
  \label{formula:dijkstra}
  \begin{equation*}
    \begin{aligned}
      \exists \bm{i}  ~.~ \exists \bm{j}  ~.~
      &                && \bm{i} < \bm{n}  \; \land \;\bm{j} \neq  \bot \land
                          1 \leq \bm{j} < \bm{n} \ \\[0.2cm]
      & {\ } \land &&  \Xv_{\statelooping} = \set{\bm{i}}  \land
                       \Uv_{\ptr} = \set{\tuple{\bm{i}, \bm{j}}}   \\[0.2cm]
      & \land && \left(\begin{aligned}
        &\bm{i} = \bm{j} < \bm{n} - 1 \land
        \left[\begin{aligned}
          &\Sv_{\ptr} = \set{\tuple{\bm{i}, \bm{j} + 1}}  \\
          {\ } \land {\ } & \Yv_{\statelooping} = \set{\bm{i}}
        \end{aligned}\right]
        && \land
        \allfrom{\set{\statelooping}}
        \\
        {\ } \lor {\ \ } &\bm{i} \neq \bm{j} < \bm{n} - 1 \land
        \left[\begin{aligned}
          &\Sv_{\ptr} = \set{\tuple{\bm{i}, \bm{j} + 1}}\\
          {\ } \land {\ } & \Xv_{\idled} = \set{\bm{j}} \\
         {\ } \land {\ } &\Yv_{\idled} = \set{\bm{j}} \\
          {\ } \land {\ } &\Yv_{\statelooping} = \set{\bm{i}}
        \end{aligned}\right]
        && \land
        \allfrom{\set{\statelooping, \idled}}
        \\
         {\ } \lor {\ \ } &\bm{i} = \bm{j} = \bm{n} - 1 \land
        \left[\begin{aligned}
          &\Yv_{\statecrit} = \set{\bm{i}}\\
          {\ } \land {\ } &\Yv_{\statelooping} = \emptyset\\
          {\ } \land {\ } &\Sv_{\ptr} = \set{\tuple{\bm{i}, \bot}}
        \end{aligned}\right]
        && \land
        \allfrom{\set{\statecrit, \statelooping} }
        \\
       {\ } \lor {\ \ } &\bm{i} \neq \bm{j} = \bm{n} - 1 \land
        \left[\begin{aligned}
          &\Yv_{\statecrit} = \set{\bm{i}}\\
          {\ } \land {\ } & \Yv_{\statelooping} = \emptyset\\
          {\ } \land {\ } & \Xv_{\idled} = \set{\bm{j}}\\
          {\ } \land {\ } & \Yv_{\idled} = \set{\bm{j}} \\
          {\ } \land {\ } &\Sv_{\ptr} = \set{\tuple{\bm{i}, \bot}}
        \end{aligned}\right]
        && \land
        \allfrom{\set{\begin{aligned}
          &\statecrit, \idled, \\
          &\statelooping
        \end{aligned}}}
      \end{aligned}\right)
    \end{aligned}
  \end{equation*}
  \end{table}
  The transitions are given by the formula shown in
  Equation~\ref{formula:dijkstra}. The formula states that there is a
  transition $\tuple{P_{1} \cup R_{1}, P_{2} \cup R_{2}}$ if
  \begin{itemize}
    \item $P_1 = \set{\statelooping(i)}$, $R_1 = \set{ \ptr(i, j)}$, $P_2 =
      \set{\statelooping(i)}$, and  $R_2 = \set{ \ptr(i, j+1)}$ for some
      $i = j < n-1$; or
    \item $P_1 = \set{\statelooping(i), \idled(j)}$, $R_1 = \set{\ptr(i, j)}$,
      $P_2 = \set{\statelooping(i), \idled(j)}$, and, for some $i \neq j <
      n-1$, $R_2 = \set{\ptr(i, j + 1)}$; or
    \item $P_1 = \set{\statelooping(i)}$, $R_1 = \set{\ptr(i, j)}$, $P_2 =
      \set{\statecrit(i)}$, and  $R_2 = \set{\ptr(i, \bot)}$ for $i = j = n-1$;
      or,
    \item for some $i \neq j = n-1$, $P_1 = \set{\statelooping(i), \idled(j)}$,
      $R_1 = \set{ \ptr(i, j)}$, $P_2 = \set{\statecrit(i),\idled(j)}$, and
      $R_2 = \set{ \ptr(i, \bot)}$.
  \end{itemize}

  Finally, the initial markings are given by:
  \begin{equation*}
    \markings(\bm{n}, \Xv, \Uv) = \left(
    \begin{aligned}
      &\Xv_{\stateinitial} = [\bm{n}] \land \Xv_{\idled} =
      [\bm{n}]\\
      \land &\bigwedge_{p \in \places \setminus \set{\stateinitial,
      \idled}} \Xv_{p} = \emptyset\\
      \land &\forall \bm{i} ~.~ \bm{i} < \bm{n} \rightarrow
      \left[ \forall \bm{j} ~.~ \tuple{\bm{i}, \bm{j}} \in \Uv_{\ptr}
      \leftrightarrow \bm{j} = \bot \right]
    \end{aligned}
    \right)
  \end{equation*}
\end{example}

\subsection{A CEGAR approach}
In Section \ref{sec:param-cegar} we have described a CEGAR loop for the
analysis of parameterized Petri nets.  We extend the approach to generalized
parameterized Petri nets.

Recall that the CEGAR loop takes a parameterized Petri net $\tuple{\PN,
\markings}$, and a safety property described by a formula $\Safe(\bm{n}, \Mv)$,
as inputs. The loop maintains a set $\trapset$ of traps of the instances of
$\PN$, initially empty. In every iteration in the loop, the procedure first
constructs the formula $\Param_\interp{T}(\bm{n}, \Xv)$, then the formula
$\Safety_{\trapset}$ of (\ref{formula:safetycheck}), and
then sends $\Safety_{\trapset}$ to a \acs{WS1S}-checker. If the checker returns
that $\Safety_{\trapset}$ holds, then every instance of $\PN$ satisfies the
safety property, and the loop terminates. Otherwise, the checker returns a
model $\interp{M} = \tuple{n, M}$ of the formula $\PotReach_\trapset(\bm{n},
\Mv) \wedge \neg \Safe(\bm{n}, \Mv)$, and searches for a witness trap $\Xv$
that is marked at every initial marking, but empty at $\interp{M}$, with the
help of the SAT-formula $\WitTrap_\interp{M}(n, \Xv)$ defined in
(\ref{formula:wtrap}).

Let us see how to extend the CEGAR loop to generalized parameterized Petri
nets. We assume again that the generalized parameterized Petri net belongs to a
class with a special topology that allows one to compute the formula
$\Param_\interp{T}$ for a given trap $T$. The obstacle is that the formula
$\Safety_{\trapset}$ no longer belongs to \acs{WS1S}. Indeed, since the places
of a generalized Petri net are of the form $p(i)$ or $r(i, j)$, in
(\ref{formula:safetycheck}) we have to add to the placeset parameters $\Xv$ and
$\Mv$ relationset parameters $\Uv$ and $\Lv$, i.e., sequences of dyadic
predicate symbols, one for each relation in $\relations$. For example,
$\Param_\interp{T}(\bm{n}, \Xv)$ becomes $\Param_\interp{T}(\bm{n}, \Xv, \Uv)$.

While the extension of \acs{WS1S} with dyadic predicates is no longer
decidable, we show that the problem of checking if $\Safety_{\trapset}$ is true
can be reduced to the validity problem of first-order logic when
$\Safety_{\trapset}$ is a universal formula, i.e., a formula in prenex normal
form in which a block of universal second-order quantifiers is followed by a
first-order formula. In this case it is easy to construct a first-order formula
FO($\Safety_{\trapset}$) such that $\Safety_{\trapset}$ is true if{}f
FO($\Safety_{\trapset}$) is valid. This allows us to use an automatic
first-order theorem prover to check validity of FO($\Safety_{\trapset}$), and,
therefore, the truth of $\Safety_{\trapset}$. The price to pay is that, if
$\Safety_{\trapset}$ does not hold, then the checker no longer returns a model
of $\PotReach_\trapset(\bm{n}, \Mv) \wedge \neg \Safe(\bm{n}, \Mv)$; instead,
the checker just does not terminate (in practice, it reaches a timeout). For
this reason, we replace the CEGAR loop by the following one, consisting of two
communicating processes:
\begin{itemize}
  \item Process 1 iteratively constructs the Petri nets $\PN(1)$, $\PN(2)$,
    $\PN(3)$, \dots, and uses  the CEGAR loop for ordinary Petri nets (see
    Section \ref{sec:non-param-cegar}) to compute sets $\trapset_1$,
    $\trapset_2$, $\trapset_3$, \dots of traps proving that $\PN(1)$, $\PN(2)$,
    $\PN(3)$, \dots, satisfy the safety property. Whenever this process
    computes a new trap, it passes it to Process 2.
  \item Process 2 maintains a set $\trapset$ of traps, initially empty. First,
    for every $T \in \trapset$ it constructs the formula
    $\Param_\interp{T}(\bm{n}, \Xv, \Uv)$. It then constructs the formula
    FO($\Safety_{\trapset}$), and passes to a first-order theorem prover. If
    the prover returns that FO($\Safety_{\trapset}$) is valid, then
    $\Safety_{\trapset}$ is true, and the safety property holds for all
    instances. If the prover reaches a timeout, then Process 2 waits for
    Process 1 to send a new trap $T$, adds $T$ to $\trapset$, and iterates.
\end{itemize}
In order to complete the description of this procedure we must explain 
\begin{itemize}
  \item How to construct FO($\Safety_{\trapset}$). For \emph{universal}
    formulas $\Safety_{\trapset}$  this is a standard syntax-guided procedure,
    that we sketch below.
  \item How to construct $\Param_\interp{T}(\bm{n}, \Xv, \Uv)$. In the next
    section we do this for \emph{inspection programs}, a topology in which we
    can model Dijkstra's algorithm and other mutual exclusion algorithms.
\end{itemize}

\paragraph{The formula FO($\Safety_{\trapset}$).}
A formula of WS1S with dyadic predicates is \emph{universal} if it is in
prenex normal form and has the form $\varphi := \forall \bm{X_1} \ldots
\bm{X_n} \; \forall \bm{U_1} \ldots \bm{U_m} \; \widehat{\varphi}$, where the
$\bm{X_i}$ and $\bm{U_j}$ are monadic and dyadic predicates, respectively, and
$\widehat{\varphi}$ does not contain any second-order quantifiers, i.e.,
$\widehat{\varphi}$ is a formula over the syntax

\begin{equation*}
  \begin{array}{rcll}
    t & := & \bm{x} \mid 0 \mid \logicnext(t) & \quad \\[0.1cm]
    \varphi & := & t_1 \leq t_2 \mid \bm{x} \in \bm{X} \mid \tuple{\bm{x}, \bm{y}} \in \bm{U} \mid \varphi_1 \wedge \varphi_2
    \mid \neg\varphi_1 \mid \exists \bm{x} \colon \varphi
  \end{array}
\end{equation*}

We describe the folklore result that for any universal sentence $\varphi$ there
is a formula FO($\varphi$) of first-order logic that is valid if{}f $\varphi$
holds. Applying the result to $\Safety_{\trapset}$ we obtain the desired
formula FO($\Safety_{\trapset}$). The signature of FO($\varphi$) replicates the
syntax above: it contains two constant symbols $0$ and $N$, a unary function
symbol $\logicnext$, a binary predicate $\leq$, a monadic predicate
$\textit{InX}_i(\bm{x})$ for every monadic second-order variable $\bm{X_i}$,
and a dyadic predicate $\textit{InU}_j(\bm{x},\bm{y})$ for every dyadic
second-order variable $\bm{U_j}$. FO($\varphi$) is of the form $\psi_0
\rightarrow \psi[\widehat{\varphi}]$. The sentence $\psi_0$ ensures that $\leq$
is a discrete linear order with minimal element $0$ and maximal element $N$,
that $\logicnext$ is irreflexive, injective, and respects $\leq$, i.e.,
$\forall x \forall y \colon x \leq y \rightarrow \logicnext(x) \leq
\logicnext(y)$. The formula $\psi[\widehat{\varphi}]$ is defined inductively on
the structure of $\widehat{\varphi}$ as follows:
\begin{itemize}
  \item if $\widehat{\varphi} = \bm{x}, 0, \logicnext(t), t_1 \leq t_2$, then
    $\psi[\widehat{\varphi}] = \widehat{\varphi}$.
  \item if $\widehat{\varphi} = \bm{x} \in \bm{X}, \tuple{\bm{x}, \bm{y}} \in
    \bm{U}$ then $\psi[\widehat{\varphi}] = \textit{InX}(\bm{x}),
    \textit{InU}(\bm{x},\bm{y})$, respectively.
  \item if $\widehat{\varphi} = \neg \widehat{\varphi}_1, \widehat{\varphi}_1
    \wedge \widehat{\varphi}_2$, then $\psi[\widehat{\varphi}] = \neg
    \psi[\widehat{\varphi}_1], \psi[\widehat{\varphi}_1] \wedge
    \psi[\widehat{\varphi}_2]$, respectively.
\end{itemize}
Intuitively, $\psi[\widehat{\varphi}]$ is the result of dropping all universal
second-order quantifiers from $\varphi$, and interpreting the set membership
symbol $\in$ as a unary or binary predicate. For example, we have

\begin{equation*}
  \begin{array}{rcll}
    \varphi & =  &                            & \forall \bm{U} \, \forall
    \bm{X} \, \forall \bm{x} \, \exists\bm{y} \colon (\bm{x} \in \bm{X} \land
    \tuple{\bm{x}, \bm{y}} \in \bm{U}) \rightarrow \bm{y} \in \bm{X} \\
    \text{FO}(\varphi) & = & \psi_0 \rightarrow & \forall \bm{x} \, \exists
    \bm{y} \colon (\textit{InX}(\bm{x}) \land \textit{InU}(\bm{x}, \bm{y}))
    \rightarrow \textit{InX}(\bm{y}) \ \ .
  \end{array}
\end{equation*}
It is easy to see that if \text{FO}($\varphi$) is valid, then $\varphi$ holds.
For the other direction one observes that if $\varphi$ holds then
$\psi[\widehat{\varphi}]$ holds in every model satisfying $\psi_0$.


In our implementation of this procedure we construct the formula
FO($\Safety_{\trapset}$) directly using the specific topology of looping
programs without taking the detour through this translation. This also allows
us to represent FO($\Safety_{\trapset}$) in a form more suitable for a
first-order theorem prover. This direct construction is presented in Appendix
\ref{appendix:FO}.

\subsection{Inspection programs}
The loop of Dijkstra's algorithm in which an agent inspects the local state of
all other agents is a quite general construction at the core of many other
distributed algorithms \cite{Lynch96,HerlihyS08}. We introduce \emph{inspection
programs}, a topology tailored for describing these algorithms. In inspection
programs, agents maintain a local copy of a set of variables with finite
domain. For this, we assume that $\places$ is partitioned into a number of sets
$\places_{0}, \places_{1}, \ldots, \places_{k}$, each set representing the
range of one variable. This means means that $\places_{j} \times \set{i}$ is a
1BB-set in every $\PN(n)$ for all $1 \leq j \leq k$ and $i \in [n]$, since the
copy of every variable for each agent holds exactly \emph{one} value at every
moment in time. Further, we assume that there exists one distinct set of
\emph{states}, which we call $\states$ in the following. As in the previous
topologies, transitions are generated by transition patterns, in this case two
different ones: \emph{local} and \emph{loop} patterns.

\paragraph{Local patterns.} A local pattern
is of the form
\begin{equation*}
  \label{eq:local-transition}
  \tuple{\origin, \tuple{P_{1}, P_{2}}, \target}
\end{equation*}
where $\origin, \target \in \states$ and $P_{1}, P_{2} \subseteq \places$.
Roughly speaking, the pattern specifies that an agent can change its state from
$\origin$ to $\target$, simultaneously, changing its copies of the places of
$P_{1}$ to $P_{2}$. More formally, in the instance $\PN(n)$ the pattern
generates for every $i \in [n]$ a transition $\tuple{Q_{1}, Q_{2}}$ with
\begin{equation*}
  Q_{1} = \left(\set{\origin} \cup P_{1}\right) \times \set{i}
  \text{ and }
  Q_{2} = \left(\set{\target} \cup P_{2}\right) \times \set{i}.
\end{equation*}
The pattern is called \emph{local} because it involves only the places of
one particular index.

\paragraph{Loop patterns.}
Loop patterns contain four loop states, called $\origin$, $\loopingState$,
$\target_{\success}$, and $\target_{\fail}$. Intuitively, an agent initiates
the loop by moving from $\origin$ to $\loopingState$, and exits it by moving
from $\loopingState$ to $\target_{\success}$ or $\target_{\fail}$. The agent
moves to $\target_{\fail}$ whenever the agent being currently inspected fails
the inspection, and to $\target_{\success}$ if all agents pass the inspection.
Further, a relation $\loopingRelation \in \relations$ maintains the agent that
is being currently inspected. Finally, the condition being inspected is modeled
by two sets $P, \overline{P} \subseteq \places$: if the inspected agent is
currently occupying a state of $P$ resp. $\overline{P}$, then the agent passes
resp. fails the inspection. Additionally, we enforce that $\loopingState$ and
$\loopingRelation$ occur nowhere else in any pattern. Formally, a loop pattern
is of the form

\begin{equation*}
  \label{eq:iterating-transition}
  \tuple{\origin, \loopingState, \loopingRelation, P, \overline{P},
  \target_{\success}, \target_{\fail}}
\end{equation*}
where $\origin, \loopingState, \target_{\success}, \target_{\fail} \in \states$
while $P, \overline{P} \subseteq \places$ and $\loopingRelation \in
\relations$. For every agent $i \in [n]$, the pattern generates several
transitions within the instance $\PN(n)$:
\begin{itemize}
  \item A transition $\tuple{Q_{1}, Q_{2}}$, modeling the start of the loop,
    given by:
    \begin{equation*}
      Q_{1} = \set{\tuple{\origin, i}, \tuple{\loopingRelation, i, \bot}}
      \text{ and }
      Q_{2} = \set{\tuple{\loopingState, i}, \tuple{\loopingRelation, i, 0}}.
    \end{equation*}
  \item A transition $\tuple{Q_{1}, Q_{2}}$, modeling that the agent does not
    inspect itself, given by
    \begin{equation*}
      Q_{1} = \set{\tuple{\loopingState, i}, \tuple{\loopingRelation, i, i}}
      \text{ and }
      Q_{2} = \set{\tuple{\loopingState, i}, \tuple{\loopingRelation, i, i+1}}
    \end{equation*}
    for every $i \in [n-1]$, and
    \begin{equation*}
      Q_{1} = \set{\tuple{\loopingState, i}, \tuple{\loopingRelation, i, i,}}
      \text{ and }
      Q_{2} = \set{\tuple{\target_{\success}, i},
      \tuple{\loopingRelation, i, \bot}}
    \end{equation*}
    if $i = n-1$.
  \item A transition $\tuple{Q_{1j}, Q_{2j}}$ for every agent $j$, modeling a
    successful inspection of agent $j$, given by:
    \begin{equation*}
      Q_{1j} = \set{\tuple{\loopingState, i}, \tuple{\loopingRelation, i, j}}
      \cup P \times \set{j}
      \text{ and }
      Q_{2j} = \set{\tuple{\loopingState, i}, \tuple{\loopingRelation, i, j+1}}
      \cup P \times \set{j}
    \end{equation*}
    if $j \in [n-1]$, and
    \begin{equation*}
      Q_{1j} = \set{\tuple{\loopingState, i}, \tuple{\loopingRelation, i, j}}
      \cup P \times \set{j}
      \text{ and }
      Q_{2j} = \set{\tuple{\target_{\success}, i},
      \tuple{\loopingRelation, i, \bot}} \cup P \times \set{j}
    \end{equation*}
    if $j = n-1$.
  \item A transition $\tuple{Q_{1j}, Q_{2j}}$ for every agent $j$, modeling an
    unsuccessful inspection of agent $j$, given by:
    $\overline{P}$:
    \begin{equation*}
      Q_{1j} = \set{\tuple{\loopingState, i}, \tuple{\loopingRelation, i, j}}
      \cup \overline{P} \times \set{j}
      \text{ and }
      Q_{2j} = \set{\tuple{\target_{\fail}, i}, \tuple{\loopingRelation, i,
      \bot}} \cup \overline{P} \times \set{j}
    \end{equation*}
    for every $j \in [n]$.
\end{itemize}
\begin{example}
  \label{ex:top-dijkstra}
  The transitions of Dijkstra's mutual exclusion algorithm
  (Example~\ref{ex:dijkstra}) are generated by the four local transition
  patterns
  \begin{center}
    \begin{tabular}{llll}
      $\langle \stateinitial,$ & $\set{\idled},$ & $\set{\trying},$ & $\stateloop \rangle$\\
      $\langle \statebreak,$ & $\set{\trying},$ & $\set{\idled},$ & $\stateinitial \rangle$\\
      $\langle \statedone,$ &  $\set{\trying},$ & $\set{\idled},$ & $\stateinitial \rangle$\\
      $\langle \statecrit,$ & $\emptyset,$ & $\emptyset,$ & $\statedone \rangle$\\
    \end{tabular}
  \end{center}
  and the single loop transition pattern
  \begin{equation*}
    \tuple{\stateloop, \statelooping, \ptr, \set{\idled}, \set{\trying},
    \statecrit,  \statebreak}.
  \end{equation*}
  We depict parts of the instance $\PN(3)$ in
  Figure~\ref{fig:dijkstra-instance-3}.
\end{example}
It is straightforward to see that every inspection program can be modeled by a
generalized Petri net. Notice that we already gave a part of $\transitions$ in
Example~\ref{ex:dijkstra} such that, when instantiated in some instance
$\PN(n)$, these transitions coincide with the transitions induced by
the loop pattern above when the loop is advanced.

\begin{figure}
  \caption{We illustrate here parts of $\PN(3)$ for the generalized Petri net
  $\PN$ of Example~\ref{ex:top-dijkstra}. Specifically, we include all places
  that model the state of the first agent and the transitions that change the
  state of the first agent. Some of these transitions \enquote{observe} places
  of the state of the zeroth and second agent. These places are added with
  dashed lines.}
  \label{fig:dijkstra-instance-3}
  \begin{center}
    \scalebox{1}{
      \def\minWidth{1.5cm}
      \begin{tikzpicture}[every node/.style={scale=0.6, inner sep=0pt}]
        \node[place, minimum width=\minWidth] (idle1)    at (- 2,   3) {$\idled(1)$};
        \node[place, minimum width=\minWidth] (trying1)  at (  2,   3) {$\trying(1)$};

        \node[place, minimum width=\minWidth] (init1)    at (- 3,   0) {$\stateinitial(1)$};
        \node[place, minimum width=\minWidth] (loop1)    at (  0,   0) {$\stateloop(1)$};
        \node[place, minimum width=\minWidth] (looping1) at (  3,   0) {$\statelooping(1)$};

        \node[place, minimum width=\minWidth] (ptr1bot)  at (- 4.5, - 3) {$\ptr(1, \bot)$};
        \node[place, minimum width=\minWidth] (ptr10)    at (- 1.5, - 3) {$\ptr(1, 0)$};
        \node[place, minimum width=\minWidth] (ptr11)    at (  1.5, - 3) {$\ptr(1, 1)$};
        \node[place, minimum width=\minWidth] (ptr12)    at (  4.5, - 3) {$\ptr(1, 2)$};

        \node[place, minimum width=\minWidth] (break1)   at (- 3, - 6) {$\statebreak(1)$};
        \node[place, minimum width=\minWidth] (crit1)    at (  3, - 6) {$\statecrit(1)$};

        \node[place, minimum width=\minWidth] (done1)    at (  3, - 9) {$\statedone(1)$};

        \node[transition] (init2loop) at (- 1.5, 0) {}
          edge [pre] (init1)
          edge [pre] (idle1)
          edge [post] (loop1)
          edge [post] (trying1)
        ;

        \node[transition] (initLoop) at (1.5, 0) {}
          edge [pre]  (loop1)
          edge [post] (looping1)
          edge [pre]  (ptr1bot)
          edge [post] (ptr10)
        ;

        \node[place, minimum width=\minWidth, dashed] (idle0) at (0, -4) {$\idled(0)$};

        \node[transition] (loopstep0) at (0, -3) {}
          edge [pre]  (ptr10)
          edge [post] (ptr11)
          edge [<->, shorten <=1pt, shorten >=1pt] (idle0)
        ;

        \node[place, minimum width=\minWidth, dashed] (trying0) at (-2.5, -4) {$\trying(0)$};

        \node[transition] (loopfail0) at (-1.5, -4) {}
          edge [pre]  (ptr10)
          edge [post] (break1)
          edge [<->, shorten <=1pt, shorten >=1pt] (trying0)
          edge [pre]  (looping1)
          edge [post, bend right] (ptr1bot)
        ;

        \node[transition] (loopstep1) at (3, -3) {}
          edge [pre]  (ptr11)
          edge [post] (ptr12)
        ;

        \node[place, minimum width=\minWidth, dashed] (idle2) at (5.5, -4) {$\idled(2)$};

        \node[transition] (loopstep2) at (4.5, -4) {}
          edge [pre, bend left]  (looping1)
          edge [post] (crit1)
          edge [pre]  (ptr12)
          edge [post, bend left=50] (ptr1bot)
          edge [<->, shorten <=1pt, shorten >=1pt] (idle2)
        ;

        \node[place, minimum width=\minWidth, dashed] (trying2) at (2.5, -4) {$\trying(2)$};

        \node[transition] (loopfail2) at (3.5, -4) {}
          edge [pre, bend left=46]  (looping1)
          edge [post, bend left] (break1)
          edge [pre]  (ptr12)
          edge [post, bend left=46] (ptr1bot)
          edge [<->, shorten <=1pt, shorten >=1pt] (trying2)
        ;

        \node[transition] (crit2done) at (3, -7.5) {}
          edge [pre] (crit1)
          edge [post] (done1)
        ;

        \node[transition] (break2initial) at (-4, -6) {}
          edge [pre]  (break1)
          edge [post] (init1)
          edge [pre, bend left=12]  (trying1)
          edge [post, bend right=5] (idle1)
        ;

        \node[transition] (done2initial) at ( 2, -9) {}
          edge [pre]  (done1)
          edge [post, bend left=17] (init1)
          edge [pre]  (trying1)
          edge [post, bend right=8] (idle1)
        ;

      \end{tikzpicture}
      \let\minWidth\undefined
    }
  \end{center}
\end{figure}

\subsection{Parametrizing traps of inspection programs}
We introduce parametrization results for inspection programs that, given a trap
for an instance, produce a set of traps for all instances. The essential
observation is that every transition involves a finite amount of indices: for
local transitions there is exactly one index involved, while for loop
transitions there are at most 3 involved indices; the index $i$ of the agent
that is executing the loop transition, the index $j$ of the agent being
currently inspected by agent $i$ and (potentially) the index $j+1$ to which
agent $i$ advances its pointer.

\paragraph{Indexed traps of inspection programs.}
We denote an indexed trap as a triple $\tuple{n, Q, R}$, where $Q \subseteq
\places \times [n]$ and $R \subseteq \relations \times [n] \times \left( [n]
\cup \set{\bot} \right)$ are sets of places such that $Q \cup R$ is a trap of
$\PN(n)$. Given an indexed trap $\tuple{n, Q, R}$, we introduce the following
notations, where $i, j \in [n]$:
\begin{align*}
  Q[i] &= \set{p \in \places \mid p(i) \in Q} \\
  R[i, j] & = \set{r \in \relations \mid r(i, j) \in R} \\
  R[-, j] & = \set{r(i) \in \relations \times [n] \mid r(i, j) \in R} \\
  L & = \set{i \in [n] \mid \text{ there exists }j \in \left([n] \cup
  \set{\bot} \right)\text{ with }R[i, j] \neq \emptyset}
\end{align*}
$Q[i]$ is used in the same way as before, while $R[i, j]$ and $R[-, j]$ are
generalizations of this concept to the relation symbols of the generalized net.
$L$ is the set of agents such that the trap $Q \cup R$ contains at least one
place for which there is some $r \in R$ such that $r(i, j)$ is in the trap for
some $j$. We call $L$ the set of \emph{looping indices} of the indexed trap.
This naming convention is inspired by the considered topology: any place $r(i,
j)$ corresponds to the fact that agent $i$ executes some loop and currently
points to $j$. Let us illustrate these various notions with an example.
\begin{example}
  \label{ex:loop-trap}
  Consider the parameterized net $\PN$ from Example~\ref{ex:dijkstra}.
  In the Petri net $\PN(7)$, the following set of places constitutes a
  trap:
  \begin{equation*}
    \set{
      \begin{aligned}
        &\statebreak(3), \stateloop(3), \falseval(3),\\
        &\statebreak(5), \stateloop(5), \falseval(5),\\
        &\ptr(3, 0), \ptr(3, 1), \ptr(3, 2),\\
        &\ptr(3, 3), \ptr(3, 4), \ptr(3, 5)\\
        &\ptr(5, 0), \ptr(5, 1), \ptr(5, 2), \ptr(5, 3),
      \end{aligned}
    }.
  \end{equation*}
  The corresponding indexed trap is
  \begin{equation*}
    \tuple{n, Q, R} := \tuple{
      7,
      \set{
        \begin{aligned}
          &\statebreak(3), \stateloop(3), \falseval(3),\\
          &\statebreak(5), \stateloop(5), \falseval(5),\\
        \end{aligned}
      }, \set{
        \begin{aligned}
          &\ptr(3, 0), \ptr(3, 1), \ptr(3, 2), \\
          &\ptr(3, 3), \ptr(3, 4), \ptr(3, 5),\\
          &\ptr(5, 0), \ptr(5, 1), \ptr(5, 2), \ptr(5, 3)
        \end{aligned}
      }
    }.
  \end{equation*}
  and we have
  \begin{center}
    \scalebox{0.8}{
    \begin{tabular}{r|c@{\hspace{2mm}}c@{\hspace{2mm}}c@{\hspace{2mm}}c@{\hspace{2mm}}c@{\hspace{2mm}}c@{\hspace{2mm}}c}
      $i$ & $0$ & $1$ & $2$ & $3$ & $4$ & $5$ & $6$  \\  \hline
      $Q[i]$ & $\emptyset$ &  $\emptyset$ & $\emptyset$ &  $\set{\statebreak, \stateloop, \falseval}$ & $\emptyset$ &  $\set{\statebreak, \stateloop, \falseval}$  &  $\emptyset$\\
      $R[-,i]$ &  $\set{r(3), r(5)}$  & $\set{r(3), r(5)}$ &  $\set{r(3),
      r(5)}$ & $\set{r(3), r(5)}$ &  $\set{r(3)}$ & $\set{r(3)}$ & $\emptyset$
    \end{tabular}
    }
  \end{center}
  \medskip
\end{example}

\paragraph{Parametrizing indexed traps.}
Let $C_{i}$ denote the column of the table of Example~\ref{ex:loop-trap} for
index $i$. The table itself is then determined by the sequence $C_0 \; C_1 \;
\cdots \; C_6$: the \emph{column representation} of the indexed trap.

\begin{definition}
  Let $\tuple{n, Q, R}$ be an indexed trap. We call the sequence $C_0 \; C_1 \;
  \cdots \; C_{n-1}$, where $C_i := (Q[i], R[-,i])$, the \emph{column
  representation} of $\tuple{n, Q, R}$.
\end{definition}

Since $\tuple{n,Q,R}$ and $C_0 \; C_1 \; \cdots \; C_{n-1}$ are different
representations of the same object, we abuse language and speak of the indexed
trap $C_0 \, C_1 \cdots C_{n-1}$.

Observe that the indexed trap of Example \ref{ex:loop-trap} satisfies $C_1 =
C_2 = C_3$. We are going to prove that, if an indexed trap $C_0 \; C_1 \;
\cdots \; C_{n-1}$ of $\PN(n)$ satisfies $C_{i-1} = C_i = C_{i+1}$ for some $0
< i < n-1$, then one can \enquote{insert another $C_{i}$} to get an indexed
trap for $n+1$. We introduce for this $\shift^{i}_{m}(C_j)$ which is the result
of shifting indices in $R[-, j]$ that are larger than $i$ by $m$ steps; i.e.,
applying the simultaneous substitution $[i+m \leftarrow i, i+m+1 \leftarrow
i+1, \ldots, n+m \leftarrow n]$ to $R[-, j]$. Then \emph{every} sequence of the
form
\begin{equation*}
  \shift^{i}_{m}(C_0) \; \cdots \; \shift^{i}_{m}(C_{i-1}) \;
  \shift^{i}_{m}(C_i) \;
  (\shift^{i}_{m}(C_i))^m \; \shift^{i}_{m}(C_{i+1}) \; \cdots \;
  \shift^{i}_{m}(C_{n-1})
\end{equation*}
for every $m \geq 0$ is an indexed trap of $\PN(n+m)$.

\begin{example}
  For Example~\ref{ex:loop-trap}, this observation allows us to find a trap in
  $\PN(8)$:
  \begin{center}
    \scalebox{0.75}{
    \begin{tabular}{r|c@{\hspace{2mm}}c@{\hspace{2mm}}c@{\hspace{2mm}}c@{\hspace{2mm}}c@{\hspace{2mm}}c@{\hspace{2mm}}c@{\hspace{2mm}}c}
      $i$ & $0$ & $1$ & $2$ & $3$ & $4$ & $5$ & $6$ & $7$  \\  \hline
      $Q[i]$ & $\emptyset$ &  $\emptyset$ & $\emptyset$ &  $\emptyset$ & $\set{\statebreak, \stateloop, \falseval}$ & $\emptyset$ &  $\set{\statebreak, \stateloop, \falseval}$  & $\emptyset$ \\
      $R[-,i]$ &  $\set{r(4), r(6)}$  & $\set{r(4), r(6)}$ & $\set{r(4), r(6)}$ & $\set{r(4), r(6)}$ & $\set{r(4), r(6)}$ &  $\set{r(4), r(6)}$ & $\set{r(4)}$ & $\emptyset$
    \end{tabular}
    }
  \end{center}
  Note that the looping indices $3$ and $5$ increased consistently in all $R[-,
  j]$ sets. This is a consequence of the $\shift^{1}_{1}$ operation, since $2 <
  3 < 5$.
\end{example}

\begin{lemma}
  \label{lem:looping}
  Let $T = C_{0} \; C_{1} \; \ldots \; C_{n-1}$ be an indexed trap (in column
  representation) for some inspection program $\PN$ with looping indices $L$.
  If there exists $0 < i < n-1$ such that $C_{i-1} = C_{i} = C_{i+1}$ and $i-1,
  i, i+1 \notin L$, then for every $m \geq 0$ the sequence
  \begin{equation*}
   T_{i,m} :=  \shift^{i}_{m}(C_{0}) \;
    \ldots \;
    \shift^{i}_{m}(C_{i-1}) \; \left(\shift^{i}_{m}(C_{i})\right)^{m + 1} \;
    \shift^{i}_{m}(C_{i+1}) \; \ldots \; C_{n-1}
  \end{equation*}
  is an indexed trap of $\PN(n+m)$.
\end{lemma}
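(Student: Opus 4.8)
The plan is to argue by induction on $m$, reducing the whole statement to the single insertion $m=1$. The base case $m=0$ is trivial, since $T_{i,0}=T$. For the inductive step I would observe that inserting one further copy of the shifted column into $T_{i,m}$ produces exactly $T_{i,m+1}$, provided the hypotheses of the lemma are preserved. This is indeed the case: the inserted block of $T_{i,m}$ consists of $m+1$ identical copies of $\shift^{i}_{m}(C_i)=\shift^{i}_{m}(C_{i-1})=\shift^{i}_{m}(C_{i+1})$, so $T_{i,m}$ again contains three consecutive equal columns around index $i$; moreover none of the copy positions is a looping index, because every looping index of $T$ lies outside $\{i-1,i,i+1\}$ and $\shift^{i}_{m}$ moves the looping indices $\geq i$ up by $m$, so they never land on the block $\{i,\ldots,i+m\}$. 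Hence it suffices to prove: if $T$ is an indexed trap satisfying the hypotheses, then the result $T'$ (with placeset $P'=Q'\cup R'$) of inserting a single copy of $C_i$ at position $i$, shifting all looping indices $\geq i$ by one, is an indexed trap of $\PN(n+1)$.

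To prove this I would assume the contrary and derive a contradiction: if $T'$ is not a trap there is a transition $t'$ of $\PN(n+1)$ with $\preset{t'}\cap P'\neq\emptyset=\postset{t'}\cap P'$, and I would produce from $t'$ a transition $t$ of $\PN(n)$ witnessing that $T$ is not a trap. The tool is an un-shift map $\rho\colon[n+1]\to[n]$ given by $\rho(a)=a$ for $a<i$, $\rho(i)=\rho(i+1)=i$, and $\rho(a)=a-1$ for $a>i+1$; it collapses the two copies, satisfies $Q'[a]=Q[\rho(a)]$ for all $a$, and plays the role of the $\drop$ operation of the barrier-crowd section. If $t'$ is a \emph{local} transition it touches only one index $a$ and no relation place, so the same local pattern instantiated at $\rho(a)$ gives a transition $t$ of $\PN(n)$ with identical intersection pattern against $Q$, contradicting that $T$ is a trap.

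The crux is the case of a \emph{loop} transition, which involves the looper $\ell$, the inspected index $j$, and possibly $j+1$. Since the copy positions $i,i+1$ are non-looping, $\ell\notin\{i,i+1\}$, so the looper is never collapsed and $\rho(\ell)$ is a genuine looper of $T$. I would first note that the inspected places $P\times\{j\}$ (resp. $\overline{P}\times\{j\}$) occur in both $\preset{t'}$ and $\postset{t'}$, so the witness place lies at the looper, that is, at its state place $(\loopingState,\ell)$ or its pointer place $(\loopingRelation,\ell,j)$. The interior condition $0<i<n-1$ keeps the block away from the loop-start/$\bot$ and the $j=n-1$ boundary transitions, so those map back verbatim under $\rho$. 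The only delicate situation is when the pointer straddles the block, i.e. $j=i$ with advance to $j+1=i+1$, where $\rho(j)=\rho(j+1)=i$ and the advance has no direct image in $\PN(n)$. But $C_i=C_{i+1}$ yields $R'[-,i]=R'[-,i+1]$, so $(\loopingRelation,\ell,i)\in R'$ forces $(\loopingRelation,\ell,i+1)\in R'$; hence if the pointer place were the preset witness, the advanced pointer place would lie in $\postset{t'}\cap P'$, contradicting $\postset{t'}\cap P'=\emptyset$. Thus a straddling witness cannot occur, and in every remaining case I would drop the copy not carrying the inspected index, so that $\rho$ sends $t'$ to a loop transition $t$ of $\PN(n)$ with the same intersection pattern against $Q\cup R$; the equalities $C_{i-1}=C_i=C_{i+1}$ guarantee that both the inspection outcome and the pointer arithmetic are preserved under the drop, giving the desired contradiction.

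I expect the main obstacle to be exactly this loop-transition analysis: organizing the sub-cases by where $j$ and $j+1$ sit relative to the inserted block $\{i,i+1\}$, and checking in each that dropping the appropriate copy preserves both the inspection outcome (guaranteed by $Q'[i]=Q'[i+1]=Q[i]$) and the consistency of the pointer relation (guaranteed by $R'[-,i]=R'[-,i+1]$), while the non-looping hypothesis keeps the looper intact. Everything else — the local case, the boundary transitions, and the bookkeeping of the reindexing $\rho$ — should be routine.
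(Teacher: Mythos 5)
Your overall strategy is sound and is, in spirit, the paper's own: reduce to a single insertion ($m=1$), assume a transition $t'$ of $\PN(n+1)$ witnesses that the enlarged set is not a trap, and project $t'$ down to a transition of $\PN(n)$ by deleting one column, contradicting that $T$ is a trap. Your induction bookkeeping for general $m$, your local-transition case, and your treatment of the straddling pointer (via $R'[-,i]=R'[-,i+1]$) are all correct. There is, however, a genuine gap in the loop-transition case: the inference \enquote{since the copy positions $i,i+1$ are non-looping, $\ell\notin\{i,i+1\}$} is a non sequitur. The looping indices $L$ are a property of the \emph{trap} (the first coordinates of the relation places that $T$ contains), not of the net: every agent of $\PN(n+1)$, including agents $i$ and $i+1$, initiates loop transitions (loop start, pointer advance, exit to success or fail). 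What actually follows from the block positions not being looping indices of the enlarged trap is only that the trap contains no \emph{pointer} places $(\loopingRelation,\ell,\cdot)$ with $\ell\in\{i,i+1\}$; hence, \emph{if} the witness place is a pointer place, then $\ell$ indeed lies outside the block. But the witness can also be a \emph{state} place of the looper --- e.g.\ $\origin$ at a loop-start transition, or $\loopingState$ at an exit transition, whenever that state belongs to $Q[i]$ --- and then the looper may perfectly well sit inside the inserted block. Your case analysis is organized around the false exclusion and silently skips exactly these cases.

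The omission is repairable: when $\ell\in\{i,i+1\}$, the witness must be a state place of the looper (no pointer places at $\ell$ lie in either trap), and collapsing the \emph{other} copy still produces a transition of $\PN(n)$ with the same intersection pattern against $T$, because $Q'[i]=Q'[i+1]=Q[i]$ and because neither $i$ (in $T$) nor $\ell$ (in $T_{i,1}$) carries pointer places; so the contradiction goes through. Still, as written the proof has a hole, and it is instructive to see how the paper avoids the entire issue. Instead of a fixed collapse of the two copies, the paper observes that positions $i-1,i,i+1,i+2$ of $T_{i,1}$ carry \emph{four} identical columns (this is where both the insertion and the hypothesis $C_{i-1}=C_i=C_{i+1}$ are used), while any transition of an inspection program involves at most \emph{three} indices; hence some position $k$ in this block is untouched by $t'$, and deleting that column maps $t'$ unchanged to a transition of $\PN(n)$ and maps $T_{i,1}$ back to $T$, preserving all intersections uniformly, with no case analysis on transition types or on where the looper sits.
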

\begin{proof}
 We only consider the special case $m = 1$, since the general case follows by
 applying the special case $m$ times. We have
 \begin{equation}
   \label{Ti1}
   T_{i,1} := \shift^{i}_{1}(C_{0}) \; \ldots \; \shift^{i}_{1}(C_{i-1}) \;
   \shift^{i}_{1}(C_{i}) \; \shift^{i}_{1}(C_{i}) \; \shift^{i}_{1}(C_{i+1}) \;
   \ldots \; C_{n-1}
  \end{equation}

  Let $\unshift_{1}^{i}$ be the inverse operation of $\shift_{1}^{i}$; that is,
  $\unshift_{1}^{i}$ applies simultaneously the substitution $[i \leftarrow
  i+1, i+1 \leftarrow i+2, \ldots, n \leftarrow n+1]$ to all $R[-, j]$ sets.
  One can easily verify that for every $j = i-1, i, i+1, i+2$ the following
  holds:
  \begin{equation*}
    \scalebox{0.95}{\mbox{\ensuremath{
    \begin{aligned}
      &\; \unshift_{1}^{i}(\shift^{i}_{1}(C_{0})) \ldots
      \unshift_{1}^{i}(\shift^{i}_{1}(C_{j-1}))
      \unshift_{1}^{i}(\shift^{i}_{1}(C_{j+1})) \ldots
      \unshift_{1}^{i}(\shift^{i}_{1}(C_{n-1}) \\
      = &\; C_{0} \; \ldots \;
      C_{j-1} \;
      C_{j+1} \; \ldots \;
      C_{n-1} \\
      = &\;T.
    \end{aligned}}}}
  \end{equation*}
  This is a consequence of $\shift^{i}_{1}(C_{i-1}) = \shift^{i}_{1}(C_{i}) =
  \shift^{i}_{1}(C_{i+1}) = \shift^{i}_{1}(C_{i+2})$, $C_{i-1} = C_{i} =
  C_{i+1}$, and $\unshift^{i}_{1}(\shift^{i}_{1}(C_{i})) = C_{i}$.

  For the sake of contradiction, assume that $T_{i,1}$ is not an indexed trap
  of $\PN(n+1)$. Hence, there exists a transition $\tuple{Q_{1}, Q_{2}}$ in
  $\PN(n+1)$ such that $T_{i,1} \cap Q_1 \neq \emptyset$ but $T_{i,2} \cap Q_2
  = \emptyset$.

  As mentioned before, at most $3$ indices are involved in any instance of any
  transition pattern of an inspection program. So there are indices $j_{1},
  j_{2}, j_{3}$ such that $Q_{1}, Q_{2} \subseteq \places \times \set{j_{1},
  j_{2}, j_{3}} \cup \relations \times \set{j_{1}, j_{2}, j_{3}} \times
  \set{j_{1}, j_{2}, j_{3}, \bot}$. Let $A_{0} \; \ldots \; A_{n}$ and $B_{0}
  \; \ldots \; B_{n}$ be the column representations of $Q_{1}$ and $Q_{2}$
  respectively. Pick now some $k \in \set{i-1, i, i+1, i+2} \setminus
  \set{j_{1}, j_{2}, j_{3}}$. We have $A_{k} = \tuple{\emptyset, \emptyset}$
  and $B_{k} = \tuple{\emptyset, \emptyset}$.
  Informally speaking, we are going to remove agent $k$ from the system
  and obtain a transition that contradicts $T$ being a trap.
        The rest of the proof is the implementation of this idea.

        Define $\tuple{Q_{1}', Q_{2}'}$
  as the transition of $\PN(n)$ given by (in column representation)
    \begin{align}
      Q_{1}' &:= \unshift_{1}^{k}(A_{0}) \; \ldots \;
      \unshift_{1}^{k}(A_{k-1}) \;
      \unshift_{1}^{k}(A_{k+1}) \; \ldots \;
      \unshift_{1}^{k}(A_{n+1})  \label{Q1p}\\
      Q_{2}' &:= \unshift_{1}^{k}(B_{0}) \; \ldots \;
      \unshift_{1}^{k}(B_{k-1}) \;
      \unshift_{1}^{k}(B_{k+1}) \; \ldots \;
      \unshift_{1}^{k}(B_{n+1}) \label{Q2p}
    \end{align}
  respectively. ($\tuple{Q_{1}', Q_{2}'}$ is indeed a transition of $\PN(n)$,
  obtained from  the same transition pattern as $\tuple{Q_{1}, Q_{2}}$ with
  appropriate indices.)

  We prove now that $Q_{1}' \cap T \neq \emptyset$ but $Q_{2}' \cap T =
  \emptyset$, contradicting the assumption that $T$ is an indexed trap of
  $\PN(n)$. Since $k \in \set{i, i+1, i+2, i+3}$, and none of these indices is
  a looping index of $T_{i,1}$, we have
      \begin{align}
    T \; & = & 
     \unshift_{1}^{i}(\shift^{i}_{1}(C_{0}))
    \ldots
    \unshift_{1}^{i}(\shift^{i}_{1}(C_{j-1}))
    \unshift_{1}^{i}(\shift^{i}_{1}(C_{j+1})) \ldots
    \unshift_{1}^{i}(\shift^{i}_{1}(C_{n-1})  \nonumber \\
    & = &  \unshift_{1}^{k}(\shift^{i}_{1}(C_{0}))
    \ldots
    \unshift_{1}^{k}(\shift^{i}_{1}(C_{j-1}))
    \unshift_{1}^{k}(\shift^{i}_{1}(C_{j+1})) \ldots
    \unshift_{1}^{k}(\shift^{i}_{1}(C_{n-1})  \label{T}
  \end{align}
  Let us prove $Q_{1}' \cap T \neq \emptyset$. Recall that $Q_1 = A_{0} \;
  \ldots \; A_{n}$. Since $Q_{1} \cap T_{i,1} \neq \emptyset$ holds by
  assumption, from (\ref{Ti1}) we obtain an index $a \in \set{j_{1}, j_{2},
  j_{3}}$ such that $\shift_{1}^{i}(C_{a}) \cap A_{a} \neq \emptyset$; that is,
  $\shift_{1}^{i}(C_{a}) = \tuple{Q^{C_{a}}, R^{C_{a}}}$ and $A_{a} =
  \tuple{Q^{A_{a}}, R^{A_{a}}}$ such that either $Q^{C_{a}} \cap Q^{A_{a}} \neq
  \emptyset$ or $R^{C_{a}} \cap R^{A_{a}} \neq \emptyset$. Then,
  however, $\unshift_{1}^{k}(\shift_{1}(C_{a})) \cap \unshift_{1}^{k}(A_{a})
  \neq \emptyset$.

  Similarly, we observe that $\shift_{1}^{i}(C_{a}) \cap B_{a} =
  \emptyset$ for all $a$. It follows, now, that
  $\unshift_{1}^{k}(\shift_{1}^{i}(C_{a})) \cap \unshift_{1}^{k}(B_{a}) =
  \emptyset$ for all $a$. Consequently, by (\ref{Q1p}), (\ref{Q2p}) and
  (\ref{T}) we obtain the contradiction that $T$ is not a trap. The result
  follows.
\end{proof}
Lemma~\ref{lem:looping} paves the way for a generalization result for
inspection programs which, given a trap $T$ of an instance, constructs a
parametrization $\Param_\interp{T}(\bm{n}, \Xv, \Uv)$.

Note that Lemma~\ref{lem:looping} essentially allows to insert $C_{i}^{*}$;
that is, an arbitrary repetition of the letter $C_{i}$ at the appropriate
location, to obtain a family of traps. One has to still account for the correct
$m$ in the $\shift_{m}$ operation but this observation might help to understand
the following generalization theorem. Let us introduce a bit of simplifying
notation first. We write
\begin{equation*}
  \Qequals{P}(t) \text{ for }\bigwedge_{p \in P}t \in \Xv_{p} \land
  \bigwedge_{p \in \places \setminus P}t \notin \Xv_{p}.
\end{equation*}
That means that \enquote{$Q[t]$} is $P$; i.e., this formula is satisfied by
some interpretation $\mu$ if $\set{\mu(t)} \times P \subseteq \bigcup_{p \in
\places}\mu(\Xv_{p}) \times \set{p}$ (the set of places $\mu$ encodes in the
placeset variables $\Xv$).
We introduce a similar notion to represent that $R[-, t]$ is of a certain shape
for some term $t$. Since $R[-, t]$ contains elements of the form $r(\ell)$ for
some $r \in \relations$ and some looping index $\ell \in L$, assume that the
looping indices are $\ell_{1} < \ell_{2} < \ldots < \ell_{k} < u_{1} < u_{2} <
\ldots < u_{m}$ and there exists a corresponding first-order variable
$\bm{\ell}_{i}$ for every $1 \leq i \leq k$ and $\bm{u}_{i}$ for $1 \leq i \leq
m$. Now we write
\begin{equation*}
  \Requals{R}(t) \text{ for }
  \left[
    \begin{aligned}
    &\bigwedge_{\substack{1 \leq v \leq k \\ r(\ell_{v}) \in R}}
      \tuple{\bm{\ell}_{v}, t} \in \Uv_{r} \land
    &\bigwedge_{\substack{1 \leq v \leq k \\ r(\ell_{v}) \notin R}}
      \tuple{\bm{\ell}_{v}, t} \notin \Uv_{r} \\
    \land &\bigwedge_{\substack{1 \leq v \leq m \\ r(u_{v}) \in R}}
      \tuple{\bm{u}_{v}, t} \in \Uv_{r} \land
    &\bigwedge_{\substack{1 \leq v \leq m \\ r(u_{v}) \notin R}}
      \tuple{\bm{u}_{v}, t} \notin \Uv_{r} \\
  \end{aligned}\right].
\end{equation*}
This allows us to formulate our generalization result in a compact way:
\begin{theorem}
  \label{thm:looping-gen}
  Let $T = C_{0} \; C_{1} \; \ldots \; C_{n-1}$ be an indexed trap (in column
  representation) in some looping program $\PN$ with looping indices $L = \set{
  \ell_{1}, \ldots, \ell_{k}, u_{1}, \ldots, u_{m}}$. If there exists $i \in
  [n]$ such that $C_{i-1} = C_{i} = C_{i+1}$ and $\ell_{1} < \ell_{2} < \ldots
  < \ell_{k} < i-1 < i < i+1 < u_{1} < u_{2} < \ldots < u_{m}$, then every
  model of
  \begin{equation*}
    \begin{aligned}
    \Param_\interp{T}& (\bm{n}, \Xv, \Uv) \coloneqq
      n \leq \bm{n} \; \land \; \exists \bm{y} ~.~ \bm{y} + (n - (i + 2)) = \bm{n}
      \land i+1 \leq \bm{y} \\
      \land &\bigwedge_{j < i-1} \Qequals{Q[j]}(j)
      \land \bigwedge_{i + 1 < j < n} \Qequals{Q[j]}(\bm{y} + (j - (i + 2))) \\
      \land &\forall \bm{j} ~.~ i-1 \leq \bm{j} \leq \bm{y} \rightarrow \Qequals{Q[i]}(\bm{j}) \\
      \land &\forall \bm{\ell}_{1}, \ldots, \bm{\ell}_{k}, \bm{u}_{1}, \ldots,
      \bm{u}_{m} ~.~ \bigwedge_{1 \leq v \leq k} \bm{\ell}_{v} = \ell_{v} \land
      \bigwedge_{1 \leq v \leq m} \bm{u}_{v} = \bm{y} + (u_{v} - (i + 2)) \\
      &\rightarrow \left(
      \begin{aligned}
        &\bigwedge_{j < i-1} \Requals{R[*, j]}(j) \land
        \bigwedge_{i < j} \Requals{R[*, j]}(\bm{y} + (j - (i+2)))\\
        \land & \forall i-1 \leq \bm{j} \leq \bm{y} \rightarrow \Requals{R[*,
        i]}(\bm{j})
      \end{aligned}
      \right)
    \end{aligned}
  \end{equation*}
  corresponds to an indexed trap.
\end{theorem}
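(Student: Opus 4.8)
The plan is to show that the models of $\Param_\interp{T}(\bm{n}, \Xv, \Uv)$ are exactly the traps produced by iterating Lemma~\ref{lem:looping}, and then to invoke that lemma. First I would record that its hypotheses are met: $C_{i-1} = C_i = C_{i+1}$ holds by assumption, and the ordering $\ell_1 < \cdots < \ell_k < i-1 < i < i+1 < u_1 < \cdots < u_m$ of the looping indices guarantees $i-1, i, i+1 \notin L$, so Lemma~\ref{lem:looping} applies. Then I take an arbitrary model $\mu$ and read off its data. The conjunct $n \leq \bm{n}$ makes $M := \mu(\bm{n}) - n$ nonnegative, and the constraint $\bm{y} + (n-(i+2)) = \bm{n}$ together with $i+1 \leq \bm{y}$ forces $\mu(\bm{y}) = i + 2 + M$; intuitively $M$ is the number of extra copies of $C_i$ that $\mu$ inserts, and it plays the role of the shift parameter of Lemma~\ref{lem:looping} (written $m$ there; I rename it $M$ to avoid clashing with the count $m$ of upper looping indices).

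The core of the argument is to decode the set of places $\tuple{P, R'}$ that $\mu$ encodes and to verify that its column representation $D_0 \, D_1 \cdots D_{\bm{n}-1}$ coincides with the sequence $T_{i,M}$ of Lemma~\ref{lem:looping}. For the $\Xv$-component this can be read directly off the conjuncts $\Qequals{Q[j]}$: positions below $i-1$ carry $Q[j]$; the universally quantified block fills every position of the inserted region with $Q[i]$; and the positions from $\mu(\bm{y})$ onward carry $Q[i+2], \ldots, Q[n-1]$ under the reindexing $j \mapsto \bm{y} + (j-(i+2))$. Using $Q[i-1] = Q[i] = Q[i+1]$ to reconcile the boundaries of the block, I would check that these columns are precisely the $\Xv$-part of $T_{i,M}$.

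The relation component is handled analogously but is the delicate step. Here the universally quantified premise $\bigwedge_v \bm{\ell}_v = \ell_v \land \bigwedge_v \bm{u}_v = \bm{y} + (u_v - (i+2))$ pins the looping-index variables to exactly the shifted looping indices: those below $i$ stay fixed while those above $i+1$ move up by $M$, which is exactly the effect of $\shift^i_M$ on $L$. I would then argue that each relation conjunct $\Requals{R[*,j]}$ deposits $\shift^i_M$ applied to the corresponding original relation column, so that the $\Uv$-part of $\tuple{P, R'}$ agrees column by column with $T_{i,M}$ as well. Combining the two components yields $D_0 \cdots D_{\bm{n}-1} = T_{i,M}$ in column representation.

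Finally, Lemma~\ref{lem:looping} gives that $T_{i,M}$ is an indexed trap of $\PN(n+M) = \PN(\bm{n})$, whence $\tuple{P, R'}$ is an indexed trap, as required. The main obstacle I anticipate is the index bookkeeping in the two decoding steps: tracking the exact boundaries of the inserted block and checking that the single substitution $\bm{u}_v = \bm{y} + (u_v - (i+2))$ simultaneously realizes $\shift^i_M$ on \emph{every} column's relation set. This is precisely the place where an off-by-one at the top of the block (where the $\Qequals{Q[i]}$-region meets the reindexed tail at $j = i+2$) is easy to introduce, so I would pin these boundaries down against a small concrete instance, such as the one in Example~\ref{ex:loop-trap}, before committing to the general argument.
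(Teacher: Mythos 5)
Your proposal is correct and follows essentially the same route as the paper's own proof: decode an arbitrary model of $\Param_\interp{T}(\bm{n}, \Xv, \Uv)$ into its column representation, recognize it as the sequence $T_{i,M}$ of Lemma~\ref{lem:looping} with shift parameter $M = \mu(\bm{n}) - n$ (so $\mu(\bm{y}) = i+2+M$), and invoke that lemma. The paper's proof is merely a terser version of this—it asserts that \enquote{examining the formula closely reveals} the column representation is $\shift^{i}_{m}(C_{0}) \ldots \shift^{i}_{m}(C_{i-1}) \, (\shift^{i}_{m}(C_{i}))^{m+1} \, \shift^{i}_{m}(C_{i+1}) \ldots C_{n-1}$ for some $m \geq 0$ and concludes immediately—whereas you additionally spell out the hypothesis check ($i-1, i, i+1 \notin L$ from the ordering of the looping indices) and the boundary bookkeeping that the paper leaves implicit.
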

\begin{proof}
  Let $\mu$ be a model of $\Param_\interp{T}(\bm{n}, \Xv, \Uv)$. Consider the
  triple $\tuple{n', Q', R'}$ such that $n' = \mu(\bm{n})$, $Q' = \bigcup_{p
  \in \places}\set{p} \times \mu(\Xv_{p})$, and $R' = \bigcup_{r \in
  \relations}\set{r} \times \mu(\Uv_{r})$. Examining $\Param_\interp{T}(\bm{n},
  \Xv, \Uv)$ closely reveals that its column representation is
  \begin{equation*}
    \shift^{i}_{m}(C_{0}) \; \shift^{i}_{m}(C_{1}) \; \ldots \;
    \shift^{i}_{m}(C_{i-1}) \; \shift^{i}_{m}(C_{i})^{m + 1} \;
    \shift^{i}_{m}(C_{i+1}) \; \ldots \; C_{n-1}
  \end{equation*}
  for some $m \geq 0$. The result follows now immediately from
  Lemma~\ref{lem:looping}.
\end{proof}

\section{Experiments}
\subsection{\ostrich}
We implemented the CEGAR loop and the parameterization techniques of Sections
\ref{sec:rings} and \ref{sec:crowds} in our tool \ostrich{}. \ostrich{} heavily
relies on MONA as a \ac{WS1S}-solver. The results of our experiments are
presented in Figure~\ref{fig:results}. In the first two columns the table
reports the topology and the name of the system to be verified. The array
topology is a linear topology where agents can refer existentially or
universally to agents with smaller or larger indices. Analogously to the other
topologies we derive a sound parameterization technique for traps, 1-BB sets,
and siphons. The rings are Dijsktra's token ring for mutual exclusion
\cite{FribourgO97} and a model of the dining philosophers in which philosophers
pick both forks simultaneously. For headed rings we consider
Example~\ref{ex:diningphil} and a model of a message passing leader election
algorithm. The array is Burns' mutual exclusion algorithm
\cite{DBLP:conf/tacas/JensenL98}. The crowds are Dijkstra's algorithm for
mutual exclusion \cite{CooperatingSequentialProcesses} and models of
cache-coherence protocols taken from \cite{Delzanno00}. Note that we check
inductiveness of the property; i.e., if it holds initially and there is no
marking satisfying the property and the current abstraction and reaching in a
single step a marking which violates the property. Additionally, we include in
the specification of the parameterized Petri net a partition of the places
$\places$ such that the places of every index in every instance form a 1BB-set.
Collectively, this ensures that all examples are 1-bounded and yields
invariants similar to (1), (2) for Example~\ref{ex:diningphil}. Since
\ostrich{} does not compute but only checks these invariants we do not count
them in Figure~\ref{fig:results} (leading to 3 semi-automatic invariants for
Example~\ref{ex:diningphil} since we omit (1), and (2)). Moreover, these
invariants already imply inductiveness of some safety properties; prominently
deadlock-freedom for all considered cache-coherence protocols.

The third column gives the time \ostrich{} needs to initialize the analysis;
this includes verifying that the given parameterized Petri net is covered by
1BB-sets, and that it indeed has the given topology. The fourth column gives
the property being checked. The specification of the cache coherence protocols
consists of a number of consistency properties, specific for each protocol. The
legend \enquote{consistency ($x$/$y$)} indicates that the specification
consists of $y$ properties, of which \ostrich{} was able to automatically prove
the inductiveness of $x$. Column~5 gives the time need to check the
inductiveness the property (or, in the case of the
cache-coherence protocols, either find a marking which satisfies all
constraints imposed by 1BB-sets, traps or siphons, or prove the inductiveness
of the properties together). Columns~6, 7, and 8 give the number of
WS1S-formulas, each corresponding to a parameterized 1BB-sets, trap, or siphon
that are computed by the CEGAR loop. Some of these WS1S-formulas have only one
model, i.e., they correspond to a single trap, siphon, or 1BB-set of one
instance. Such \enquote{artifacts} are needed when small instances
(e.g., arrays of size 2) require ad-hoc proofs that cannot be parameterized. In
these cases the \enquote{real} number of parametric invariants is the result of
subtracting the number of artifacts from the total number. The last column
reports the number of parameterized inductive invariants obtained by
the semi-automatic CEGAR loop. There the user is presented a series of counter
examples to the inductiveness of the property. The user can check for traps,
siphons or 1BB-sets to disprove the counter example. If the user then provides
an invariant which proves inductive it is used to refine the abstraction until
no further counter example can be found. The response time of \ostrich{} in
this setting is immediate which provides a nice user experience.
MOESI is an example which shows that the semi-automatic procedure can lead to
proofs with fewer invariants. For Dragon four of the seven invariants are
artifacts; thus, it also shows that a semi-automatic approach allows for proofs
with fewer invariants. The last step of the automatic procedure is to remove
invariants until no invariant can be removed without obtaining a counter
example again.

For Example~\ref{ex:diningphil} \ostrich{} automatically computes the following
family of 1BB-sets (additionally to the invariants (1) and (2)): (For
readability we omit some artifacts.)
\begin{equation*}
    \begin{aligned}
      &2 \leq \bm{n} \land \taken = \thinking = \emptyset \land \waiting =
      \eating = \set{0, 1} \land \free = \set{1}\\
      &3 \leq \bm{n} \land \taken = \waiting = \thinking = \emptyset
      \land \eating = \set{\bm{n}-1, 0} \land \free = \set{0}\\
      &4 \leq \bm{n} \land \taken = \thinking = \emptyset
      \land \free = \waiting = \set{\bm{n} - 1} \land \eating = \set{\bm{n}-2,
      \bm{n}-1}\\
      &2 \leq \bm{n} \land \exists \bm{i}: 1 < \bm{i} < \bm{n-2} \land
      \left(
      \begin{aligned}
        &\taken = \thinking = \emptyset \land \free = \waiting = \set{\bm{i}
        \oplusn 1}\\
        \land &\eating = \set{\bm{i}, \bm{i} \oplusn 1}
      \end{aligned}
      \right)
    \end{aligned}
\end{equation*}

\begin{figure}[h]
  \caption{Experimental results of \ostrich{}. The complete data is available
  at \cite{OstrichSubmissionURL}.}
  \label{fig:results}
  \begin{center}
  \scalebox{0.6}{
  \bgroup
  \setlength\tabcolsep{5pt}.
  \def\arraystretch{1.3}
    \begin{tabular}{c||l|r|lrrrrc}
       \hline
          \begin{tabular}{c}\multirow{2}{*}{Topology}\\\phantom{(ms)}\end{tabular}
        & \begin{tabular}{l}\multirow{2}{*}{Example}\\\phantom{(ms)}\end{tabular}
        & \begin{tabular}{r}Init.\\(ms)\end{tabular}
        & \begin{tabular}{c}\multirow{2}{*}{Property}\\\phantom{(ms)}\end{tabular}
        & \begin{tabular}{r}Check\\(ms)\end{tabular}
        & \begin{tabular}{r}\multirow{2}{*}{1BB-sets}\\\phantom{(ms)}\end{tabular}
        & \begin{tabular}{r}\multirow{2}{*}{Traps}\\\phantom{(ms)}\end{tabular}
        & \begin{tabular}{r}\multirow{2}{*}{Siphons}\\\phantom{(ms)}\end{tabular}
        & \begin{tabular}{c}Semi-automatic \\ invariants\end{tabular}\\\hline

        &
        &
        & deadlock
        & 40
        & 1 (1)
        & 0 (0)
        & 0 (0)
        &
        \\

        & \multirow{-2}{*}{Dijkstra ring}
        & \multirow{-2}{*}{558}
        & mutual exclusion
        & 125
        & 1 (1)
        & 1 (1)
        & 0 (0)
        & \multirow{-2}{*}{2}
        \\  \cline{2-9}
          \multirow{-3}{*}{ring}
        & atomic phil.
        & 409
        & deadlock
        & 79
        & 1 (1)
        & 0 (0)
        & 0 (0)
        & 4
        \\ \cline{1-9}

        & lefty phil.
        & 495
        & deadlock
        & 294
        & 7 (4)
        & 0 (0)
        & 0 (0)
        & 3
        \\\cline{2-9}
          \phantom{headed ring}
        & \phantom{leader election}
        & \phantom{695}
        & not $0$ and $n-1$ leader
        & 965
        & 1 (0)
        & 0 (0)
        & 2 (1)
        &
        \\
          \multirow{-3}{*}{headed ring}
        & \multirow{-2}{*}{leader election}
        & \multirow{-2}{*}{670}
        & not two leaders
        & --
        & --
        & --
        & --
        & \multirow{-2}{*}{1}
        \\\hline
          \phantom{array}
        & \phantom{Burns}
        & \phantom{508}
        & deadlock
        & 16
        & 0 (0)
        & 0 (0)
        & 0 (0)
        &
        \\
          \multirow{-2}{*}{array}
        & \multirow{-2}{*}{Burns}
        & \multirow{-2}{*}{501}
        & mutual exclusion
        & 379
        & 0 (0)
        & 8 (7)
        & 0 (0)
        & \multirow{-2}{*}{1}
        \\ \cline{1-9}

        &
        &
        & deadlock
        & 88
        & 2 (1)
        & 0 (0)
        & 0 (0)
        &
        \\

        & \multirow{-2}{*}{Dijkstra}
        & \multirow{-2}{*}{1830}
        & mutual exclusion
        & 1866
        & 0 (0)
        & 3 (1)
        & 0 (0)
        & \multirow{-2}{*}{3}
        \\ \cline{2-9}
          \phantom{crowd}
        & \phantom{Berkeley}
        & \phantom{414}
        & deadlock
        & 12
        & 0 (0)
        & 0 (0)
        & 0 (0)
        &
        \\

        & \multirow{-2}{*}{Berkeley}
        & \multirow{-2}{*}{414}
        & consistency (3/3)
        & 361
        & 0 (0)
        & 9 (1)
        & 0 (0)
        & \multirow{-2}{*}{1}
        \\ \cline{2-9}

        &
        &
        & deadlock
        & 19
        & 0 (0)
        & 0 (0)
        & 0 (0)
        &
        \\

        & \multirow{-2}{*}{Dragon}
        & \multirow{-2}{*}{538}
        & consistency (7/7)
        & 2334
        & 52 (7)
        & 0 (0)
        & 0 (0)
        & \multirow{-2}{*}{7}
        \\ \cline{2-9}

        &
        &
        & deadlock
        & 14
        & 0 (0)
        & 0 (0)
        & 0 (0)
        &
        \\

        & \multirow{-2}{*}{Firefly}
        & \multirow{-2}{*}{511}
        & consistency (0/4)
        & 232
        & 0 (0)
        & 2 (0)
        & 0 (0)
        & \multirow{-2}{*}{2}
        \\ \cline{2-9}

        &
        &
        & deadlock
        & 13
        & 0 (0)
        & 0 (0)
        & 0 (0)
        &
        \\

        & \multirow{-2}{*}{Illinois}
        & \multirow{-2}{*}{468}
        & consistency (0/2)
        & 180
        & 0 (0)
        & 3 (0)
        & 0 (0)
        & \multirow{-2}{*}{1}
        \\ \cline{2-9}

        &
        &
        & deadlock
        & 12
        & 0 (0)
        & 0 (0)
        & 0 (0)
        &
        \\
          \phantom{crowd}
        & \multirow{-2}{*}{MESI}
        & \multirow{-2}{*}{422}
        & consistency (2/2)
        & 500
        & 0 (0)
        & 13 (2)
        & 0 (0)
        & \multirow{-2}{*}{1}
        \\ \cline{2-9}

        &
        &
        & deadlock
        & 13
        & 0 (0)
        & 0 (0)
        & 0 (0)
        &
        \\

        & \multirow{-2}{*}{MOESI}
        & \multirow{-2}{*}{446}
        & consistency (7/7)
        & 1226
        & 0 (0)
        & 24 (4)
        & 0 (0)
        & \multirow{-2}{*}{1}
        \\ \cline{2-9}

        &
        &
        & deadlock
        & 12
        & 0 (0)
        & 0 (0)
        & 0 (0)
        &
        \\
          \multirow{-16}{*}{crowd}
        & \multirow{-2}{*}{Synapse}
        & \multirow{-2}{*}{420}
        & consistency (2/2)
        & 22
        & 0 (0)
        & 0 (0)
        & 0 (0)
        & \multirow{-2}{*}{0}
        \\ \cline{1-9}
    \end{tabular}
    \egroup
  }
  \end{center}
\end{figure}

\subsection{\heron}
We implemented the approach described in Section~\ref{sec:gppn} in our tool
\heron{} \cite{heron-gitlab,heron-artifact}. An illustration of the general
concept can be found in Figure~\ref{fig:heron}.
\begin{figure}
  \caption{An illustration of the designs of \ostrich{} (upper diagram) and
  \heron{} (lower diagram). For both the input is a (generalized) parameterized
  Petri net $\PN$ and a property $\Safe$ that we want to check. For
  \ostrich{} we can express $\WitTrap$ in WS1S, while \heron{}
  specifically relies on an appropriate embedding into SAT. Also,  \ostrich{} uses MONA to check $\Safety$ or get a
  counter-example, while \heron{} uses first-order provers. Further, \heron{} uses timeouts for its
  provers; e.g., \texttt{VAMPIRE} or \texttt{CVC4}. The
  experimental data suggests that these timeouts can be chosen small; in particular,
  $2-3$ seconds is appropriate for our mutual exclusion benchmarks.}
%
  \label{fig:heron}
  \begin{center}
    \scalebox{0.8}{
\begin{tikzpicture}[font=\small]
  \node (spec) {$\tuple{\PN, \Safe}$};

  \node [below=of spec] (safetyCheck) {
    $\Safety_{\trapset}$};

  \draw [->] (spec) to node [right] {$\trapset = \emptyset$} (safetyCheck);

  \node [right=1 of safetyCheck] (successfulReturn) {
    return \enquote{$\Safe$ holds}
  };

  \draw [->] (safetyCheck) to node [above] {satisfiable} (successfulReturn);

  \node [anchor=center,left=7 of safetyCheck.center, text width=3.5cm] (instance) {
    Is the unsafe marking $\Mv$ in $\PN(n)$ reachable?
  };

  \draw [->] (safetyCheck) to node [above] {
    counter-example: $\tuple{n, \Mv}$
  } (instance);

  \node [below=of instance] (instanceDismiss) {
    $\WitTrap_\interp{M}(\bm{n}, \Xv)$
  };

  \node [below=of instanceDismiss] (failure) {
    return \enquote{unsure}
  };

  \draw [->] (instance) to (instanceDismiss);

  \node (abduct) at (instanceDismiss-|safetyCheck) {
    Abduct: $\Param_\interp{T}(\bm{n}, \Xv)$
  };

  \draw [->] (instanceDismiss) to node [above] {model: $\interp{T}$} (abduct);
  \draw [->] (abduct) to node [right] {add to $\trapset$} (safetyCheck);
  \draw [->] (instanceDismiss) to node [left] {invalid} (failure);
\end{tikzpicture}
}
  \end{center}
  \begin{center}
  \scalebox{0.8}{
\begin{tikzpicture}[font=\small]
  \node (spec) {$\tuple{\PN, \Safe}$};

  \node [below=of spec] (safetyCheck) {
    $\Safety_{\trapset}$};

  \draw [->] (spec) to node [right] {$\trapset = \emptyset$} (safetyCheck);

  \node [right=1 of safetyCheck] (successfulReturn) {
    return \enquote{$\Safe$ holds}
  };

  \draw [->] (safetyCheck) to node [above] {valid} (successfulReturn);

  \node [anchor=center,left=7 of safetyCheck.center, text width=3.5cm] (instance) {
    Is an unsafe marking in $\PN(n)$ reachable?
  };

  \draw [->] (safetyCheck) to node [above] {
    timeout: increase $n$
  } (instance);

  \node [below=of instance] (instanceDismiss) {
    $\WitTrap_\interp{M}(\bm{n}, \Xv, \Uv)$
  };

  \node [below=of instanceDismiss] (failure) {
    return \enquote{unsure}
  };

  \draw [->] (instance) to (instanceDismiss);

  \node (abduct) at (instanceDismiss-|safetyCheck) {
    Abduct: $\Param_\interp{T}(\bm{n}, \Xv, \Uv)$
  };

  \draw [->] (instanceDismiss) to node [above] {model: $\interp{T}$} (abduct);
  \draw [->] (abduct) to node [right] {add to $\trapset$} (safetyCheck);
  \draw [->] (instanceDismiss) to node [left] {invalid} (failure);
\end{tikzpicture}
}
  \end{center}
\end{figure}
\heron{} is written in Python. It uses \texttt{clingo}
\cite{DBLP:journals/aicom/GebserKKOSS11} as SAT solver. To solve the
first-order queries \heron{} uses \texttt{VAMPIRE} \cite{DBLP:conf/cav/KovacsV13}
and \texttt{CVC4} \cite{DBLP:conf/cav/BarrettCDHJKRT11}. As benchmarks we
consider classical algorithms for mutual exclusion. These include a reduced
version of Dijkstra's algorithm for mutual exclusion
\cite{CooperatingSequentialProcesses}, which we presented as
Example~\ref{ex:dijkstra} above, a more precise formalization of Dijkstra's
algorithm, an algorithm by Knuth \cite{DBLP:journals/cacm/Knuth66}, one by de
Bruijn \cite{DBLP:journals/cacm/Bruijn67}, and one by Eisenberg \& McGuire
\cite{DBLP:journals/cacm/EisenbergM72}. Additionally, we have modeled
Szymanski's algorithm for mutual exclusion \cite{Szymanski90} as well. For all
these algorithms we consider the property that they indeed provide mutual
exclusion of processes in the critical section. For most of these algorithms we
need to expand the topology of inspection programs in various ways. However,
all these expansions maintain that every transition involves at most a finite
amount of indices and that every re-ordering of these indices also yield a
transition in an instance. Inspecting the proof of Lemma~\ref{lem:looping} one
observes that these are the crucial observations for the stated result.
Consequently, Theorem~\ref{thm:looping-gen} generalizes well to all these
expansions. We present positive results for all these examples but for the
algorithm of Szymanski. The table in Figure~\ref{fig:heron-results} reports
data on the positive results. The first column shows which algorithm we
prove. The second column states how many seconds \heron{}
needs to compute the positive result. The third column reports the maximal $n$
for which $\PN(n)$ is instantiated during this computation. In the fourth
column we give the amount of traps we computed during this computation, and in
the fifth column how many abducted trap families we used. The sixth column
gives the maximal amount of looping indices that occur in traps for this
example and, finally, we give the longest time it took for a successful query
to the prover.

For Szymanski's algorithm for mutual exclusion our algorithm fails. This is
because already instances with $n = 2$ do not allow to prove the mutual
exclusion property via traps, 1BB-sets, and siphons. In fact, Szymanski's
algorithm posed already a negative result for the approach of
\cite{EsparzaLMMN14}. This means that instances of Szymanski's algorithm are
out of reach -- even when one additionally uses the marking equation for Petri
nets to over-approximate reachable markings.

\begin{figure}
  \caption{Experimental results for \heron{}.}
  \label{fig:heron-results}
  \begin{center}
    \scalebox{0.8}{
      \begin{tabular}{l||cccccc}
        Algorithm & time (s) & max. N & \# traps & \# abducted traps
        & \begin{tabular}{c}max.\\ \# indices\end{tabular}
        & \begin{tabular}{c}max. proving \\ time (s)\end{tabular}
          \\\hline
        Example~\ref{ex:dijkstra} & 11 & 8 & 36 & 2 & 2 & 1 \\
        Dijkstra's & 22 & 8 & 75 & 5 & 2 & 1 \\
        Knuth's & 194 & 8 & 160 & 7 & 2 & 1 \\
        de Bruijn's & 76 & 8 & 164 & 6 & 2 & 1 \\
        Eisenberg \& McGuire's & 1055 & 9 & 126 & 6 & 2 & 1\\
      \end{tabular}}
  \end{center}
\end{figure}

The data suggests that \heron{}, as \ostrich{}, synthesizes only a small amount
of necessary invariants. Moreover, these invariants are \enquote{local}; that
is, they involve at most $2$ looping indices. Consequently, the proofs that
\heron{} constructs are readable and concise. The drawback, however, is the
significant amount of time we need to construct and verify these proofs.
Surprisingly, the queries to the theorem prover, once all the necessary
invariants are synthesized, are actually very fast;  most
time is spent on instantiating and proving finite instances of $\PN$.

\section{Conclusion.}
We have refined the approach to parameterized verification of systems with
regular architectures presented in \cite{BozgaEISW20}. Instead of encoding the
complete verification question into large, monolithic WS1S-formula, our
approach introduces a CEGAR loop which also outputs an explanation of why the
property holds in the form of a typically small set of parameterized invariants
(see Example~\ref{ex:diningphil}). The explanation helps to uncover false
positives, where the verification succeeds only because the system or the
specification are incorrectly encoded in WS1S. It has also helped to find a
subtle bug in the implementation of \cite{BozgaEISW20} which hid unnoticed in
the complexity of the monolithic formula.  Additionally, our incremental
approach requires to check smaller WS1S-formulas, which often decreases the
verification time (cp.~the verification of Dijkstra's mutual exclusion
algorithm \cite{BozgaEISW20} in 10s to currently 2s).

On the other hand, seeing the abstraction helps one understand the analyzed
system. For example, we include in \cite{OstrichSubmissionURL} a leader
election algorithm for which the parameterization techniques of \ostrich{} are
too coarse to establish the general safety property of having always at most
one leader. However, \ostrich{} succeeds to prove the special case that not
agents $0$ and $n-1$ can become leader at the same time. For this proof
\ostrich{} finds a family of siphons which hint to a general inductive
invariant of the system. Using the semi-automatic mode of \ostrich{} we can
then verify this inductive invariant and, as a result of this, the general
safety property.

We wanted to expand our methodology to models that represent actual
implementations of distributed algorithms more accurately. Therefore, we
discussed an expansion of the original approach that allows to model non-atomic
global checks. Although one forfeits the decidability of the logical embedding
-- a corner stone of the CEGAR loop in the original model -- we can adapt our
approach to capture these expanded models as well. The resulting algorithm
solves a set of non-trivial examples. Moreover, it maintains the desirable
property of synthesizing concise and readable invariants for the considered
examples.

\paragraph{Future work.}
Parameterized Petri nets rely on \ac{WS1S} to specify their transitions and their initial
configurations, and it is well known that the languages expressible in \ac{WS1S} are exactly the 
regular languages. This suggests that our techniques might be extended to 
the regular systems analyzed in regular model-checking \cite{Bouajjani00}. In this approach a finite automaton describes the language of initial configurations, and a length-preserving transducer describes the possible
transitions. We think our techniques can be used to algorithmically compute a regular over-approximation of the
set of reachable configurations.

The \heron{} tool checks reachability in 1-safe nets by means of an
incomplete method that tests if a marking satisfies all constraints induced
by the traps, siphons, and 1-BB sets of the net. This is closely related to the approach of \cite{EsparzaLMMN14}, which relies on traps, siphons, and the marking equation.
Replacing the marking equation by 1-BB sets leads to a less precise test, but one that can be completely implemented
on top of a SAT-solver and can be generalized to the parameterized case. We plan to study if the benchmarks of 
\cite{EsparzaLMMN14} can already be successfully verified using traps, siphons, and 1-BB sets, or a suitable generalization thereof. 

Our method is currently restricted to looping programs. We think that it can be extended to programs with nested loops.
We also plan to study stronger invariants allowing us to verify Szymanski's algorithm for mutual
exclusion, for which our technique is not yet strong enough.

\paragraph{Data Availability Statement and Acknowledgements.}
\small{
  This work has received funding from the European Research Council(ERC) under
  the European Union's Horizon 2020 research and innovation programme under
  grant agreement No~787367 (PaVeS).

\noindent The tool \ostrich{} and associated files are available at
\cite{OstrichSubmissionURL}. The current
version is maintained at \cite{OstrichRepositoryURL}. The tool \heron{} and
associated files are available at \cite{heron-artifact}. The current version is
maintained at \cite{heron-gitlab}.

\noindent This is an expanded version of \cite{EsparzaRW21}; that is,
Section~\ref{sec:gppn} was added. It also relies on results of \cite{ERW21b}.
We thank the anonymous reviewers of the original versions and this submission
for their helpful comments.}

\bibliographystyle{fundam}
\bibliography{refs}

\appendix
\section{Constructing FO($\Safety_{\trapset}$) for looping programs.}
\label{appendix:FO}

\paragraph{An embedding of a linear order into FO.}
First, we want to capture the linear topology of our agents in an \ac{FO}
theory. To this end, we gradually introduce an appropriate \ac{FO} theory in
the following. Initially, consider a relation symbol $\leq$. It is
straightforward to give a sentence $\varphi_{\leq}$ that ensures that $\leq$ is
a discrete linear order with a minimal element. Then, we introduce two constant
symbols, $0$ and $N$. We make sure that $0$ is the minimal element w.r.t.
$\leq$; that is, we add $\lnot \exists \bm{x} ~.~ \bm{x} \neq 0 \land \bm{x}
\leq 0$ to our theory. Furthermore, it is standard to obtain the immediate
successor of some element $x$ w.r.t. $\leq$. To ease presentation, we use a
function symbol $\logicnext$ instead (which is consistent with the successor
function in \ac{WS1S}). In the following, we allow for constant symbols $1, 2,
3, \ldots$ in \ac{FO} formulas. For this, we use the convention that the
constant symbol $i$ corresponds to the value that we obtain when applying the
function $\logicnext$ exactly $i$ times to $0$. Moreover, we model addition of
a constant value to some variable similarly by applying $\logicnext$
appropriately often to the variable symbol. This total order gives us now
access to the linear identities of the agents.

\paragraph{Representing configurations in FO.}
As before, we want to capture the current configuration as models of a formula.
Before we describe how we do this, we inspect the considered topology in more
detail. In this way we can identify invariants which allow us to simplify the
embedding into \ac{FO}:
\begin{itemize}
  \item Note that every agent maintains a local copy of variables, each of
    which has a finite domain: as we described before, the set $\places$
    partitions into $\places_{0}, \ldots, \places_{k}$ such that $\set{i}
    \times \places_{j}$ for any $i \in [n]$ and $0 \leq j \leq k$ is a 1-BB set
    in every instance $\PN(n)$.
  \item Additionally, one can also deduce that $\set{\tuple{r, i, j} \mid j \in
    \left( [n] \cup \set{\bot} \right)}$ is a 1BB-set for each $i \in [n]$ and
    $r \in \relations$ in every instance $\PN(n)$ of a looping program $\PN$.
  \item Moreover, we are assured that every relation symbol $\loopingRelation
    \in \relations$ is tied to exactly one loop transition pattern. Similarly,
    there is a unique $\loopingState$ for this transition pattern. By close
    inspection of the semantics of loop transition patterns, it is immediate
    that $\tuple{\loopingRelation, i, \bot}$ is marked if and only if
    $\tuple{\loopingState, i}$ is not marked for all $i \in [n]$ for all
    $\PN(n)$.
\end{itemize}
Hence, we can restrict our analysis to markings that satisfy these constraints.
Consequently, we call markings of $\PN(n)$ \emph{viable} if $\sum_{p \in
\places_{j} \times \set{i}}M(p) = 1$ for all $i \in [n]$ and $0 \leq j \leq k$,
$\sum_{p \in \set{\loopingRelation} \times \set{i} \times \left( [n] \cup
\set{\bot} \right)}M(p) = 1$ for all $i \in [n]$ and $\loopingRelation \in
\relations$, and $M(\tuple{\loopingState, i}) + M(\tuple{\loopingRelation, i,
\bot}) = 1$ for any $\loopingState$ and $\loopingRelation$ that occur in the
same loop transition pattern. From now on, we refer to
$\loopingState^{\loopingRelation}$ for the uniquely identified state value that
occurs with $\loopingRelation$ in a loop transition pattern. Similarly, we use
$\loopingRelation^{\loopingState}$.

Before we used monadic variables $\Xv_{p}$ for each $p \in \places$ which
capture which places of $\places \times [n]$ are marked in the considered
instance. But by restricting our analysis to viable markings we can express the
current value of this variable as a function symbol instead. For this, fix some
$0 \leq j \leq k$ and let $\val_{1}, \ldots, \val_{m}$ be an enumeration of
$\places_{j}$. We introduce now a function symbol $\var_{j}$ and constant
symbols $\val_{1}, \ldots, \val_{m}$. Then, we can express with
$\var_{j}(\bm{x}) = \val_{\ell}$ that the agent with index $\bm{x}$ currently
sets its $j$-th variable to the value $\val_{\ell}$. It is straightforward to
restrict the domain of $\var_{j}$ only to these constant symbols for all
agents: $\forall \bm{x} ~.~ \bigvee_{1 \leq \ell \leq m} \var_{j}(\bm{x}) =
\val_{\ell}$. In this way we translate the topological restriction of viable
markings implicitly to our representation: we use a function symbol $\var_{j}$
instead of $\ell$ many monadic variables. In the following, we refer to
$\var_{\val}$ to $\var_{j}$ such that $\val \in \places_{j}$.

Similarly, we use that every agent executes at most one loop transition pattern
at a time by representing \emph{all} relations simultaneously by \emph{one
single} function symbol $f$: we need $f$ to map from agents to agents or some
value that represents $\bot$. Therefore, we add $\forall \bm{x} ~.~ f(\bm{x})
\leq N$ to the restricting theory; here $N$ will be used to represent $\bot$.

Consider now any viable marking $M$ in some instance $\PN(n)$. Then, $M$
induces a model $\mu$. Namely, we set the universe of $\mu$ to $\set{0, 1, 2,
3, \ldots}$, $\mu(0) = 0$, $\mu(N) = n$, and $\mu(\leq)$ to be the natural
order. For every $j$ we choose some arbitrary enumeration $\val_{1}, \ldots,
\val_{m}$ of $\places_{j}$ and set $\mu(\val_{k}) = k$ for $1 \leq k \leq m$.
Moreover, we set $\mu(\var_{j})$ to any function such that $\mu(\var_{j})(i) =
\val_{k}$ if and only if $M(\tuple{\val_{k}, i}) = 1$. The last definition uses
that $M$ is viable since there is exactly one such tuple for every $i \in [n]$.
Since $M$ is viable there is at most one $\loopingRelation$ for every $i \in
[n]$ such that $M(\tuple{\loopingRelation, i, j}) = 1$ for any $j \in [n]$. If
this is the case, let $\mu(f)(i) = j$. Otherwise, set $\mu(f)(i) = n$. In this
way, we obtain an interpretation $\mu$ for every viable marking $M$.

\paragraph{Traps in FO.}
The general idea of our approach is to obtain traps via
Theorem~\ref{thm:looping-gen}. Then, we use the induced invariants of these
traps to refine the \ac{FO} theory of interpretations which we consider. More
precisely, traps induce an abstraction of all reachable markings. We need to
restrict our theory in such a way that it still contains an interpretation that
represents any viable marking that satisfies the constraints of all found
traps. To this end, let us introduce an \ac{FO} formula which coincides with
the invariant the models of $\Param_\interp{T}(\bm{n}, \Xv, \Uv)$ from
Theorem~\ref{thm:looping-gen} induce:
\begin{equation*}
    \scalebox{0.75}{\mbox{\ensuremath{
  \begin{aligned}
    &\left(n \leq \bm{n} \; \land \; \exists \bm{y} ~.~ \bm{y} + (n - (i + 2))
    = \bm{n}\right) \land i+1 \leq \bm{y} \\
    \land &\bigvee_{\substack{j < i-1\\ p \in Q[j]}} \var_{p}(j) = p \\
    \lor &\exists \bm{j} ~.~ i-1 \leq \bm{j} \leq \bm{y} \land
    \bigvee_{p \in Q[i]} \var_{p}(\bm{j}) = p\\
    \lor &\bigvee_{\substack{i + 1 < j < n\\ p \in Q[j]}} \var_{p}(\bm{y} + (j
    - (i + 2))) = p\\
    \lor &\exists \bm{\ell}_{1}, \ldots, \bm{\ell}_{k}, \bm{u}_{1}, \ldots,
    \bm{u}_{m} ~.~ \bigwedge_{1 \leq v \leq k} \bm{\ell}_{v} = \ell_{v} \land
    \bigwedge_{1 \leq v \leq m} \bm{u}_{v} = \bm{y} + (u_{v} - (i + 2)) \\
    &\land \left(
    \begin{aligned}
      &\bigvee_{\loopingRelation \in \relations}\bigvee_{j < i-1}
      \left[
      \begin{aligned}
        &\bigvee_{\substack{1 \leq v \leq k \\ r(\ell_{v}) \in R[*, j]}}
        \var_{\loopingState^{\loopingRelation}}(\bm{\ell}_{v}) =
        \loopingState^{\loopingRelation} \land f(\bm{\ell}_{v}) = j\\
        \lor &\bigvee_{\substack{1 \leq v \leq m \\ r(u_{v}) \in R[*, i]}}
        \var_{\loopingState^{\loopingRelation}}(\bm{u}_{v}) =
        \loopingState^{\loopingRelation} \land f(\bm{u}_{v}) = j
      \end{aligned}
      \right] \\
      \lor & \exists \bm{j} ~.~ i-1 \leq \bm{j} \leq \bm{y} \land
      \bigvee_{\loopingRelation \in \relations}
      \left[
        \begin{aligned}
          &\bigvee_{\substack{1 \leq v \leq k \\ r(\ell_{v}) \in R[*, i]}}
          \var_{\loopingState^{\loopingRelation}}(\bm{\ell}_{v}) =
          \loopingState^{\loopingRelation}
          \land f(\bm{\ell}_{v}) = \bm{j} \\
          \lor & \bigvee_{\loopingRelation \in \relations}
          \bigvee_{\substack{1 \leq v \leq m \\ r(u_{v}) \in R[*, i]}}
          \var_{\loopingState^{\loopingRelation}}(\bm{u}_{v}) =
          \loopingState^{\loopingRelation} \land f(\bm{u}_{v}) = \bm{j}
        \end{aligned}
      \right]\\
      \lor &\bigvee_{\loopingRelation \in \relations}\bigvee_{i + 0 < j}
      \left[
      \begin{aligned}
        &\bigvee_{\substack{1 \leq v \leq k \\ r(\ell_{v}) \in R[*, j]}}
        \var_{\loopingState^{\loopingRelation}}(\bm{\ell}_{v}) =
        \loopingState^{\loopingRelation} \land f(\bm{\ell}_{v}) = \bm{y} + (j -
        (i + 2))\\
        \lor &\bigvee_{\substack{1 \leq v \leq m \\ r(u_{v}) \in R[*, j]}}
        \var_{\loopingState^{\loopingRelation}}(\bm{u}_{v}) =
        \loopingState^{\loopingRelation} \land
        f(\bm{u}_{v}) = \bm{y} + (j - (i + 2))\\
      \end{aligned}
      \right] \\
      \end{aligned}\right)
  \end{aligned}}}}
\end{equation*}
Note the conversion of relationset variables to the logical representation of a
single function symbol. This conversion is driven by the observation that
viable markings $M$ ensure that $M(\tuple{\loopingState, i}) +
M(\tuple{\loopingRelation, i, \bot}) = 1$ for $\loopingState$ and
$\loopingRelation$ occurring in the same loop transition pattern. To this end,
we use that $M(\tuple{\loopingState, i}) = 1$ for some $\loopingState$
necessarily implies $M(\tuple{\loopingState', i}) = 0$ for all other states
attached to some loop transition pattern since the state values form a
1BB-set for every agent. This, in turn, ensures that
$M(\tuple{\loopingRelation^{\loopingState'}, i, \bot}) = 1$ and, by the
appropriate 1BB-cover, $M(\tuple{\loopingRelation^{\loopingState'}, i, j}) = 0$
for all $j \in [n]$.

\end{document}